\newtheorem{thm}{Theorem}[section]
 \newtheorem{dfn}[thm]{Definition}
 \newtheorem{lem}[thm]{Lemma}
 \newtheorem{rem}[thm]{Remark}
\newtheorem{cor}[thm]{Corollary}
\newtheorem{asm}[thm]{Assumption}
\newtheorem{exm}[thm]{Example}
\numberwithin{equation}{section}
\newcommand{\cA}{\mathcal{A}}
\newcommand{\cF}{\mathcal{F}}
\newcommand{\cG}{\mathcal{G}}
\newcommand{\cH}{\mathcal{H}}
\newcommand{\Hc}{\mathcal{H}}
\newcommand{\cI}{\mathcal{I}}
\newcommand{\cL}{\mathcal{L}}
\newcommand{\Lc}{\mathcal{L}}
\newcommand{\cM}{\mathcal{M}}
 \newcommand{\cP}{\mathcal{P}}
\newcommand{\cV}{\mathcal{V}}
\newcommand{\cZ}{\mathcal{Z}}
\def \D{\mathbb{D}}
\def \E{\mathbb{E}}
\def \F{\mathbb{F}}
\def \L{\mathbb{L}}
 \def \M{\mathbb{M}}
 \def \N{\mathbb{N}}
 \def\P{\mathbb{P}}
\def \Q{\mathbb{Q}}
\def \R{\mathbb{R}}
\def \Sb{\mathbb {S}}
\def \hS{\hat{\mathbb {S}}}
\def\cad{$c\acute{a}dl\acute{a}g$\ }
\def\ML{ \M_{\Lc}}
\def \hd{\hat{\D}}
\def \d{\D}
\def\reff#1{{\rm(\ref{#1})}}
\begin{document}
\title{Martingale Optimal Transport \\
in the  Skorokhod Space }

 \thanks{Research of Dolinsky is partly supported by a
 European Union Career Integration grant CIG-618235 and research of Soner
 is partly supported
 by the ETH Foundation, the Swiss Finance Institute
 and a Swiss National Foundation grant SNF 200021$\_$153555.
 The authors thank Professors Marcel Nutz, Xiaolu Tan and Nizar Touzi
 for insightful discussions and comments.
 Part of this work was completed during the visits of the authors
 to the National University of Singapore, NUS.
 The authors would like
 to thank Professors Min Dai, Steven Kou, the Department of Mathematics
 and the Institute for Mathematical Sciences of NUS
 for their hospitality.}
 \author{Yan Dolinsky \address{
 Department of Statistics, Hebrew University of Jerusalem, Israel.
 \hspace{10pt}
 {e.mail: yan.dolinsky@mail.huji.ac.il}}}
\author{H.Mete  Soner \address{
 Department of Mathematics, ETH Zurich \&
 Swiss Finance Institute. \hspace{10pt}
 {e.mail: hmsoner@ethz.ch}}\\
  ${}$\\
 Hebrew University of Jerusalem
 and  ETH Zurich}

\date{\today}
\begin{abstract}
The dual representation of the
martingale optimal transport
problem in the Skorokhod space
of multi dimensional $c\acute{a}dl\acute{a}g$ processes
is proved.
The dual is a minimisation
problem with constraints involving
stochastic integrals and is similar to the Kantorovich dual
of the standard optimal transport problem.  The constraints
are required to hold for very path in the Skorokhod space.
This problem has the financial
interpretation as the robust
hedging of path dependent European options.
\end{abstract}

\subjclass[2010]{91G10, 60G44}

\maketitle \markboth{Y.Dolinsky and H.M.Soner}
{Martingale Optimal Transport}
\renewcommand{\theequation}{\arabic{section}.\arabic{equation}}
\pagenumbering{arabic}

\section{Introduction}\label{sec:1}\setcounter{equation}{0}

Model independent approach to financial
markets provides hedges without referring
to a particular probabilistic structure.
It is also shown to be closely connected
to the classical Monge-Kantorovich
optimal transportation problem.
In this paper, we prove this connection
for quite general financial markets
that offer multi risky assets
with \cad(right continuous with left hand limits)
trajectories. This generality
is strongly motivated by the fact that investors
use several assets in their portfolios and the observed stock price processes
contain jump components \cite{AJ,AJ1}.
The main result is a Kantorovich type duality for the super-replication cost
of an exotic option $G$, which is simply a nonlinear function of the
whole stock trajectory.  It is well documented that
this duality is central to understanding the financial markets.
In particular, several other important
results including the fundamental theorem
of asset pricing follow from it.

As it is standard in these problems, following \cite{H}
we assume that a linear set of options $\cH$ is available for
static investment with a known price $\cL(h)$ for $h \in \cH$.
In addition to this static investment, the investor
can dynamically use stocks in her portfolio.
Let an admissible  predictable process
$\gamma$ represent this dynamic position in the stock
whose price process is denoted by $\Sb$
with values in the positive orthant $\R_+^d$.
An investment strategy $(h,\gamma)$ super-replicates
a exotic option if its final value at maturity $T$ dominates
$G$ in all possible cases, i.e.,
\begin{equation}
\label{e.super}
h(\Sb) + \int_0^T \gamma_u(\Sb)\ d\Sb_u \ge G(\Sb),
\quad \forall\ \Sb \in \D,
\end{equation}
where $ \D$ is the set of all
stock process $\Sb$ that are $c\acute{a}dl\acute{a}g$,
$\Sb_0=(1,\ldots,1)$ and continuous at maturity $T$.
Technical issues related to
the stochastic integral and admissible
strategies are discussed in Section \ref{sec2},
Definition \ref{d.admissible}.
 The minimal super-replicating cost is
then given by
\begin{eqnarray*}
V(G)&:=& \inf\{\ \Lc(h)\ :\  {\mbox{there exists
an admissible predictable process}}\ \gamma\\
&&\hspace{60pt}
{\mbox{so that}}\ (h,\gamma)\
 {\mbox{super-replicates}}\ G\ \}.
\end{eqnarray*}

As usual, the dual elements are martingale measures
$\Q$ that are consistent with the given option data.  Namely,
let $\M_{\Lc}$ be the set of all measures on $\D$
so that the canonical process $\Sb$ is a martingale
with the canonical filtration $\F$ and
$$
\E_\Q[ h] \le \Lc(h), \quad
h \in \Hc.
$$
We then have the following duality result,
\begin{equation}
\label{e.dual}
V(G)= \sup_{\Q\in \ML}\ \E_\Q[G].
\end{equation}
The above result is proved in Theorem \ref{t.main}
for $G$ that is uniformly
continuous in the Skorokhod topology and
satisfies a certain growth condition.  
In Theorem \ref{t.main}
we assume that 
\begin{equation}
\label{1.new}
|G(\Sb)|\leq C(1+|\Sb_T|)
\end{equation}
for some constant $C>0$.
Then, we relax this condition to \reff{e.growth}
in the last section.

In this paper,
we study two classes of pairs
$(\cH,\cL)$.  Namely, one and many marginal
cases.  In the first one,
this pair is defined through a  
a given probability measure $\mu$.
Then, we take
$\Hc$ to be the set of
all functions of the type
$g(\Sb_T)$ with $g \in \L^1(\R_+^d,\mu)$
and  set $\Lc(g)=\int g d\mu$.  The only assumption
on $\mu$ is that $\int x d\mu(x)= \Sb_0=(1,\ldots,1)$.

In the initial sections and in Theorem \ref{t.main},
 we prove the duality for the single
marginal case.  Then,  
in Section \ref{s.extend},  we both relax 
the growth assumption
on $G$ and consider the multi-marginal
problem. Namely, we fix a partition
$0<T_1<T_2<...<T_N=T$ and a probability measures
$\mu_1\preceq\mu_2\preceq...\preceq \mu_N$ on $R_+^d$,
where $\preceq$ denotes convex order on probability measures, i.e.,
$$
\mu \preceq \nu \quad
\Leftrightarrow
\quad
\int \Phi d\mu \le \int \Phi d\nu, \qquad \forall\
\Phi\ {\mbox{convex, integrable}}.
$$
We extend the duality result (\ref{e.dual}) for the case where $\Hc$ is the set of
all functions of the type
$\sum_{i=1}^N g_i(\Sb_{T_i})$ with $g_i \in \L^1(\R_+^d,\mu_i)$.
For this extension however,
we need to assume that
$\int |x|^p d\mu_N(x)<\infty$ for some $p>1$.
In particular, we assume that a power option is an element in the space of
static positions $\cH$.

Our approach, as in \cite{DS,DS1}, relies on a
discretization procedure.  We then use a classical min-max
theorem for the discrete approximation
and a classical constrained duality result of
F\"ollmer and Kramkov \cite{FK}.  The technical steps
are to prove that the approximations on both side
of the dual formula converge.  The multi-dimensionality
and the discontinuous behavior of the stock process
 introduce several technical difficulties.
 In particular, we introduce appropriate portfolio
constraints in the approximate discrete
markets.  This new feature of the discretization
is essential and enables us to control the error terms
due to the multi-dimensionality and the possible
discontinuities of the stock process.

Another technical difficulty originates from
the fact the set of martingale measures $\M_{\Lc}$
is not compact.  Therefore, passage to the limit
in the dual side requires probabilistic constructions.
In particular, we prove that the dual problem
as seen as a function of the probability measure
$\mu$ (with fixed $G$) has some continuity
properties.  This is proved in Section \ref{s.continuity},
Theorem \ref{t.continuity}.

For the multi-marginal case, it is not clear how to prove
Theorem \ref{t.continuity} via a probabilistic construction.
Instead, we use an additional idea in the discretization procedure.
This idea is based on a penalisation technique and 
requires that the linear space $\cH$ contains a power option.

The structure studied in this paper is
similar to that of \cite{H} and also of
\cite{BHR,CL,CO,CO1,DH,DOR,DS,GLT,H2,H3,H4,H5,H6,OHST}.
We refer the reader to the excellent survey of Hobson \cite{H1}
and to our previous papers \cite{DS,DS1} and to the references therein.
A related issue is the fundamental theorem of asset pricing (FTAP)
in these markets.  This problem in the robust setting
in discrete time is studied
in \cite{ABS}, \cite{BN} and \cite{DS1}.  \cite{ABS}  proves FTAP
in the model independent framework with a general $\cH$
containing a power option.  \cite{BN}
 considers a discrete time market
in which a set of probability measures $\cP$
is assumed.  The super replication is
defined by demanding \reff{e.super} not for every path $\Sb$
but $\P$ almost surely for every $\P \in \cP$ (i.e.,
$\cP$-quasi-surely). FTAP and duality
(under the assumption of no-arbitrage) is proved
 for a finite
dimensional $\cH$ but possibly with no
power option.
The notions of no-arbitrage considered in \cite{ABS}
and \cite{BN} are different.  In our earlier work \cite{DS1}
we  prove model-independent duality for a
discrete time market with proportional costs.
FTAP follows as a consequence of
the duality.  However, the form of FTAP depends on the
particular notion of no-arbitrage.
A discussion of different notions is also provided in \cite{DS1}.
In continuous time,
the desirable extension to the general quasi-sure setting
remains open with the exception of \cite{GLT}
in which a certain class $\cP$ is considered.

The paper is organised as follows. The main results are formulated in the next
section. In Section 3, Theorem \ref{t.main}
is proved.  In section 4, we prove a continuity
result for the dependence of the dual problem on
the measure $\mu$.
The final section, is devoted to extensions.
\vspace{10pt}

\noindent
{\bf{Notation.}}  We close this introduction with a list of some of the
notation used in this paper.
\begin{itemize}
\item $\R_+:= (0,\infty)$ is the set of all positive real numbers.
\item  $\N:= \{1,2,\ldots \}$ is the set of positive integers.
\item $\D$ is the set of all
$\R_+^d$ valued  $c\acute{a}dl\acute{a}g$
processes $\Sb$ that are continuous at $t=T$
 and also satisfy $\Sb_0=(1,\ldots,1)$;  Section \ref{sec2}.
 \item The similar set $\D([0,T];\R^d)$ for $\R^d$ valued processes
 is defined in Section \ref{sec2}.
 \item $\Sb$ is the canonical process
 and $\cF$ is the canonical filtration on $\D$;  Section \ref{sec2}.
 \item $\|\Sb\|= \sup\{|\Sb_t|\ : \ t \in [0,T] \}$.
 \item $\Hc$ is set of statically tradable options.  In this paper,
 it is the set of all functions of the form
 $h(\Sb)=g(\Sb_T)$, where $ g \in \L^1(\R_+^d,\mu)$
 for some probability measure $\mu$; see subsection \ref{ss.call}.

 \item $d$ is the Skorokhod metric on $\D$, see Section \ref{ss.D}.
 \item For a positive integer $n$ and $\Sb \in \D$,
 stopping times $\tau_k=\tau^{(n)}_k(\Sb)$'s and the random integer
 $M=M^{(n)}(\Sb)$
 are defined in subsection \ref{ss.app}.
  \item For a positive integer $n$ and $\Sb \in \D$,
 random times $\hat \tau_k=\hat \tau^{(n)}_k(\Sb)$'s  are defined in subsection \ref{ss.lift}
 as a function of the stopping times $\tau_k$'s.
 \item Maps $\hat \Pi : \D \to \hd$ and $\check \Pi, \Pi :\D \to \D$
 are constructed in subsection \ref{ss.lift}.
\end{itemize}
When possible we followed the convention that the notation $\hat{}\ $ is
reserved for  objects on the countable space $\hd$, such as $\hat \Sb$ is
a generic point in $\hd$ and $\hat \tau_k$'s are its jump times.
\vspace{10pt}

\section{Preliminaries and main results}
\label{sec2} \setcounter{equation}{0}

The financial market consists of
a savings account
which is normalised to unity
$B_t\equiv 1$
by discounting and of $d$ risky assets
with price process $\Sb_t \in \R_+^d$, $t\in [0,T]$, where $T<\infty$
is the maturity date.
Without loss of generality
we set the initial stock values to one, i.e.,
$\Sb_0=(1,\ldots,1)$.
We assume that each component
of the price process
is right continuous with left hand limits (i.e.,
a $c\acute{a}dl\acute{a}g$ process)
which is also continuous at maturity $t=T$.
$\D$ denotes the set of all
 $c\acute{a}dl\acute{a}g$ functions
 $$
 \Sb = (\Sb^{(1)},\ldots, \Sb^{(d)}):[0,T]\rightarrow \R_+^d,
 $$
 that are continuous at $t=T$
 and also satisfy $\Sb_0=(1,\ldots,1)$.
Then, any element of $\D$
can be a possible path for the stock price process.
This is the only assumption that we make on our financial market.

We set $\D([0,T]; \R^d)$  be the set of all \cad processes that take values in
$\R^d$ (rather than $\R_+^d$ as in the case of $\D$) that start from
$\Sb_0=(1,\ldots,1)$ and are continuous at $T$.

Consider a European path dependent option with the payoff $X=G(\Sb)$ where
$$
G:\D([0,T]; \R^d) \rightarrow\mathbb{R}.
$$
Although only  the values of $G$ on $\D$
are needed to define the problem, technically
we require $G$ to be defined on the larger space
$\D([0,T]; \R^d)$.  However, in almost all
cases extension of a function defined on $\D$ to
$\D([0,T]; \R^d)$  is straightforward; see Remark \ref{r.G}
below.

In probability theory, most
processes are required to be either
progressively measurable or predictable
with respect to a filtration.  In the context of this paper,
the natural filtration is canonical filtration
generated by the canonical process.  Then,
we have the following equivalent definition
of progressive measurability.

\begin{dfn}
\label{d.progressively measurable}
{\rm{We say that a process
$ \gamma:[0,T]\times \D \rightarrow \mathbb {R}^d $
 is}} progressively measurable {\rm {if
for any $\Sb, \tilde \Sb \in \D$ and $t\in [0,T]$,
\begin{equation} \label{e.adapted}
\Sb_u=\tilde{\Sb}_u, \ \ \forall u \in[0,t]
\ \ \Rightarrow \ \ \gamma_t(\Sb)=\gamma_t(\tilde \Sb).
\end{equation}
}} \qed
 \end{dfn}

 It is well known that if $\gamma$ is left continuous
 and progressive measurable, then it is
 predictable with respect to the canonical
 filtration.  Hence,
 in the sequel we check the predictability
 of any left continuous process by verifying \reff{e.adapted}.
\vspace{10pt}

\subsection{Tradable Options} \label{ss.call}

$\Hc$ represents the set of all options available for trading.
Although in this paper we use a specific class,
in general it is assumed to be a linear subset of real-valued functions on $\D$.
It is always assumed that
$$
h_{cash}, h_{1}, \ldots, h_{d} \in \Hc,
\quad {\mbox{where}}
\quad
h_{cash}\equiv 1,\quad
h_{k}(\Sb):= \Sb^{(k)}_T,
 \quad
\forall \ k=1,\ldots,d.
$$
The price of these options are given
through an operator
$$
\cL : \cH \to \R.
$$
The essential assumptions on $\cL$ are the convexity,
an appropriate continuity and
$$
\cL(h_{cash})=1, \quad
\cL(h_{k})=\Sb^{(k)}_0=1, \quad
\forall \ k=1,\ldots,d.
$$
The last condition implies
that the dual elements are
martingale measures.
So it might be interesting
to relax it so as
to allow for local martingale measures.

\begin{exm}
{\rm{
In this example, we discuss
the two examples $(\cH,\cL)$  studied in this paper.
\vspace{10pt}

{\bf 1.} Let $\mu$ be a probability measure and
let}}
\begin{equation}
\label{e.Hc}
\cH=\{\ h(\Sb)= g(\Sb_T)\ : \ g\in \L^1(\R_+^d,\mu)\}.
\end{equation}
{\rm{The pricing operator given through the}} probability measure $\mu$
{\rm{by,}}
\begin{equation}
\label{e.L}
\cL(g)= \int_{\R_+^d}\ g\ d\mu.
\end{equation}
{\rm{We assume that $\mu$ satisfies
\begin{equation}
\label{a.p}
\int h_{k} d\mu=\int x_kd\mu(x)=\Sb^{(k)}_0=1, \quad
\forall \ k=1,\ldots,d .
\end{equation}
The above linear pricing rule is equivalent
to assume that the distribution of $\Sb_T$ is known
and equal to $\mu$.
 \vspace{10pt}

 {\bf 2.}
Consider a partition
$0<T_1<T_2<...<T_N=T$ and a probability measures
$\mu_1\preceq\mu_2\preceq...\preceq \mu_N$ on $R_+^d$.
Assume that $\int |x|^p d\mu_N(x)<\infty$ for some $p>1$.
We also assume that $\mu_N$ is satisfying (\ref{a.p}).
Set
\begin{equation}
\label{e.Hc1}
\cH=\{\ h(\Sb)=\sum_{i=1}^N g_i(\Sb_{T_i})\ : \ g_i\in \L^1(\R_+^d,\mu_i)\}.
\end{equation}
In this case, the linear pricing rule is equivalent
to assume that for any $k$ the distribution of
$\Sb_{T_k}$ is equal to $\mu_k$.
We also assume the following analogue of 
\reff{a.p},
\begin{equation}
\label{a.p.multi}
\int x_kd\mu_i(x)=\Sb^{(k)}_0=1, \quad
\forall \ k=1,\ldots,d,  \ i=1,\ldots,N .
\end{equation}
}}\qed
\end{exm}
\vspace{10pt}

In this paper, to simplify the presentation
we mainly consider the first case. Namely, we assume that $(\cH,\cL)$ satisfy \reff{e.Hc}, \reff{e.L},
\reff{a.p}. In particular this case does not require
the existence of a power option as an element in the space
of all static positions.
In Section \ref{s.extend}, we assume the existence of a power option
and extend the duality to the  multi-marginal case \reff{e.Hc1}.

\begin{rem}
\label{r.Hinfty}
{\rm{  Again let the tradable options to be of the form
$h(\Sb)=g(\Sb_T)$.
However, now assume that $g$ is a bounded
and continuous function of $\R_+^d$.
Then, a careful analyzis our proof
shows that the duality result Theorem \ref{t.main}
holds for this problem with the same dual problem.  Hence,
the super-replication cost with this
class of tradable options is the equal to
the one with the larger class $g \in \L^1(\R_+^d,\mu)$.
See Remark \ref{r.boundedH} and
also Remark \ref{r.super-martingale} below.
\qed}}
\end{rem}

Next we discuss the importance of the power option.

\begin{rem}
\label{r.power}
{\rm{
The following example
highlights the role
of the power option assumed
in \reff{a.p} and it is
communicated
to us by Marcel Nutz.

Suppose $d=1$.
Let $h^*(\Sb)= \chi_{[0.5,\infty)}(\Sb_T)$ and $\cH$
be the three dimensional space spanned by $h^*, h_1,
h_{cash}$.  Further let $\cL$ be a
linear functional on $\cH$ with $\cL(h_{cash})=1$ and
$\cL(h^{*})=0$. For an exotic option $G$,
$V(G)$ is the super-replication cost.  Let $\tilde \cH$ be the
extended market that also includes the power option
with $\tilde V(G)$ as the corresponding super-replication cost.

In both markets, the investor can buy
the digital option $h^{*}$ with zero cost.  Clearly,
this implies  some kind or arbitrage since $h^{*} \ge 0$
and is not identically equal to zero.  However, for the market $\cH$
this arbitrage while agreeing with the notion introduced in \cite{BN},
does not agree with the one given in \cite{ABS}.
On the other hand, in $\tilde \cH$ there is arbitrage
in both senses and
the super-replication cost $\tilde V(G)= -\infty$.

In the smaller market $\cH$ it follows directly that $V(0)=0$.
But we claim that
there is no martingale measure that is consistent
with $\cL$.  Indeed, if there were a martingale measure $\Q$
satisfying
$$
\E_\Q[h^*] \le \cL(h^*)=0,
$$
then the support of the distribution $\mu$ of
$\Sb_T$ under $\Q$ must be a subset of $[0,0.5]$.
On the other hand, since $\Q$ is a martingale measure,
$\int x d\mu(x) =\Sb_0=1$.  Hence, the set
$\M_{\cL}$ is empty.  This means that
the duality \reff{e.dual} does not hold in $\cH$
while it holds in the market $\tilde \cH$
that contains the power option.
(Note that by convention
the supremum over an empty set
is defined to be minus infinity.)

Although the duality does not hold in $\cH$ with the dual
set $\M_{\cL}$, in this example it would hold
if one relaxes the dual set of measures to include
the  local martingale measures as well.
\qed

}}
\end{rem}
\vspace{10pt}

\subsection{Martingale Measures}
\label{ss.martingale measures}

Set $\Omega:=\D$ and let
$\mathcal{F}$ be the $\sigma$--algebra which is generated
by the cylindrical sets. Let $\mathbb{S}=(\mathbb S_t)_{0\leq t\leq T}$
be the canonical process given by $\mathbb S_t(\omega):=\omega_t$,
for all $\omega\in\Omega$.

A probability measure $\mathbb Q$ on the space
$(\Omega,{\mathcal F})$ is a {\em{martingale measure}},
if the canonical process $({\mathbb S_t})_{t=0}^T$ is a martingale
with respect to $\mathbb Q$ and $\mathbb S_0=(1,\ldots,1)$,
 $\mathbb Q$-a.s.

For a probability measure $\mu$ on $\R_+^d$, let $\mathbb{M}_\mu$
be the set of all martingale measures $\mathbb Q$ such that the probability distribution of
$\mathbb S_T$ under $\mathbb Q$ is equal to $\mu$. Observe that condition
$\int x_k d\mu(x) =1$ in \reff{a.p} is
equivalent to $\mathbb{M}_\mu\neq\emptyset$.
\vspace{10pt}

\subsection{Admissible portfolios} \label{ss.admissible}

Next, we describe the continuous time trading in the underlying asset $\Sb$.
 We essentially adopt the path-wise approach which was
already used in \cite{DS}.
However, the present setup is more delicate
than the one in \cite{DS}.
Indeed, due to the possible discontinuities of the integrator $\Sb$,
we require that the trading strategies are 
of bounded variation and left continuous.
Indeed, observe that
for any left continuous function $\gamma:[0,T]\rightarrow \mathbb{R}^d$
of bounded variation
and a $c\acute{a}dl\acute{a}g$ function $\Sb \in \D$,
we may use integration by parts (see Section 1.7 in \cite{P}) to define
$$
 \int_{0}^t \gamma_u d\Sb_u:= \gamma_t \cdot \Sb_t-\gamma_0\cdot \Sb_0
 -\int_{0}^t \Sb_u\cdot d\gamma_u,
 $$
where for $a,b \in \R^d$, $a\cdot b$ is the usual
scalar product. Furthermore,
the last term in the above right hand side is the Lebesgue-Stieltjes
integral and not the standard Riemann--Stieltjes integral which was used in \cite{DS}.

In particular, when $\gamma$ is also progressively
measurable (c.f., \reff{e.adapted}) then for any martingale measure
$\Q \in \M_\mu$, the stochastic integral $\int \gamma_u \Sb_u$
is well-defined and both the pathwise constructed integral
and the stochastic integral agree $\Q$ almost surely.
In the sequel, we use this equality repeatedly.

These considerations lead us to the following definition.

\begin{dfn}\label{d.admissible}
A semi-static portfolio {\rm{is a pair
$\phi:=(g,\gamma)$, where $g\in \L^1(\R_+^d,\mu)$
and $ \gamma:[0,T]\times \D \rightarrow \mathbb {R}^d $
is left continuous, progressively measurable and bounded variation where
$\gamma_t(\Sb)$ denotes the number of shares in the portfolio $\phi$
at  time $t$, before a transfer is made at this time.

A semi-static portfolio is}} admissible, {\rm{if for every $\Q\in \M_\mu$
the stochastic integral
$\int \gamma_u d\Sb_u$ is a
$\Q$ super-martingale.

An admissible semi-static portfolio is called}} super-replicating,
{\rm{if
$$
g(\Sb_T)+ \int_{0}^t \gamma_u(\Sb) d{\Sb}_u \geq  G(\Sb),
\ \ \forall{\Sb}\in \D.
$$

The (minimal)}} super-hedging cost {\rm{of $G$ is  defined by,}}
\begin{equation*}
V(G):=\inf\left\{\int g d\mu: \ \exists \gamma \
\mbox{such}  \ \mbox{that} \ \phi :=(g,\gamma) \ \mbox{is} \ \mbox{super-replicating} \ \right\}.
 \end{equation*}
 \qed
 \end{dfn}

 \begin{rem}
 \label{r.super martingale}

 {\rm{  The condition of admissibility depends
 on the measure $\mu$.  Hence the
 set of admissible controls and
 the super-replication cost
 also have this dependence.
  One may remove this dependence
  by considering continuous and
  bounded $g$'s instead of $\L^1(\R_+^d,\mu)$ functions.
  And for admissibility, instead of requiring
  that the stochastic integral $\int \gamma_u d\Sb_u$
  is a $\Q$ super-martingale for every $\Q \in \M_\mu$,
   one may impose the
  condition that this integral is uniformly
  bounded from below in $\Sb$.  A careful
  analyzis of the proof
  of Theorem \ref{t.main} reveals that
  the duality
  (under the hypothesis of Theorem \ref{t.main})
  holds with this smaller class of
  admissible portfolios
  and hence the super-replication cost is not changed.
  See Remarks \ref{r.boundedH} and \ref{r.super-martingale} below.

  In the case when $\mu$ satisfies
  \begin{equation}
  \label{2.rev}
  \int |x|^p  d\mu(x)<\infty
  \end{equation}
   with an
  exponent $p>1$,
  if there exists $C>0$ satisfying
  \begin{equation}
  \label{e.int}
  \int_0^t \gamma_u(\Sb)\ d\Sb_u \ge
  -C \left( 1 + \sup_{0\le u \le t}\ |\Sb_u|^p\right),
  \quad \forall \ t \in [0,T], \ \Sb \in \D,
  \end{equation}
  then the stochastic integral is a $\Q$ super-martingale
  for each $\Q \in \M_\mu$ due
  to Doob's inequality and \reff{2.rev}.

  In the sequel, we check the admissibility of $\gamma$
  by verifying either the above condition with  $p>1$ when
  \reff{2.rev} holds or
  again the above inequality but with $p=0$ when we only have \reff{a.p}.}}
\qed
\end{rem}

\subsection{Martingale optimal transport on the space $\D$}
\label{ss.D}

We continue by stating the duality result.
Since our approach relies on discretization, one requires
the regularity of the exotic option.  One may then relax this regularity
through analytical methods as we have done in \cite{DS1}.
Since the emphasis of this paper is the possible
discontinuity of the stock process
and multi-dimensionality,
we do not seek the most general condition on $G$. We first
prove the duality when $G$ is satisfying (\ref{1.new}) and uniformly continuous
in the Skorokhod topology.  We then relax this condition
in Section \ref{s.extend} below.  To state the condition
on $G$, recall  the Skorokhod metric on $D([0,T]; \R^d)$,

$$
d(\omega,\tilde \omega) := \inf_{\lambda \in \Lambda[0,T]}\
\sup_{t \in [0,T]}\ \left( |\omega(t)-\tilde \omega(\lambda(t))| +
|\lambda(t)-t|\right),
$$
where $\Lambda[0,T]$ is the set of all strictly increasing onto functions
$\lambda :[0,T]\to [0,T]$.
\begin{asm}
\label{a.uniformG}
We assume that the exotic option
$$
G :\D([0,T]; \R^d) \to \R,
$$
is
satisfying \reff{1.new}
and uniformly continuous, i.e.,
there exists a continuous bounded function
{\rm{(}}modulus of continuity{\rm{)}} $m_G:\R_+\to \R_+$
so that
$$
|G(\omega)-
G(\tilde\omega)|\leq m_G\left( d(\omega, \tilde\omega)\right),
\ \ \forall \omega,\tilde\omega\in
\D([0,T]; \R^d).
$$
\qed
\end{asm}
Examples (for $d=1$) of payoffs which satisfy Assumption \ref{a.uniformG}
include lookback put options with fixed strike
$$G(\Sb)=(K-\min_{0\leq t\leq T} \Sb_t)^{+}$$
and lookback call option with floating strike
$$G(\Sb)=(\mathbb S_T-\min_{0\leq t\leq T} \Sb_t)^{+}.
$$
In the last section, we relax the above assumption
to Assumption \ref{a.newG}.  This extension allows for 
more options, in particular of Asian type.

\begin{rem}
\label{r.G}
{\rm{ For technical reasons, we assume that $G$ defined
not only on $\d$ but in the larger space
$\D([0,T];\R^d)$.  However, suppose that $G$ is given only on
its natural domain $\D$ rather than the whole space
$\D([0,T]; \R^d)$.  Assume that $G$ is uniformly continuous
on $\D$.  Then, one can extend $G$ to the larger space
still satisfying the above assumption and the
main duality result
is independent of the particular extension chosen.

Indeed, a direct closure argument extends $G$
to a uniformly continuous function $\check G$
defined on $\D([0,T];[0,\infty)^d)$.  Then, we define
$\tilde G(\tilde \Sb):= \check G(\tilde \Sb')$ for every
$\tilde \Sb \in \D([0,T];\R^d)$,}} where
$\tilde \Sb '^{(i)}_t:=|\tilde \Sb^{(i)}_t|$, $i=1,...,d$ and
$t\in [0,T]$.
\qed
\end{rem}

The following result is an extension of Theorem 2.7 in \cite{DS}
to the case of multi-dimensional
stock price process with possible jumps.
Its proof is completed in the subsequent sections.
Relaxations of the Assumption \ref{a.uniformG}
are provided in Section \ref{s.extend}.

\begin{thm}
\label{t.main}
We assume that $(\Hc,\Lc)$ is as in \reff{e.Hc}, \reff{e.L}
and the probability measure $\mu$ satisfies
\reff{a.p}. Then
for for any exotic option satisfying
  Assumption \ref{a.uniformG}, we have the dual representation
  for the minimal super-replication cost defined in Definition
  \ref{d.admissible},
$$
V(G)=\sup_{\mathbb Q\in\mathbb{M}_\mu}
 \E_{\mathbb {Q}}\left[G(\mathbb
S)\right], $$
where $\E_{\mathbb {Q}}$ denotes the expectation
with respect to the probability measure $\mathbb Q$.
\end{thm}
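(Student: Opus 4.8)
The plan is to prove the duality by a discretization procedure, following the scheme of \cite{DS,DS1} but with the new portfolio constraints needed to handle jumps and multi-dimensionality. The easy inequality $V(G)\ge\sup_{\Q\in\M_\mu}\E_\Q[G]$ follows from a weak duality argument: if $(g,\gamma)$ is super-replicating and $\Q\in\M_\mu$, then taking $\Q$-expectations of $g(\Sb_T)+\int_0^T\gamma_u\,d\Sb_u\ge G(\Sb)$ and using that the stochastic integral is a $\Q$ super-martingale with value $0$ at time $0$ gives $\int g\,d\mu=\E_\Q[g(\Sb_T)]\ge\E_\Q[G]$; taking the infimum over portfolios and the supremum over $\Q$ yields the claim. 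So the whole content is the reverse inequality $V(G)\le\sup_{\Q\in\M_\mu}\E_\Q[G]$.

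For the hard inequality I would proceed in four stages. First, I fix $n$ and replace the continuous-time problem by a discrete one: for $\Sb\in\D$ I use the stopping times $\tau^{(n)}_k(\Sb)$ and the random integer $M^{(n)}(\Sb)$ from subsection \ref{ss.app} to sample the path at the times when it has moved by a fixed amount (plus the terminal time), producing a piecewise-constant approximation whose values lie in a countable set. The uniform continuity of $G$ in the Skorokhod metric guarantees that $G$ evaluated on this approximation converges to $G(\Sb)$ uniformly as $n\to\infty$, up to the modulus $m_G$, so it suffices to prove the duality for the discretized payoff with a controlled error that vanishes with $n$. Second, on the discrete market I introduce the appropriate portfolio constraints (bounding the positions in a way that is compatible with the sampling scheme) and apply a classical min-max theorem to interchange $\inf$ over discrete strategies and $\sup$ over discrete martingale measures; this is where the constraints are essential, as they keep the relevant set of strategies in a form to which min-max applies while still allowing the error terms from multi-dimensionality and jumps to be absorbed. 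Third, I invoke the constrained duality result of F\"ollmer and Kramkov \cite{FK} to identify the value of the discrete primal problem with a discrete dual, obtaining a sequence $\Q_n$ of (approximate) discrete martingale measures whose values $\E_{\Q_n}[G]$ are at least $V(G)$ minus an error tending to $0$.

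The fourth and main obstacle is passing to the limit on the dual side: the set $\M_\mu$ is not compact, so one cannot simply extract a weakly convergent subsequence of the $\Q_n$ and be done. Here I would use the continuity result Theorem \ref{t.continuity} of Section \ref{s.continuity}: the discrete measures $\Q_n$ naturally have marginals $\mu_n$ that are close to $\mu$ but not equal to it, and Theorem \ref{t.continuity} provides the needed stability of the dual functional $\mu\mapsto\sup_{\Q\in\M_\mu}\E_\Q[G]$ under perturbation of the marginal. Concretely I would need tightness of a suitably modified family of measures (using the growth bound $|G(\Sb)|\le C(1+|\Sb_T|)$ together with the fixed first marginal to get $\L^1$-type control of $\Sb_T$, and the sampling construction to control oscillations), extract a weak limit $\Q$, verify it is a genuine martingale measure on $\D$ with $\Sb_T\sim\mu$ by checking the martingale property against bounded continuous test functions and using the uniform integrability coming from the growth condition, and finally pass to the limit in $\E_{\Q_n}[G]$ using the Skorokhod-continuity and growth of $G$. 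Assembling these estimates gives $\sup_{\Q\in\M_\mu}\E_\Q[G]\ge\E_\Q[G]\ge\limsup_n(\E_{\Q_n}[G]-\text{error})\ge V(G)$, completing the proof. The delicate points I expect to fight hardest with are the exact form of the discrete portfolio constraints in stage two and the construction underlying Theorem \ref{t.continuity} in stage four, since these are precisely the places where the $d>1$ and jump features obstruct the one-dimensional continuous arguments of \cite{DS}.
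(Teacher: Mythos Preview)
Your strategy is correct and matches the paper's route: weak duality for $V(G)\ge\sup_\Q\E_\Q[G]$, then discretization via the stopping times $\tau^{(n)}_k$, constrained discrete portfolios, the F\"ollmer--Kramkov result combined with a min-max argument on the countable space, and finally Theorem \ref{t.continuity} to pass to the limit on the dual side.

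Two points where your sketch is imprecise relative to the paper are worth flagging. First, you omit the preliminary reduction to bounded non-negative $G$: the paper adds $C[1+\sum_i\Sb^{(i)}_T]$ (using the growth bound \reff{1.new}) to force $G\ge 0$, then truncates at level $C(dK+1)$ and shows the truncation error is controlled by $\sum_i\int(x_i-K)^+\,d\mu$, which vanishes as $K\to\infty$ by \reff{a.p}. Second, your description of stage four conflates two different mechanisms. The measures $\Q_n$ coming out of the discrete min-max are \emph{not} martingale measures; they lie in a set $\cM(c,n)$ under which the canonical process is only an approximate martingale, with predictable compensator $A^n$ small in $L^1$ (of order $c/n$). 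The paper does not seek tightness of the $\Q_n$ or extract a weak limit $\Q$ directly; instead it takes the semimartingale decomposition $\Sb=\M^n-A^n$ under $\Q_n$, replaces $G(\Sb)$ by $G(\M^n)$ at cost $m_G(n^{-1/2})+c\,\Q_n(\sup_t|A^n_t|\ge n^{-1/2})$, pushes forward to get genuine martingale measures $\tilde\Q_n\in\tilde\M_{\nu_n}$ with $\nu_n=\mathrm{Law}(\M^n_T)$, and only then invokes Theorem \ref{t.continuity} as a black box to conclude $v(\nu_n)\to v(\mu)$. The proof of Theorem \ref{t.continuity} itself proceeds by a Skorokhod--Prokhorov coupling, not by compactness. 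If you attempted your ``extract a weak limit $\Q$'' route literally, you would face the difficulty that the $\Q_n$ are supported on the varying countable sets $\hat\D^{(n)}$ and are not martingale measures to begin with.
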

\begin{proof}

Let $\Q \in \M_\mu$.  Then, for any
admissible strategy $\gamma$,
the path-wise integral $\int \gamma_u d\Sb_u$
agrees with stochastic integral
$\Q$-almost surely and
in view of Definition \ref{d.admissible}
this integral
is a  $\Q$ super-martingale.
Now suppose that $(g,\gamma)$
be an admissible super-replicating
semi-static portfolio.
Then,
$$
\E_\Q\left[ \int_0^T \gamma_u(\Sb)\ d\Sb_u\right]
\le 0,\quad
{\mbox{and}}
\quad
\E_\Q[g(\Sb_T)]= \int g d\mu.
$$
 We take
the expected value with respect to $\Q$
in the super-replication inequality and
use the above observations
to arrive at,
$$
V(G)\geq \sup_{\mathbb Q\in\mathbb{M}_\mu}
\E_{\mathbb {Q}}\left[G(\mathbb S)\right].
$$
The opposite inequality is proved in Corollary \ref{c.main},
\begin{equation}
\label{e.inequality}
V(G) \le
\liminf_{n\rightarrow\infty}V^{(n)}(G)\leq \sup_{\mathbb Q\in\mathbb{M}_\mu}
\E_{\mathbb {Q}}\left[G(\mathbb S)\right].
\end{equation}

We continue with a standard step that allows us to consider
only bounded and non-negative claims.
\vspace{5pt}

\noindent
{\em{Reduction to bounded non negative claims.}}
Let $C>0$ be the constant in  the assumption \reff{1.new} and set
$$
\hat{G}(\Sb):= {G}(\Sb)+C \left[1+\sum_{i=1}^d \Sb^{(i)}_T\ \right].
$$
Then, 
$$
V(\hat{G})=V(G)+(d+1)C,
$$
and also
$$
\sup_{\mathbb Q\in\mathbb{M}_\mu}
\E_{\mathbb {Q}}\left[\hat{G}(\mathbb S)\right]=
\sup_{\mathbb Q\in\mathbb{M}_\mu}
\E_{\mathbb {Q}}\left[G(\mathbb S)\right]+(d+1)C.
$$
Since by \reff{1.new} $\hat{G} \ge 0$, 
we may assume without loss of generality that 
the  claim $G$ is non negative and
satisfies Assumption \ref{a.uniformG}.

Next, for any constant $K \ge 0$ set
$G_K:= G \wedge C(dK+1) $
where $C$ is again as in (\ref{1.new}).
Then, $G_K$ is a bounded, non negative function.
Then, in view of \reff{1.new}, 
$$
G = G_{K}+ (G-C (dK+1))^{+}
\leq  G_K+C \sum_{i=1}^d (\Sb^{(i)}_T-K)^{+}.
$$
Consequently, we have the following inequalities,
\begin{eqnarray*}
V(G_K) &\leq & V(G) \ \leq \ 
 V(G_K)+ V\left(C\sum_{i=1}^d (\Sb^{(i)}_T-K)^{+}\right)\\
&= &
 V(G_K)+C \sum_{i=1}^d \int (x_i- K)^{+} d\mu(x).
\end{eqnarray*}
Also \reff{a.p} implies that
$$
\lim_{K\rightarrow\infty} \sum_{i=1}^d \int (x_i- K)^{+} d\mu(x)=0.
$$
Therefore, we conclude that
$V(G)=\lim_{k\rightarrow\infty} V(G_K)$ and so
if the inequality \reff{e.inequality} holds
for $G_K$,  then it holds for $G$.

We conclude that with without loss of generality, we may assume that
$G$ is a bounded non negative function satisfying
Assumption \ref{a.uniformG}.

\end{proof}

\begin{rem}
\label{r.super-martingale}
{\rm{  In the above proof, the lower
bound for $V(G)$
follows from a classical direct argument.
For this argument the minimal
conditions for $(g,\gamma)$
are the ones assumed in Definition
\ref{d.admissible}.  Namely,
the integrability of $g$ with respect
to $\mu$ and the super-martingality of the stochastic
integral.  Therefore,
any smaller class of semi-static portfolios
would also
satisfy the lower bound trivially.
}}
\qed
\end{rem}

\section{Proof of (\ref{e.inequality}) for bounded non negative $G$.}
\label{s.proof}
In this section and the next,
 we  assume that $G$ is bounded non negative and satisfies
Assumption \ref{a.uniformG} and that the pair $(\cH, \cL)$ is as \reff{e.Hc}, \reff{e.L}.

\subsection{Discretization of $\R_+^d$ and stopping times}
 \label{ss.app}
 In this subsection, we construct a sequence
 of stopping times that will be central to our
 discretization procedure.

 For $n \in \N$ and  $x \in \R_+^d$
 define an open set by,
  $$
  O(x,n):= \left\{ \ y \in \R_+^d\ :
 \left| y -  x\ \right| < \sqrt{d}\ 2^{-n} \right\}.
 $$
For $\Sb\in\D$, set
$\tau_0=0$ and define $\tau_{k+1}= \tau_{k+1}^{(n)}(\Sb)$ by,
$$
\tau_{k+1}:=  T \wedge
\left(\tau_k + \sqrt{d}\ 2^{-n}\right) \wedge
\inf \left\{  t>\tau_k\ :\
\Sb_ t  \not \in O(\Sb_{\tau_k},n)\
\right\}, \quad
k=0,1,\ldots \ ,
$$
where we set $\tau_{k+1}=T \wedge
(\tau_k + \sqrt{d}\ 2^{-n})$ if the above set is empty.
To ease the notation we suppress the dependence on $n$ and $\Sb$
when this dependence is clear.
Set
$$
M= \M^{(n)}(\Sb):= \min\left\{\ k\in \N \ :\ \tau_{k} = T\ \right\} .
$$
Since $\Sb$ is \cad and $\Sb \in \R_+^d$, $M<\infty$.
It is also clear that
$$
0=\tau_0<\tau_1<...<\tau_{M}=T
$$
are stopping times with respect to the filtration which is generated by $\Sb$.
Moreover, for $k=0,1,\ldots,M-1$,
\begin{equation}
\label{e.Sest}
|\tau_{k+1}- \tau_{k}|,\
\left| \Sb_t - \Sb_{\tau_k}\right| \le \sqrt{d}\  2^{-n}, \quad \forall t \in
[\tau_k, \tau_{k+1}).
\end{equation}
Also, by continuity of $\Sb$ at $T$, the above holds in the closed interval
$[\tau_{M-1}, T]$.

\subsection{Approximation}
\label{ss.countable}
In this subsection, we introduce a sequence of super-replication problems defined
on a countable probability space. In the later sections, we show that this
sequence approximates the original problem.  Since the probability space is
countable, robust (or equivalently point wise) and the probabilistic super-replications
agree with a properly chosen probability measure.  This allows us to use
classical techniques to analyze the approximating problem.

We fix $n\in\mathbb{N}$ and define a
sequence of probability spaces $\hat \D= \hat\D^{(n)}[0,T]$.
Set
\begin{eqnarray*}
 A^{(n)}&:=& \left\{ \ 2^{-n}m \ :m=(m_1,\ldots,m_d) \in \N^d \ \right\},\\
B^{(n)}&:=& \left\{k  \sqrt{d}\  2^{-n} \ :\ k \in \N \right\}
\cup \left\{ \sqrt{d}\ 2^{-n}/k \ :\ k \in \N \right\}.
\end{eqnarray*}

\begin{dfn}
\label{d.DN}
{\rm{A process $\hS \in \D$ belongs to $\hat \D$, if
there exists  a  nonnegative integer $M$
and a partition $0=t_0<t_1=\sqrt d 2^{-n}  <...<t_M<T$
such that
$$
\hS_t= \sum_{k=0}^{M-1} \hS_{t_k} \chi_{[t_{k},t_{k+1})}(t) +
\hS_{t_{M}} \chi_{[t_{M},T]}(t)
$$
where $\hS_0=(1,\dots,1)$,
$\hS_{T}=\hS_{t_M} \in A^{(n)}$ and
$$
\hS_{t_k}\in A^{(n+k)}, \  \ \forall \ k=1,\ldots,M-1,
\quad
t_k-t_{k-1}\in B^{(n+k)}, \  \ \forall \ k=2,\ldots,M.
$$
\qed}}
\end{dfn}

Since the set $\hd$ is countable, there exists a probability measure
$\mathbb{P}=\P^{(n)}$ on $\d$
with support contained in $\hd$,
which gives  positive weight to every element of $\hd$.

Let the probability structure $\Omega:=\D$,
the canonical map $\Sb$ and the filtration $\cF$ be as
in subsection \ref{ss.D}.  Introduce a new filtration
$\hat \cF=(\hat \cF_t)_{t \in [0,T]}$
by completing $\cF$ by the null sets of $\P$.  Note that
all of this structure depends on $n$ but this dependence
is suppressed in our notation.
Under the measure $\mathbb{P}$, the canonical map
${\mathbb S}$ has finitely many
jumps. Let $M=M(\Sb)$ be number of jumps and
$$
0<\hat{\tau}_1<\ldots <\hat{\tau}_{M}<T
$$
be the jump times of ${\mathbb S}$.
We set $\hat \tau_0=0, \hat \tau_{M+1}=T$.
We recall that the canonical process $\Sb$ is continuous
at $T$.

A trading strategy on the filtered probability space
$(\Omega,{\{\hat \cF_t\}}_{t=0}^T, \mathbb{P})$
is simply a predictable stochastic process $\hat\gamma$
with respect to the filtration $\hat \cF$.
Next, consider a {\em constrained} financial market, in which the
trading strategy  satisfies the bound
$$
\hat\gamma : [0,T]\times \D \to [-n\ ,\ n].
$$
The statically tradable options are bounded
real valued functions of $A^{(n)}$.

We also define a probability measure $\hat \mu$
on $A^{(n)}$  by,
$$
\hat \mu(\{ m 2^{-n}\}) := \mu\left(\left\{ x \in \R_+^d\ :\
\pi^{(n)}(x) = m 2^{-n}\right\}\right),
\quad m\in \N^d,
$$
where $\mu$ is the probability measure defining
the operator $\cL$ in subsection \ref{ss.call} and
 \begin{equation}
 \label{e.pi}
 \pi^{(n)}: \R_+^d \to A^{(n)}:= \left\{ \ 2^{-n}k \ :k=(k_1,\ldots,k_d) \in \N^d \ \right\}
 \end{equation}
is given  by
$$
 \pi^{(n)}(x)_i:=  2^{-n} \ \lceil 2^n x_i \rceil,
 \quad
 i=1,\ldots,d,
$$
and for $a \in \R_+$,  $\lceil a\rceil \in \N$ is smallest
 integer greater or equal to $a$.

We summarize this in the following by defining
the probabilistic super-replication problem on the
set $\hd$.
\begin{dfn}
 \label{d.probabilistic}
{\rm{
 A (probabilistic)}} semi-static portfolio
 {\rm{is a pair $(\hat g,\hat \gamma)$
such that
$\hat g:A^{(n)}\rightarrow \mathbb{R}$ is a bounded function,
 $ \hat \gamma :[0,T] \times \D \to[-n,n] $ {{is predictable}}
and the stochastic integral
$\int \hat\gamma_{u} d\hat{\mathbb S}_u$ exists.

A semi-static portfolio is}} admissible {\rm{if there exists $C>0$ such that
\begin{equation*}
\int_{0}^t \hat\gamma_{u} d\hat{\mathbb S}_u \ge -C ,  \ \ \mathbb P-{a.s.}, \ \ t\in [0,T].
\end{equation*}

A semi-static portfolio is}} $\mathbb P$-super-replicating, {\rm{if
\begin{equation}
\label{3.1}
\hat g( {\mathbb S}_T)+\int_{0}^T \hat \gamma_{u} d{\mathbb S}_u\geq G({\mathbb S}),
 \ \ \mathbb P-{a.s.}
\end{equation}

The (minimal) super-hedging cost of $G$ is  defined by,}}
\begin{eqnarray*}
V^{(n)}(G)&:=&\inf\left\{\int \hat  g d\hat \mu: \ \exists \gamma \
\mbox{such}  \ \mbox{that} \ \hat\phi:=(\hat g,\hat\gamma)\right.\\
&&
\hspace{60pt}
\left. \ \mbox{is admissible and super-replicating} \ \right\}.
 \end{eqnarray*}

\qed
\end{dfn}

We note that \reff{3.1} is equivalent to having the same inequality for every
$\hS \in \hd$.

\begin{rem}
\label{r.arbitrary}
{\rm{The bound $n$ that we place on the
$\gamma$ is somehow arbitrary.
Indeed, any bound that converges to infinity
with $n$ and goes to zero when multiplied by $2^{-n}$
would suffice.
This flexibility might be useful
in possible future extensions.}}
\qed
\end{rem}

\subsection{Lifting}
\label{ss.lift}

An important step
in our approach  is to ``lift''
a given  probabilistic semi-static portfolio
$\hat \phi = (\hat  h, \hat \gamma)$
to an admissible  portfolio
$\phi$ for the original financial market.

We start the construction of this lift
by defining an approximation of the the stopping times
$\tau_k=\tau_k^{(n)}(\Sb)$
defined in subsection \ref{ss.app}.
Recall also
the random integer $M=\M^{(n)}(\Sb)$
 defined in subsection \ref{ss.app}
and the set $B^{(i)}$ defined in Definition \ref{d.DN}.
Set
$$
\hat \tau_0:=0, \quad
\hat \tau_1= \sqrt{d}\ 2^{-n},\quad
\hat \tau_{M+1}:=T.
$$
For
$k=2,\ldots, M$ recursively define,
$$
\hat \tau_k:= \hat \tau_{k-1} + (1-  \sqrt{d}\ 2^{-n}/T)
\sup\left\{\ \Delta t>0 \: | \ \Delta t \in B^{(n+k)}\
{\mbox{and}}\ \Delta t< \tau_{k-1} -\tau_{k-2} \right\}.
$$
We note that due to the definition of $B^{(i)}$
the above set is always non-empty.
We collect some properties of these
random times in the following lemma.

\begin{lem}
\label{l.hattau}
Random times $\hat \tau_k$'s
satisfy,
$$
0=\hat \tau_0
< \sqrt{d}\  2^{-n} =\hat \tau_1<\ldots <\hat \tau_M<  \hat \tau_{M+1}=T,
$$
and
$$
 |\hat \tau_{k}- \tau_{k} | \le \sqrt{d}\  2^{-n+1},\quad
\forall \ k=0,\dots,M.
$$
\end{lem}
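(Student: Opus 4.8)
The plan is to prove the two displayed assertions separately, with the chain of strict inequalities coming almost for free from the construction and the estimate on $|\hat\tau_k-\tau_k|$ being the real content.

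First I would establish the ordering $0=\hat\tau_0<\sqrt d\,2^{-n}=\hat\tau_1<\ldots<\hat\tau_M<\hat\tau_{M+1}=T$. The endpoints $\hat\tau_0=0$, $\hat\tau_1=\sqrt d\,2^{-n}$, $\hat\tau_{M+1}=T$ are definitional. For $k=2,\ldots,M$ the recursion adds to $\hat\tau_{k-1}$ the positive quantity $(1-\sqrt d\,2^{-n}/T)\sup\{\Delta t>0:\Delta t\in B^{(n+k)},\ \Delta t<\tau_{k-1}-\tau_{k-2}\}$; since $\tau_{k-1}>\tau_{k-2}$ (strict, by the construction of the $\tau$'s in subsection~\ref{ss.app}) and $B^{(n+k)}$ contains arbitrarily small positive elements $\sqrt d\,2^{-n-k}/j$, the indexing set is non-empty and its supremum is strictly positive, so $\hat\tau_k>\hat\tau_{k-1}$. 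Finally, to see $\hat\tau_M<T$ I would sum the increments: $\hat\tau_M=\sqrt d\,2^{-n}+(1-\sqrt d\,2^{-n}/T)\sum_{k=2}^M\delta_k$ with $0<\delta_k<\tau_{k-1}-\tau_{k-2}$, whence $\sum_{k=2}^M\delta_k<\sum_{k=2}^M(\tau_{k-1}-\tau_{k-2})=\tau_{M-1}-\tau_0=\tau_{M-1}\le T$, and therefore $\hat\tau_M<\sqrt d\,2^{-n}+(1-\sqrt d\,2^{-n}/T)\,T=T$.

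Next, the estimate $|\hat\tau_k-\tau_k|\le\sqrt d\,2^{-n+1}$. The cases $k=0$ and $k=1$ are immediate: $\hat\tau_0=\tau_0=0$, and $|\hat\tau_1-\tau_1|=|\sqrt d\,2^{-n}-\tau_1|\le\sqrt d\,2^{-n}$ since $0\le\tau_1\le\sqrt d\,2^{-n}$ by \reff{e.Sest}. For $k\ge2$ I would work with the increments. Write $\Delta_k:=\hat\tau_k-\hat\tau_{k-1}$ and $\delta_k:=\tau_k-\tau_{k-1}$; by \reff{e.Sest} each $\delta_k\le\sqrt d\,2^{-n}$. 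By construction $\Delta_k=(1-\sqrt d\,2^{-n}/T)\sup\{\Delta t\in B^{(n+k)}:\Delta t<\delta_{k-1}\}$. The key point is that the chosen $\Delta t$ is the largest element of $B^{(n+k)}$ strictly below $\delta_{k-1}$, and $B^{(n+k)}$ has mesh controlled near the relevant scale: its elements below $\sqrt d\,2^{-n}$ are of the form $\sqrt d\,2^{-n-k}/j$, so consecutive ones differ by at most $\sqrt d\,2^{-n-k}$ on that range — hence $0\le\delta_{k-1}-\sup\{\Delta t\in B^{(n+k)}:\Delta t<\delta_{k-1}\}\le\sqrt d\,2^{-n-k}$, giving $|\Delta_k-\delta_{k-1}|\le\sqrt d\,2^{-n-k}+(\sqrt d\,2^{-n}/T)\,\delta_{k-1}\le\sqrt d\,2^{-n-k}+d\,2^{-2n}/T$. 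I would then telescope: $\hat\tau_k-\tau_k=(\hat\tau_1-\tau_1)+\sum_{j=2}^k(\Delta_j-\delta_j)$. Since $\Delta_j$ is paired with $\delta_{j-1}$ rather than $\delta_j$, I would reorganise as $\sum_{j=2}^k(\Delta_j-\delta_{j-1})+\sum_{j=2}^k(\delta_{j-1}-\delta_j)$; the second sum telescopes to $\delta_1-\delta_k$, which is bounded by $\sqrt d\,2^{-n}$ in absolute value, and the first sum is bounded by $\sum_{j\ge2}(\sqrt d\,2^{-n-j}+d\,2^{-2n}/T)$ — but here the $d\,2^{-2n}/T$ terms do not obviously sum to something small unless $k$ is controlled, so I would instead keep the factor $(1-\sqrt d\,2^{-n}/T)$ intact and compare $\hat\tau_k$ with $(1-\sqrt d\,2^{-n}/T)\tau_{k-1}+\sqrt d\,2^{-n}$ directly, bounding the accumulated rounding error by $\sqrt d\,2^{-n}\sum_{j\ge0}2^{-j}=\sqrt d\,2^{-n+1}$ and absorbing the contraction term using $\tau_{k-1}\le T$.

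The main obstacle is exactly this telescoping bookkeeping: because $\hat\tau_k$'s increment is built from $\tau_{k-1}-\tau_{k-2}$ (a one-step shift) and is simultaneously scaled by $(1-\sqrt d\,2^{-n}/T)$, a naive triangle-inequality sum of per-step errors does not close at the clean constant $\sqrt d\,2^{-n+1}$. The right way is to recognise the recursion $\hat\tau_k\approx(1-\sqrt d\,2^{-n}/T)\hat\tau_{k-1}+(\text{shifted }\tau\text{ increment})$ as an affine contraction and to track the comparison $\hat\tau_k$ versus $(1-\sqrt d\,2^{-n}/T)\tau_{k-1}+\hat\tau_1$, so that the geometric factor does the work of summing the errors and the index-shift contributes only the one extra $\delta_1$ term; then $|\hat\tau_k-\tau_k|$ splits into the contraction defect $(\sqrt d\,2^{-n}/T)\tau_{k-1}\le\sqrt d\,2^{-n}$ plus the rounding/shift defect $\le\sqrt d\,2^{-n}$, yielding the claimed $\sqrt d\,2^{-n+1}$. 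The inequality for the closed interval at $k=M$ uses continuity of $\Sb$ at $T$ as noted after \reff{e.Sest}.
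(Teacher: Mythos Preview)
Your proposal is essentially correct and follows the same underlying idea as the paper: control $\hat\tau_k$ in terms of $\tau_{k-1}$ via the telescoping sum of increments, use that the rounding error at step $j$ is at most $\sqrt d\,2^{-(n+j)}$ (hence geometrically summable), and finish with $|\tau_k-\tau_{k-1}|\le\sqrt d\,2^{-n}$. Two remarks, however, will tighten the argument considerably.

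First, your mid-proof worry about the terms $(\sqrt d\,2^{-n}/T)\delta_{j-1}$ not summing small is unfounded: they do not need to be summed termwise, since $\sum_{j=2}^k\delta_{j-1}=\tau_{k-1}\le T$, so the whole sum is at most $\sqrt d\,2^{-n}$. This means your first telescoping decomposition already closes; the pivot to an ``affine contraction'' viewpoint is unnecessary (and slightly off: the factor $(1-\sqrt d\,2^{-n}/T)$ multiplies the increment, not $\hat\tau_{k-1}$).

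Second, the paper organizes the computation more cleanly by proving the sharper intermediate estimate $|\hat\tau_k-\tau_{k-1}|\le\sqrt d\,2^{-n}$ directly, via separate upper and lower bounds: for the upper bound one simply drops the rounding and uses $\hat\tau_k<\sqrt d\,2^{-n}+(1-\sqrt d\,2^{-n}/T)\tau_{k-1}\le\tau_{k-1}+\sqrt d\,2^{-n}$; for the lower bound one uses that $\sum_{j\ge2}\sqrt d\,2^{-(n+j)}\le\sqrt d\,2^{-n}$ exactly cancels the initial $\hat\tau_1=\sqrt d\,2^{-n}$, leaving $\hat\tau_k\ge(1-\sqrt d\,2^{-n}/T)\tau_{k-1}>\tau_{k-1}-\sqrt d\,2^{-n}$. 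Then the stated bound $\sqrt d\,2^{-n+1}$ follows by a single triangle inequality with $|\tau_k-\tau_{k-1}|\le\sqrt d\,2^{-n}$. This avoids the three-term split implicit in your final paragraph (rounding, contraction defect, index shift), which as written would give $2.5\sqrt d\,2^{-n}$ rather than $2\sqrt d\,2^{-n}$ unless you exploit the same cancellation.
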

\begin{proof}

The above definitions yield,
\begin{eqnarray*}
\hat \tau_{M}&=& \hat \tau_1+\sum_{k=2}^M\ [\hat \tau_{k}-
\hat \tau_{k-1}]\\
& <&
 \sqrt{d}\ 2^{-n}+ (1- \sqrt{d}\ 2^{-n}/T)\sum_{k=2}^M\ [ \tau_{k-1}-
\tau_{k-2}] \\
&=& \sqrt{d}\ 2^{-n}+(1- \sqrt{d}\ 2^{-n}/T) [\tau_{M-1}-\tau_0]\\
&< &\sqrt{d}\ 2^{-n}+(1- \sqrt{d}\ 2^{-n}/T)T=T.
\end{eqnarray*}
This proves that
$$
0=\hat \tau_0
< \sqrt{d}\ 2^{-n}=\hat \tau_1<\ldots <\hat \tau_M<   \hat \tau_{M+1}=T.
$$
Moreover, for any $k=2,\ldots,M$,
\begin{eqnarray*}
\hat \tau_{k}&=& \hat \tau_1+\sum_{j=2}^k\ [\hat \tau_{j}-
\hat \tau_{j-1}]\\
& <&
 \sqrt{d}\ 2^{-n}+ (1- \sqrt{d}\ 2^{-n}/T)\sum_{j=2}^k\ [ \tau_{j-1}-
\tau_{j-2}] \\
&=& \sqrt{d}\ 2^{-n}+(1- \sqrt{d}\ 2^{-n}/T) [\tau_{k-1}-\tau_0]
= \tau_{k-1} +  \sqrt{d}\ 2^{-n}(1-\tau_{k-1}/T)\\
& <& \tau_{k-1} +  \sqrt{d}\ 2^{-n}.
\end{eqnarray*}

The definition of $\hat \tau_k$ and
the set $B^{(i)}$,
imply that for any $j=2,\ldots,M$,
$$
\hat \tau_{j}- \hat \tau_{j-1}
\ge \tau_{j-1}- \tau_{j-2} - \sqrt{d}\ 2^{-(n+j)}.
$$
We use this to
estimate  $\hat \tau_{k}$ with $k=2,\ldots,M$,
from below as follows.
\begin{eqnarray*}
\hat \tau_{k}&=& \hat \tau_1+\sum_{j=2}^k\ [\hat \tau_{j}-
\hat \tau_{j-1}] \\
&\ge&
 \sqrt{d}\ 2^{-n}+ (1- \sqrt{d}\ 2^{-n}/T)\sum_{j=2}^k\ [ \tau_{j-1}-
\tau_{j-2} - \sqrt{d}\ 2^{-(n+j)}] \\
&\ge& \sqrt{d}\ 2^{-n}+(1- \sqrt{d}\ 2^{-n}/T) [\tau_{k-1}-\tau_0]- \sqrt{d}\  2^{-n}
\\
&=& \tau_{k-1} - \sqrt{d}\  2^{-n}\tau_{k-1}/T\\
& >& \tau_{k-1} -  \sqrt{d}\ 2^{-n}.
\end{eqnarray*}
Since $\hat \tau_{M+1}=\tau_M=T$, $\hat \tau_1= \sqrt{d}\ 2^{-n},
\tau_0=0$, this proves that
$$
 |\hat \tau_{k}- \tau_{k-1} | \le  \sqrt{d}\ 2^{-n} ,\quad
\forall \ k=1,\dots,M+1.
$$
Also, by construction
$ |\tau_{k+1}- \tau_{k} |\le  \sqrt{d}\ 2^{-n} $
for all $k=0,\ldots,M-1$.  These inequalities complete the
proof of the
lemma.
\end{proof}

We now define a map
$\hat \Pi=\hat \Pi^{(n)}:\D\rightarrow \hd$
by,
\begin{equation}
\label{e.Pi}
\hat \Pi_{t}(\Sb):=
\sum_{k=0}^{M-1} \pi^{(n+k)}( \Sb_{\tau_k})\ \chi_{[\hat\tau_{k},\hat\tau_{k+1})}(t)
+ \pi^{(n)}(\Sb_{\tau_M})\ \chi_{[\hat\tau_{M},T]}(t),
\end{equation}
where $\pi^{(n)}$ is defined in \reff{e.pi}.

It is clear by the definition of $\pi^{(n)}$,
$\hat \tau_k$'s and Definition
\ref{d.DN}, that $\hat \Pi(\Sb)\in \hd$ for every $\Sb \in \D$.
We also note that $\Sb_{\tau_M}=S_T$
and that $\Sb$ is continuous at $T$.
For comparison, we also define
\begin{eqnarray*}
\check \Pi_{t}(\Sb)&:=&
\sum_{k=0}^{M-2} \pi^{(n+k)}(\Sb_{\tau_k})\ \chi_{[\tau_{k},\tau_{k+1})}(t)
+\pi^{(n)}( \Sb_{\tau_{M-1}})\ \chi_{[\tau_{M-1},T]}(t),\\
\Pi_{t}(\Sb)&:=&
\sum_{k=0}^{M-2} \Sb_{\tau_k}\ \chi_{[\tau_{k},\tau_{k+1})}(t)
+ \Sb_{\tau_{M-1}}\ \chi_{[\tau_{M-1},T]}(t).
\end{eqnarray*}
\begin{lem}
\label{l.pi}
Let $d$ be the Skorokhod metric.
Then, for every $\Sb \in \d$,
$$
d(\Sb,\Pi(\Sb)),\
d(\Pi(\Sb),\check \Pi(\Sb)) \le   \sqrt{d}\ 2^{-n},\ \
d(\check \Pi(\Sb), \hat \Pi(\Sb))  \le    3\sqrt{d}\ 2^{-n}.
$$
Suppose $G$ satisfies Assumption \reff{a.uniformG}.
Then,
$$
\left|G(\Sb) - G(\hat \Pi(\Sb))\right|  \le   3 m_G( 3 \sqrt{d} \ 2^{-n}).
$$
\end{lem}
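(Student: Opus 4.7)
The plan is to handle the three Skorokhod bounds separately (identity reparametrization suffices for the first two, while the third requires a genuine time change) and then deduce the bound on $|G(\Sb)-G(\hat\Pi(\Sb))|$ from the triangle inequality for $d$ together with monotonicity of the modulus $m_G$.

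First, for $d(\Sb,\Pi(\Sb))$ I would take $\lambda=\mathrm{id}$. On each interval $[\tau_k,\tau_{k+1})$, $k=0,\dots,M-2$, $\Pi_t(\Sb)=\Sb_{\tau_k}$, so \reff{e.Sest} gives $|\Sb_t-\Pi_t(\Sb)|\le\sqrt d\,2^{-n}$; on $[\tau_{M-1},T]$, $\Pi_t(\Sb)=\Sb_{\tau_{M-1}}$ and the bound \reff{e.Sest} together with continuity of $\Sb$ at $T$ covers the closed interval. Next, for $d(\Pi(\Sb),\check\Pi(\Sb))$ I again take $\lambda=\mathrm{id}$: on each interval $\check\Pi_t=\pi^{(n+k)}(\Sb_{\tau_k})$ while $\Pi_t=\Sb_{\tau_k}$, and coordinate-wise rounding gives $|\pi^{(m)}(x)-x|\le\sqrt d\,2^{-m}\le\sqrt d\,2^{-n}$ for any $m\ge n$.

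The substantive step is $d(\check\Pi(\Sb),\hat\Pi(\Sb))$. Here the two step functions carry the same values $\pi^{(n+k)}(\Sb_{\tau_k})$ on the corresponding pieces, but $\check\Pi$ jumps at $\tau_1,\dots,\tau_{M-1}$ while $\hat\Pi$ jumps at $\hat\tau_1,\dots,\hat\tau_M$, so I would choose $\lambda\in\Lambda[0,T]$ to be piecewise linear with $\lambda(\tau_k)=\hat\tau_k$ for $k=0,\dots,M-1$ and $\lambda(T)=T$. Then $|\lambda(t)-t|\le\max_k|\hat\tau_k-\tau_k|\le\sqrt d\,2^{-n+1}$ by Lemma \ref{l.hattau}. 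For $t\in[\tau_k,\tau_{k+1})$ with $k\le M-2$ the values of $\check\Pi_t$ and $\hat\Pi_{\lambda(t)}$ coincide exactly. The only nontrivial interval is the last one: on $[\tau_{M-1},T]$ one has $\check\Pi_t=\pi^{(n)}(\Sb_{\tau_{M-1}})$, while $\hat\Pi_{\lambda(t)}$ is either $\pi^{(n+M-1)}(\Sb_{\tau_{M-1}})$, differing from $\check\Pi_t$ by at most $2\sqrt d\,2^{-n}$ via the rounding estimate, or $\pi^{(n)}(\Sb_T)$, in which case the triangle inequality through $\Sb_{\tau_{M-1}}$ and $\Sb_T$ combined with \reff{e.Sest} (and continuity at $T$) gives at most $3\sqrt d\,2^{-n}$. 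Taking the supremum over $t$ and adding the contribution from $|\lambda(t)-t|$ yields the stated Skorokhod bound.

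For the last assertion, the triangle inequality for $d$ gives
\begin{align*}
|G(\Sb)-G(\hat\Pi(\Sb))|
&\le |G(\Sb)-G(\Pi(\Sb))|+|G(\Pi(\Sb))-G(\check\Pi(\Sb))|+|G(\check\Pi(\Sb))-G(\hat\Pi(\Sb))|\\
&\le m_G(d(\Sb,\Pi(\Sb)))+m_G(d(\Pi(\Sb),\check\Pi(\Sb)))+m_G(d(\check\Pi(\Sb),\hat\Pi(\Sb))),
\end{align*}
and since $m_G$ may be replaced by its nondecreasing envelope without loss, each of the three summands is bounded by $m_G(3\sqrt d\,2^{-n})$, giving the factor $3$. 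The main obstacle is simply bookkeeping on the last interval where the combinatorics of the two grids disagree; all three reparametrization choices are otherwise routine.
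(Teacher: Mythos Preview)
Your approach is essentially the paper's: identity reparametrization for the first two bounds, a piecewise-linear time change for the third, and then the triangle inequality together with uniform continuity for the final estimate. The first two bounds and the last assertion are handled exactly as in the paper.

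There is, however, a small arithmetic gap in your third bound that prevents you from reaching the stated constant $3\sqrt d\,2^{-n}$. Your time change $\lambda$ anchors only at $\tau_0,\dots,\tau_{M-1},T$, so on the final interval $[\tau_{M-1},T]$ the image $\lambda(t)$ may land on either side of the extra jump of $\hat\Pi$ at $\hat\tau_M$. You correctly observe that the value discrepancy there can be as large as $3\sqrt d\,2^{-n}$ (the case $\hat\Pi_{\lambda(t)}=\pi^{(n)}(\Sb_T)$), while Lemma~\ref{l.hattau} gives the time discrepancy up to $2\sqrt d\,2^{-n}$; the sum at a single point can therefore reach $5\sqrt d\,2^{-n}$, not $3\sqrt d\,2^{-n}$. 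The sentence ``adding the contribution from $|\lambda(t)-t|$ yields the stated Skorokhod bound'' does not go through as written.

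The paper avoids this by inserting one additional node: it takes $\Lambda(\hat\tau_M)=(\tau_{M-1}+T)/2$ (working in the other direction, $\Lambda(\hat\tau_k)=\tau_k$), so that on all of $[\hat\tau_{M-1},T]$ the value error stays $\le\sqrt d\,2^{-n}$, which combined with the time error $\le 2\sqrt d\,2^{-n}$ gives exactly $3\sqrt d\,2^{-n}$. If you add the corresponding anchor to your $\lambda$ the same improvement is available. For the downstream use of the lemma (Corollary~\ref{c.main}) the precise constant is immaterial, so your argument is adequate for the paper's purposes even as it stands; it just does not prove the inequality with the exact constant in the statement.
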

\begin{proof}
In view of \reff{e.Sest}, we have,
\begin{eqnarray*}
d(\Sb,\Pi(\Sb)) &\le &
\| \Sb-\Pi(\Sb)\|_\infty\\
&=&
\max_{k=0,\ldots,M-1} \sup\{
|\Sb_t - \Sb_{\tau_k}|\ :\
t \in [\tau_k, \tau_{k+1})\ \}
\vee
|\Sb_T - \Sb_{\tau_{M-1}}|\\
&\le& \sqrt{d} 2^{-n} .
\end{eqnarray*}
Next we estimate directly that
$$
d(\Pi(\Sb),\check \Pi(\Sb)) \le
\| \Pi(\Sb)-\check \Pi(\Sb)\|_\infty
\le \sup_{x\in \R_+^d, \ k \ge 0}\ |\pi^{(n+k)}(x)-x| \le \sqrt{d} \ 2^{-n}.
$$
Define $\Lambda :[0,T] \to [0,T]$ by $\Lambda(0)=0$,
$\Lambda(\hat \tau_{k})= \tau_k$ for $k=1,\ldots,M-1$,
$$
\Lambda(\hat \tau_{M})= [\tau_{M-1}+T]/2,\quad
\Lambda(\hat \tau_{M+1})=\Lambda(T)= \tau_M=T,
$$
and to be
piecewise linear at other points.  Then, it  is clear that
$\Lambda$ is an increasing function and
$$
\check \Pi_{\Lambda(t)}(\Sb)= \hat \Pi_t(\Sb),\quad
\forall \ t \in[0,\hat \tau_{M-1}).
$$
Moreover, for $t \in [\hat \tau_{M-1},T]$,
$$
\check \Pi_{\Lambda(t)}(\Sb)= \pi^{(n)}(\Sb_{\tau_{M-1}}).
$$
Hence, by \reff{e.Sest} and the continuity of $\Sb$ at $T$,
$$
\sup_{ t \in [0,T]} \{
\ \left|  \check \Pi_{\Lambda(t)}(\Sb)
- \hat \Pi_t(\Sb) \right|\ \}=
\sup_{ t \in [\hat \tau_{M-1},T]} \{
\ \left|  \check \Pi_{\Lambda(t)}(\Sb)
- \hat \Pi_t(\Sb) \right|\ \}
\le \sqrt{d}\ 2^{-n}.
$$

We now use the above estimate together with  Lemma \ref{l.hattau}
and the above $\Lambda$ in the definition of the
Skorokhod metric.  The result is
\begin{eqnarray*}
d(\check \Pi(\Sb), \hat \Pi(\Sb)) &\le&
\sup_{t\in [0,T]}\ \{ |\check \Pi_{\Lambda(t)}(\Sb)
-  \hat \Pi_{t}(\Sb)| +
|\Lambda(t)-t|\}\\
& =&  \sqrt{d}\ 2^{-n} +
\max_{k=1,\ldots,M-1}\{
|\hat \tau_{k+1}- \tau_k|\  \}
\le  \sqrt{d}\ 2^{-n} + \sqrt{d}\ 2^{-n+1}.
\end{eqnarray*}

Suppose $G$ satisfies Assumption
\ref{a.uniformG}.
We now use the above  estimates
 to obtain
\begin{eqnarray*}
|G(\Sb)-G(\hat \Pi(\Sb))| &\le&
|G(\Sb)-G(\Pi(\Sb))| +
|G(\Pi(\Sb))-G(\check \Pi(\Sb))| \\
&&+
|G(\check \Pi(\Sb))-G(\hat \Pi(\Sb))| \\
&\le &2m_G(\sqrt{d}2^{-n}) + m_G( 3 \sqrt{d}\ 2^{-n})\\
&\le&3 m_G( 3 \sqrt{d}\ 2^{-n}).
\end{eqnarray*}

\end{proof}

We are ready to define the lift.
Let $\hat\phi=(\hat g,\hat\gamma)$
be a semi-static portfolio in the sense of
Definition \ref{d.probabilistic}. Define
a portfolio $\phi:=\Psi(\hat\phi):=(g,\gamma)$
for the original problem by
\begin{eqnarray}
\nonumber
g(x)&:=&\hat g\left(\pi^{(n)}(x)\right), \ \ x \in \R_+^d,\\
\label{e.lift}
\gamma_t(\Sb)& := &
\sum_{k=0}^{M-1} \hat\gamma_{\hat\tau_{k+1}(\Sb)}( \hat \Pi(\Sb)) \
\chi_{(\tau_k(\Sb),\tau_{k+1}(\Sb)]}(t), \ \ t\in [0,T].
\end{eqnarray}
Observe that by definition $\gamma_0(\Sb)=0$.

The following lemma provides
the important  properties of the above mapping.

\begin{lem}
\label{l.lift}
For  a semi--static portfolio
$\hat\phi=(\hat g,\hat\gamma)$
in the sense of
Definition \ref{d.probabilistic} and let
$\phi=(g,\gamma)$ be defined as in \reff{e.lift}.
Then,
$\phi$ is admissible in sense defined
in Definition \ref{d.admissible} and
has the following properties,
\begin{eqnarray*}
\int_{\R_+^d}\ g d\mu&=& \int_{A^{(n)}} \hat g d\hat \mu,\\
\left|\int_{0}^T \hat\gamma_u(\hat \Pi(\Sb))d\hat \Pi_u(\Sb)\right.
&-&\left.\int_{0}^T \gamma_u(\Sb) d\Sb_u\right|
\leq   \sqrt{d}\ n2^{-n+1},\quad
\forall \ \Sb\in \d.
\end{eqnarray*}
\end{lem}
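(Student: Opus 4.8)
The plan is to verify the three assertions in turn, leaving the only substantial work to the last one. The measure identity is immediate: $\hat\mu$ is by construction the push-forward of $\mu$ under the quantisation map $\pi^{(n)}$ of \reff{e.pi} and $g=\hat g\circ\pi^{(n)}$, so $\int_{\R_+^d}g\,d\mu=\int_{A^{(n)}}\hat g\,d\hat\mu$ is just the change-of-variables formula.

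Next I would check that $\phi=(g,\gamma)$ is admissible in the sense of Definition \ref{d.admissible}, i.e.\ that $g\in\L^1(\R_+^d,\mu)$, that $\gamma$ is left continuous, progressively measurable and of bounded variation, and that $\int\gamma_u\,d\Sb_u$ is a $\Q$-supermartingale for every $\Q\in\M_\mu$. Boundedness of $\hat g$ gives $g$ bounded, hence $\mu$-integrable. For a fixed path $\Sb$, the process $\gamma$ in \reff{e.lift} is a step function that vanishes at $0$ and is constant on each left-open interval $(\tau_k,\tau_{k+1}]$; since $M=M^{(n)}(\Sb)<\infty$ and $|\gamma|\le n$, it is left continuous and of finite variation. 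Progressive measurability I would check directly through \reff{e.adapted}: the $\tau_k$'s are stopping times, each $\hat\tau_{k+1}$ is a deterministic function of $\tau_0,\dots,\tau_k$, the restriction of $\hat\Pi(\Sb)$ to $[0,\hat\tau_{k+1})$ is determined by $\Sb|_{[0,\tau_k]}$ alone, and predictability of $\hat\gamma$ makes $\hat\gamma_{\hat\tau_{k+1}}(\hat\Pi(\Sb))$ a function of that restriction; hence on the set $\{\tau_k<t\le\tau_{k+1}\}$ the value $\gamma_t(\Sb)$ depends only on $\Sb|_{[0,t]}$. For the supermartingale property I would appeal to Remark \ref{r.super martingale}: as $\mu$ satisfies only \reff{a.p}, it is enough to bound $\int_0^t\gamma_u(\Sb)\,d\Sb_u$ from below by a constant, uniformly in $t$ and $\Sb$; this will fall out of the computation below, using the admissibility of $\hat\phi$ together with the fact that $\P$ charges every point of $\hd$, so that $\int_0^s\hat\gamma_u\,d\hat\Pi_u(\Sb)\ge -C$ for every $\Sb$ and every $s\in[0,T]$.

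The core is the last estimate. Both integrals are pathwise Lebesgue--Stieltjes integrals, since $\gamma(\Sb)$ and $\hat\Pi(\Sb)$ are of bounded variation, and both collapse to finite jump sums. Integration by parts turns the left-continuous step integrand into
$$
\int_0^T\gamma_u(\Sb)\,d\Sb_u=\sum_{k=1}^M\hat\gamma_{\hat\tau_k}(\hat\Pi(\Sb))\,\bigl(\Sb_{\tau_k}-\Sb_{\tau_{k-1}}\bigr),
$$
while $\hat\Pi(\Sb)$ is pure-jump with jumps exactly at $\hat\tau_1,\dots,\hat\tau_M$, so, writing $p_k:=\pi^{(n+k)}(\Sb_{\tau_k})$ for $0\le k\le M-1$ and $p_M:=\pi^{(n)}(\Sb_{\tau_M})$,
$$
\int_0^T\hat\gamma_u(\hat\Pi(\Sb))\,d\hat\Pi_u(\Sb)=\sum_{k=1}^M\hat\gamma_{\hat\tau_k}(\hat\Pi(\Sb))\,\bigl(p_k-p_{k-1}\bigr).
$$
Subtracting and setting $e_k:=p_k-\Sb_{\tau_k}$ reduces the difference to $\sum_{k=1}^M\hat\gamma_{\hat\tau_k}(\hat\Pi(\Sb))\,(e_k-e_{k-1})$. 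Now $e_0=\pi^{(n)}(\Sb_0)-\Sb_0=0$ because $\Sb_0=(1,\dots,1)\in A^{(n)}$, while $|e_k|\le\sqrt d\,2^{-(n+k)}$ for $1\le k\le M-1$ and $|e_M|\le\sqrt d\,2^{-n}$; an Abel summation together with $|\hat\gamma|\le n$ and $\sum_{k\ge1}2^{-k}=1$ then bounds the difference by $\sqrt d\,n\,2^{-n+1}$. Running the same computation up to an intermediate time $t\in(\tau_{j-1},\tau_j]$ (producing one extra boundary term $\hat\gamma_{\hat\tau_j}(\Sb_t-\Sb_{\tau_{j-1}})$, of modulus at most $n\sqrt d\,2^{-n}$ by \reff{e.Sest}) compares $\int_0^t\gamma_u(\Sb)\,d\Sb_u$ with $\int_0^{\hat\tau_{j-1}}\hat\gamma_u\,d\hat\Pi_u(\Sb)\ge -C$ up to a constant error, which yields the uniform lower bound promised above.

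I expect the main obstacle to be the bookkeeping around the pathwise integrals: one has to make sure that for a left-continuous bounded-variation step integrand and a \cad (here pure-jump) integrator, the integral defined through integration by parts is indeed the naive jump sum $\sum_k\gamma_{\tau_k}\,\Delta\Sb_{\tau_k}$, with the right conventions at the endpoints $t=0$ (where $\gamma$ jumps off $0$) and $t=T$ (where $\Sb$ and $\hat\Pi(\Sb)$ are continuous, so there is no terminal jump), and that the two sums match index by index so that the jump times $\hat\tau_k$ disappear from the comparison. Once the two representations are in place the remaining estimate is the elementary Abel-summation bound above; the progressive-measurability verification, while routine, also needs some care with the way $\tau_{k+1}$ depends on $\Sb|_{[0,t]}$.
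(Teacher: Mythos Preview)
Your proposal is correct and follows essentially the same route as the paper: the measure identity by change of variables, progressive measurability via \reff{e.adapted} using that $\hat\tau_{k+1}$ and $\hat\Pi|_{[0,\hat\tau_{k+1})}$ are functions of $\Sb|_{[0,\tau_k]}$, the two integrals reduced to matching jump sums, and the difference controlled through the quantisation errors $e_k$. The only minor distinction is that the paper bounds $\sum_k\hat\gamma_{\hat\tau_k}(e_k-e_{k-1})$ by a crude triangle inequality rather than Abel summation, and is in fact slightly careless about the terminal term (treating $p_M$ as though it were $\pi^{(n+M)}(\Sb_{\tau_M})$ rather than $\pi^{(n)}(\Sb_{\tau_M})$); your handling of that index is cleaner, though the resulting constant may come out as $3$ rather than $2$, which is immaterial for the application.
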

\begin{proof}

Using the definition of $\hat \mu$ and $g$,
we directly calculate that
\begin{eqnarray*}
\int_{\R_+^d}\ g d\mu&=&  \sum_{m \in \N^d} \hat g\left(m2^{-n}\right)
\mu\left(\{ x\ :\ \pi^{(n)}(x)=m 2^{-n}\}\right)\\
&=&  \sum_{m \in \N^d} \hat g\left(m2^{-n}\right)
\hat \mu\left(\{m 2^{-n}\}\right)
=\int_{A^{(n)}} \hat g d\hat \mu.
 \end{eqnarray*}

Since $\hat \phi$ is bounded
by definition, the admissibility of $\phi$
would follow if
$\gamma$ is progressively measurable.
We show this
by verifying \reff{e.adapted}. Towards this goal,
let $\Sb, \tilde \Sb \in \D$ and $t\in [0,T]$
be such that
$\Sb_u= \tilde \Sb_u$ for all $u\leq t$.
We have to show that
$\gamma_t(\Sb) = \gamma_t(\tilde \Sb).$

Since $\gamma_0(\Sb)=\gamma_0(\tilde \Sb)=0$,
we may assume that  $t >0$.
Let $0 \le k_t(\Sb)$ be the integer such that
$t\in (\tau_{k_t}(\Sb),\tau_{k_t+1}(\Sb)].$
Since by hypothesis
$\Sb$ and $\tilde \Sb$ agree on $[0,t]$,
their jump times
up to time $t$ also agree.  In particular,
$k_t(\Sb)=k_t(\tilde \Sb)=:k_t$ and
$$
\tau_i(\Sb)=\tau_i(\tilde \Sb) <t \ \ \mbox{and} \ \
\Sb_{\tau_i(\Sb)}=\tilde \Sb_{\tau_i(\tilde \Sb)},
\quad
\forall \ i=1,\ldots, k_t.
$$
Since for any $k \ge 0$, $\hat \tau_{k+1}$ is defined
directly by $\tau_1,\ldots,\tau_k$, we also conclude that
$$
\hat \tau_i(\Sb)= \hat \tau_i(\tilde \Sb), \quad
\forall \ i =0,1,\ldots,k_t+1.
$$
Set $\theta:=\hat \tau_{k_t+1}(\Sb)=\hat \tau_{k_t+1}(\tilde \Sb)$
so that
$$
\gamma_t(\Sb)= \hat \gamma_\theta(\hat \Pi(\Sb))\quad{\mbox{and}}
\quad
\gamma_t(\tilde \Sb)= \hat \gamma_\theta(\hat \Pi(\tilde \Sb)).
$$
Since, $\hat \gamma$ is predictable, to prove
$\gamma_t(\Sb)=\gamma_t(\tilde \Sb)$
it suffices to show that
$$
\hat \Pi_u(\Sb)= \hat \Pi_u(\tilde \Sb), \quad \forall \ u < \theta.
$$

By the definition of $\hat \Pi$, for any $u<\theta$
there exists an integer $k \le k_t$ (same for both $\Sb$ and
  $\tilde \Sb$) so that
$$
\hat \Pi_u(\Sb) =  \pi(\Sb_{\tau_k}) \quad{\mbox{and}}
 \quad
  \hat \Pi_u(\tilde \Sb) =  \pi(\tilde \Sb_{\tau_k}).
  $$
 Now recall that $\Sb$ and $\tilde \Sb$ agree on
  $[0,t]$ and $\tau_k \le \tau_{k_t} <t$.  Hence,
  $ \Sb_{\tau_k}= \tilde  \Sb_{\tau_k}$ and
  consequently $ \hat \Pi_u(\Sb)=  \hat \Pi_u(\tilde \Sb)$.
  This proves that $\gamma$ is progressively measurable.

We continue by estimating the difference of the two integrals.
In view of the definitions, we have the following representations
for the stochastic integrals,
$$
\int_{0}^T \gamma_u(\Sb) d\Sb_u
=\sum_{k=1}^M \hat\gamma_{\hat\tau_{k}(\Sb)}(\hat \Pi(\Sb))
\left(\Sb_{ \tau_{k}(\Sb)}-
\Sb_{\tau_{k-1}(\Sb)}\right)
$$
and
$$
\int_{0}^T \hat\gamma_u(\hat \Pi(\Sb))d\hat \Pi_u(\Sb)
=
\sum_{k=1}^{M} \hat\gamma_{\hat\tau_{k}(\Sb)}(\hat \Pi(\Sb))
\left(\pi^{(n+k)}(\Sb_{\tau_{k}(\Sb)})-
\pi^{(n+k-1)}(\Sb_{ \tau_{k-1}(\Sb)})\right).
$$
Set
$$
\cI:=\int_{0}^T \hat\gamma_u(\hat \Pi(\Sb))d\hat \Pi_u(\Sb)
-\int_{0}^T \gamma_u(\Sb) d\Sb_u.
$$
Since the portfolio $\hat \gamma$ is bounded by ${n}$,
we have the following estimate,
$$
|\cI|  \le
2 \|\hat \gamma\|_\infty\
\sum_{k=1}^M
\left|  \pi^{(n+k)}(\Sb_{\tau_{k}(\Sb)})-
\Sb_{\tau_{k}(\Sb)}\right|
  \le   2 n \ \sum_{k=1}^M
\sqrt{d}\ 2^{-(n+k)} \le \sqrt{d}\ n2^{-n+1}.
$$
In view of the above results and the construction,
$g$ is bounded and therefore, $g \in \L^1(\R_+^d;\mu)$.
Moreover, $\gamma$ is shown to be progressively
measurable and for \\
$t\in[\tau_k,\tau_{k+1})$
\begin{eqnarray*}
\int_0^t \gamma_u(\Sb) d\Sb_u &\ge&
 \int_{0}^{\hat\tau_k} \hat\gamma_u(\hat \Pi(\Sb))d\hat \Pi_u(\Sb)
- 2 \|\hat \gamma\|_\infty
\sum_{k=1}^M
\left| \pi^{(n+k)}(\Sb_{\tau_{k}(\Sb)})-
\Sb_{\tau_{k}(\Sb)}\right|-\frac{n}{2^n}\\
&\geq& -C-\sqrt{d}\ n2^{-n+1}-\frac{n}{2^n},
\end{eqnarray*}
 where the last inequality follows
 from the fact that $\hat \gamma$ is admissible
 in the sense of Definition \ref{d.probabilistic}.
 Hence, the stochastic integral is bounded
 from below and consequently is a $\Q$ super-martingale
 for every $\Q \in \M_\mu$.  These arguments
 imply that the lifted portfolio $(g,\gamma)$ is
 admissible.
\end{proof}

The above lifting result provides an immediate
connection between $V(G)$ and $V^{(n)}(G)$.

\begin{cor}
\label{c.main}
Under the hypothesis of Theorem \ref{t.main},
 the minimal super-replication
costs satisfy
$$
V(G) \le V^{(n)}(G) + \sqrt{d} \ n2^{-n+1}+ 3 m_G(3\sqrt{d}\ 2^{-n}).
$$
In particular,
$$
V(G) \le \liminf_{n \to \infty} \ V^{(n)}(G).
$$
\end{cor}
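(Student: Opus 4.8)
Here is a plan of proof for Corollary~\ref{c.main}.

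The plan is to transfer a near-optimal strategy from the $n$-th discrete market to the original market via the lifting map $\Psi$ of subsection~\ref{ss.lift}, and then to finance the small deterministic hedging error introduced by the lift with a static position in the cash option $h_{cash}$.

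Concretely, fix $\ve>0$ and choose a probabilistic semi-static portfolio $\hat\phi=(\hat g,\hat\gamma)$ as in Definition~\ref{d.probabilistic} that is admissible, $\P$-super-replicating, and $\ve$-optimal, so that $\int_{A^{(n)}}\hat g\,d\hat\mu\le V^{(n)}(G)+\ve$. Since $\P$ charges every point of the countable set $\hd$, the $\P$-a.s. inequality \reff{3.1} holds for every $\hS\in\hd$, and in particular for $\hS=\hat\Pi(\Sb)$ with $\Sb\in\D$ arbitrary:
$$
\hat g(\hat\Pi_T(\Sb))+\int_0^T\hat\gamma_u(\hat\Pi(\Sb))\,d\hat\Pi_u(\Sb)\ \ge\ G(\hat\Pi(\Sb)).
$$
I would then set $(g,\gamma):=\Psi(\hat\phi)$ as in \reff{e.lift}. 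By Lemma~\ref{l.lift} this portfolio is admissible in the sense of Definition~\ref{d.admissible}, satisfies $\int g\,d\mu=\int\hat g\,d\hat\mu$, and its stochastic integral stays within $\sqrt d\,n2^{-n+1}$ of the one along $\hat\Pi(\Sb)$; moreover $g(\Sb_T)=\hat g(\pi^{(n)}(\Sb_T))=\hat g(\hat\Pi_T(\Sb))$. Feeding these facts and the estimate $|G(\Sb)-G(\hat\Pi(\Sb))|\le 3m_G(3\sqrt d\,2^{-n})$ from Lemma~\ref{l.pi} into the displayed inequality yields, for every $\Sb\in\D$,
$$
g(\Sb_T)+\int_0^T\gamma_u(\Sb)\,d\Sb_u\ \ge\ G(\Sb)-c_n,\qquad c_n:=\sqrt d\,n2^{-n+1}+3m_G(3\sqrt d\,2^{-n}).
$$

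To close the argument, I would add the constant $c_n$ to $g$: since $h_{cash}\equiv 1\in\cH$ with $\cL(h_{cash})=\int 1\,d\mu=1$, the pair $(g+c_n,\gamma)$ is still admissible, it super-replicates $G$, and its cost equals $\int\hat g\,d\hat\mu+c_n$. Hence $V(G)\le V^{(n)}(G)+\ve+c_n$, and letting $\ve\downarrow 0$ gives the first inequality. For the second, note $n2^{-n}\to 0$ and $m_G(3\sqrt d\,2^{-n})\to 0$ as $n\to\infty$ (recall $m_G$ is a modulus of continuity), so $c_n\to 0$ and $\liminf_{n\to\infty}V^{(n)}(G)\ge\liminf_{n\to\infty}(V(G)-c_n)=V(G)$. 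The main obstacle is not the chain of estimates, which is entirely supplied by Lemmas~\ref{l.hattau}, \ref{l.pi} and~\ref{l.lift}, but rather making precise the conceptual step that the $\P$-a.s. super-replication in the discrete model is genuinely pathwise along the image of $\hat\Pi$ — this is where positivity of $\P$ on $\hd$ enters — together with the bookkeeping that the deterministic shortfall $c_n$ can be covered by a static position in $h_{cash}$ without affecting admissibility or changing the cost by more than $c_n$.
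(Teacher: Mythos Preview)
Your proposal is correct and follows essentially the same route as the paper: lift a super-replicating portfolio via $\Psi$, apply the pathwise super-replication at $\hat\Pi(\Sb)\in\hd$, and combine the integral estimate of Lemma~\ref{l.lift} with the $G$-estimate of Lemma~\ref{l.pi}. The only cosmetic difference is that the paper phrases the final bookkeeping as ``$\phi$ super-replicates $\check G:=G-c_n$, hence $V(\check G)\le V^{(n)}(G)$ and $V(G)=V(\check G)+c_n$'', whereas you shift $g$ by $c_n$ instead; these are equivalent.
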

\begin{proof}

Let $\hat \phi$
 and $\phi$ be as in Lemma \ref{l.lift}.
 Further assume that $\hat \phi$ is super-replicating
 $G$ on $\hd$.  Let $\Sb \in \d$.  Then,
 $\hat \Pi(\Sb) \in \hd$ and
 $$
\hat g (\hat \Pi_T(\Sb))
+  \int_0^T \hat \gamma_t(\hat \Pi(\Sb)) d \hat \Pi(\Sb)_t
 \ge G(\hat \Pi(\Sb)).
 $$

By  definition of $g$ and $\hat \Pi$,
$$
g(\Sb_T)=\hat g(\pi^{(n)}(\Sb_T))
= \hat g(\hat \Pi_T(\Sb)).
$$
Then, in view of  Lemma \ref{l.lift},
 \begin{eqnarray*}
  g (\Sb_T)
 +  \int_0^T \gamma_t(\Sb) d \Sb_t
  &\ge &   \hat g (\hat \Pi_T(\Sb)) + \int_0^T \hat \gamma_t(\hat \Pi(\Sb)) d \hat \Pi(\Sb)_t
-\sqrt{d}\ n2^{-n+1}\\
&\ge& G(\hat \Pi(\Sb)) -\sqrt{d}\ n2^{-n+1}\\
&\ge &
\check G(\Sb):= G(\Sb)-\sqrt{d}\ n2^{-n+1} -3m_G(3 \sqrt{d}\ 2^{-n}).
\end{eqnarray*}
Hence, $\phi$ super-replicates $\check G$.  This
implies that $\int g d\mu \ge V(\check G)$.
Since by construction $\int g d\mu = \int \hat g d\hat \mu$,
and since above inequality holds for every
super-replicating $\hat \phi$,
we conclude that
$V(\check G) \le V^{(n)}(G) $.  It is also clear that
\begin{eqnarray*}
V(G) &= &V(\check G) +\sqrt{d}\ n2^{-n+1}+ 3m_G(3 \sqrt{d}\ 2^{-n})\\
&\le&  V^{(n)}(G) + \sqrt{d}\ n2^{-n+1}+3m_G(3 \sqrt{d}\ 2^{-n}).
\end{eqnarray*}
\end{proof}

\begin{rem}
\label{r.boundedH}
{\rm{ Observe that since $\hat g$ is bounded,
so is the lifted static hedge $g$.  Hence in the
Definition \ref{d.admissible}, one may use
the class
$$
\tilde \cH := \{\ h(\Sb)=g(\Sb_T)\ :\
g \in \L^\infty(\R_+^d;\mu)\ \}.
$$
Moreover, it is not difficult to construct
$g$ so that it agrees with $\hat g$ on $A^{(n)}$
and is continuous. This construction would enable us to
consider the even smaller class $\check \cH$
with bounded and continuous $g$'s.

Moreover, in Definition \ref{d.probabilistic}
the stochastic integral $\hat \gamma_u d\Sb_u$
is assumed to be bounded from below by
a constant $C$.  In view of the above Lemma,
also the lifted portfolio satisfies that
the path wise integral $\int \gamma_u \Sb_u$
is also bounded from below, possibly
with a slightly larger constant.
This shows that in Definition \ref{d.admissible}
it would be sufficient to consider
$\gamma$'s so that the integrals
are bounded from below, instead of assuming
that their stochastic equivalents are $\Q$
super-martingales for every $ \Q \in \M_\mu$.

The above Corollary is the only
place in the proof of the upper
bound (under the hypothesis of
Theorem \ref{t.main})
where the exact definition of
admissibility is important.
Hence the above discussions
and Remark \ref{r.super-martingale}  show
that for $G$
satisfying the hypothesis of
Theorem \ref{t.main},
the super-replication cost of $G$
would be same
if we one considers the described smaller class
of admissible strategies $(g,\gamma)$.

}}
\qed
\end{rem}

\subsection{Analyzis of $V^{(n)}(G)$}
In view of the previous Corollary, to complete
the proof of \reff{e.inequality}, we need to
show the following inequality,
$$
\limsup_{n\rightarrow\infty}V^{(n)}(G)
\leq \sup_{\mathbb Q\in\mathbb{M}_\mu}
\E_{\mathbb {Q}}\left[G(\mathbb S)\right].
$$
This is done in two steps.  We first
use a standard min-max theorem and
the  constrained duality result of \cite{FK}
to get a dual representation for $V^{(n)}(G)$
(in fact we obtain an upper bound).  We then
analyze this dual by probabilistic techniques.

We start with a definition.

\begin{dfn}
\label{d.measures}
Let $\cP$ be
the set of all probability measures $\mathbb Q$
which are supported on $\hd=\D^{(n)}[0,T]$.
For $c>0$
let $\cM(n,c) \subset \cP$
be the set of all probability measures
that has the following properties,
$$
\sum_{m \in \N^d}\ \left|\Q\left(\hat\Sb_T=m 2^{-n} \right) -\hat \mu
\left(\left\{m  2^{-n} \right\}\right)\right|\leq \frac{c}{n}
$$
and
\begin{equation}
\label{e.Qn2}
\E_\Q \left[
\sum_{k=1}^{M+1}
\left|\E_\Q(\ \hat\Sb_{\hat{\tau}_k}|{\hat{\cF}}_{\hat{\tau}_k-})-
\hat \Sb_{\hat{\tau}_{k-1}}\right|\ \right]
\leq \frac{c}{n},
 \end{equation}
where as defined before,
$ \hat{\tau}_1({\hat \Sb})<\ldots
<\hat{\tau}_M({\hat \Sb})$
are
the jump times of the piecewise constant
process $\hat \Sb \in \hd$
and $\hat \tau_{0}=0, \ \hat \tau_{M+1}=T$.
\end{dfn}

We refer the reader to page 105 in \cite{P} for the definition
of the $\sigma$-algebra ${\hat{\cF}}_{\hat \tau_j-}$. Indeed, for any stopping
time $\tau \in [0,T]$, ${\hat{\cF}}_{\tau-}$ is defined to be the smallest
$\sigma$-algebra that contains ${\hat{\cF}}_0$ and all sets of the
from $A\cap \{ \tau >t\}$ for all $t \in (0,T]$ and $A \in {\hat{\cF}}_t$.
Clearly, ${\hat{\cF}}{\tau-} \subset {\hat{\cF}}_\tau$ and $\tau $ is
${\hat{\cF}}{\tau -}$ measurable.  Moreover, if $X$ is a predictable
process, then $X_\tau$ is ${\hat{\cF}}{\tau -}$ measurable (Theorem 8,
page 106 \cite{P}).

The following lemma is proved by using the results
of \cite{FK} on hedging under constraints, and applying
a classical min-max theorem.

\begin{lem}
\label{l.upperest}
Suppose that $0\le G\le c$ for some constant $c>0$.
Then,
$$
{V}^{(n)}(G) \leq \left[\sup_{\mathbb Q\in
\cM(c,n)} \E_{\mathbb Q} \left[G({\mathbb S})\right]\right]^{+},
$$
where we set  the right hand side is equals to zero
if $\cM(c,n)$ is empty.
 \end{lem}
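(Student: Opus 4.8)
The plan is to cast the constrained super-hedging problem $V^{(n)}(G)$ on the countable space $\hat\D$ as a linear programming problem and dualize it by combining the F\"ollmer--Kramkov hedging-under-constraints duality with a min-max argument. First I would observe that, since $\hat\D$ is countable and the measure $\P$ charges every point, the $\P$-a.s.\ super-replication requirement \reff{3.1} is equivalent to the pointwise inequality on every $\hat\Sb\in\hat\D$; hence $V^{(n)}(G)$ equals the infimum of $\int\hat g\,d\hat\mu$ over pairs $(\hat g,\hat\gamma)$ with $\hat\gamma$ predictable, bounded by $n$, the integral bounded from below, and $\hat g(\hat\Sb_T)+\int_0^T\hat\gamma\,d\hat\Sb\ge G(\hat\Sb)$ for all $\hat\Sb$. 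The role of $\hat g$ can be absorbed: for fixed $\hat\gamma$ the cheapest admissible $\hat g$ is $\hat g(x)=\sup_{\hat\Sb:\hat\Sb_T=x}\bigl(G(\hat\Sb)-\int_0^T\hat\gamma\,d\hat\Sb\bigr)^+$, so $V^{(n)}(G)$ becomes an infimum over $\hat\gamma$ alone of a functional that is concave (indeed linear up to the positive-part truncation, which we handle by the reduction $0\le G\le c$) — this is where the hypothesis $0\le G\le c$ enters, keeping everything bounded and legitimizing the exchange of sup and inf.

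Next I would introduce the dual variables. A probability measure $\Q$ on $\hat\D$ pairs with a semi-static portfolio via $\E_\Q[G(\hat\Sb)]$, and the constraints ``$\int\hat\gamma\,d\hat\Sb$ is a $\Q$-supermartingale'' and ``$\hat\Sb_T\sim\hat\mu$'' are exactly the polar conditions to the primal constraints. The key technical input is the F\"ollmer--Kramkov theorem \cite{FK}, which for the fixed probability space $(\hat\D,\hat\cF,\P)$ and the convex constraint set $\hat\gamma\in[-n,n]$ gives a dual description of the super-hedging price of a bounded claim in terms of absolutely continuous measures under which the (constrained) gains process is a supermartingale; the support function of $[-n,n]$ produces precisely a penalization of the drift, which is what the quantity $\E_\Q[\sum_k|\E_\Q(\hat\Sb_{\hat\tau_k}\mid\hat\cF_{\hat\tau_k-})-\hat\Sb_{\hat\tau_{k-1}}|]$ in \reff{e.Qn2} measures. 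Applying \cite{FK} for fixed $\hat g$ and then taking the infimum over $\hat g$ (equivalently, adding the marginal constraint on $\hat\Sb_T$, which contributes the term $\sum_m|\Q(\hat\Sb_T=m2^{-n})-\hat\mu(\{m2^{-n}\})|$) yields an upper bound of the form $\sup_\Q\bigl(\E_\Q[G]-\text{penalty}\bigr)$. A min-max theorem (Sion's theorem applied on a suitable compact-convex pairing, using boundedness of $G$ by $c$) interchanges the infimum over portfolios with the supremum over $\Q$; since the drift penalty scales like $2^{-n}$ against portfolios bounded by $n$, only measures in $\cM(c,n)$ — those whose penalties are $O(c/n)$ — survive, and on that set the penalty is dominated and can be dropped, giving ${V}^{(n)}(G)\le[\sup_{\Q\in\cM(c,n)}\E_\Q[G(\hat\Sb)]]^+$.

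The main obstacle I anticipate is the rigorous application of the F\"ollmer--Kramkov duality together with the min-max interchange: one must verify that the constraint set of trading strategies is convex and that the relevant functional spaces have the topological properties (compactness of the dual set, lower/upper semicontinuity of $\Q\mapsto\E_\Q[G(\hat\Sb)]$) needed for Sion's theorem — this is delicate because the set of martingale measures is not compact, and the discontinuous piecewise-constant structure of paths in $\hat\D$ must be reconciled with the predictability of $\hat\gamma$ and the definition of $\hat\cF_{\hat\tau_k-}$. I would handle this by first truncating to finitely many jump times (using that paths in $\hat\D$ with jump times bounded away from $0$ and $T$ and values in a bounded region form a ``nice'' subset, with the tails controlled by $G\le c$ and the marginal constraint), carrying out the min-max on each finite truncation where the dual set is weak-$*$ compact, and then passing to the limit. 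The bookkeeping that identifies the support function of $[-n,n]$ with exactly the expression in \reff{e.Qn2} is the second place where care is required, since the gains process is a sum over the random jump times $\hat\tau_k$ and the conditioning is with respect to $\hat\cF_{\hat\tau_k-}$.
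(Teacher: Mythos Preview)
Your high-level strategy---F\"ollmer--Kramkov constrained duality followed by a min-max interchange---is exactly what the paper does. The execution in the paper is, however, cleaner than what you outline. Rather than absorbing $\hat g$ and attempting a min-max over dynamic strategies $\hat\gamma$ (for which compactness is awkward, prompting your proposed truncation of jump times), the paper first applies F\"ollmer--Kramkov to the claim $G-\hat g(\Sb_T)$ for \emph{fixed} $\hat g$, obtaining
\[
V^{(n)}(G)\le\inf_{\hat g\in\cZ}\sup_{\Q\in\cP}\cG(\hat g,\Q),
\]
and then interchanges $\inf_{\hat g}$ and $\sup_\Q$ using the compactness of $\cZ=\{\hat g:\|\hat g\|_\infty\le n\}$ in the pointwise topology on $\R^{A^{(n)}}$. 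Since $A^{(n)}$ is countable, this product space is locally convex and $\cZ$ is compact by Tychonoff; no truncation of paths is needed. After the interchange, the infimum over $\hat g$ is taken explicitly (a sign function), and the resulting penalty is exactly $n$ times the two quantities in Definition \ref{d.measures}.

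Two points where your sketch would need repair. First, the nontrivial hypothesis for the min-max theorem is concavity of $\cG$ in $\Q$, which reduces to showing $\Q\mapsto\E_\Q|\E_\Q(X\mid\cF)|$ is convex; this is not automatic (the inner conditional expectation depends on $\Q$) and the paper verifies it by a direct computation. Second, your claim that ``the drift penalty scales like $2^{-n}$ against portfolios bounded by $n$'' is incorrect: the F\"ollmer--Kramkov penalty is $n$ times the total drift, and measures with drift or marginal defect exceeding $c/n$ (i.e., $\Q\notin\cM(c,n)$) give $\cG\le\E_\Q[G]-c\le0$ since $0\le G\le c$, so they can be discarded and the positive part taken. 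That is the actual role of the bound $c$; no $2^{-n}$ factor appears in this step.
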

\begin{proof}  We proceed
in several steps.

{\em{Step 1.}} In view of its definition, for any bounded
 function $\hat g$ on $A^{(n)}$, we have
$$
V^{(n)}(G)\leq \cV^{(n)}(G\ominus \hat g)+ \int g d\hat \mu,
$$
where $G\ominus \hat g(\mathbb S):=G(\Sb)-\hat g(\Sb_T)$
and for any bounded measurable real valued function $\xi$ on
$\D$,
$$
\cV^{(n)}(\xi)=\inf\left\{z\in\R:  \ \exists \gamma \ \mbox{such} \ \mbox{that} \
|\gamma|\leq n,  \  z+\int_{0}^T  \gamma_{u} d{\mathbb S}_u\geq \xi,
 \ \P-{a.s.}\right\}
$$
to be the "classical" super--hedging price of the European claim
$\xi$ under the constraint
that absolute value of the number of the stocks in the portfolio is
bounded by $n$. Furthermore, (as usual) we require that
there exists $M>0$ such that
$\int_{0}^t  \gamma_{u} d{\mathbb S}_u\geq -M$,
for every $t\in [0,T]$.

{\em{Step 2.}}
Under any measure $\mathbb Q\in \cP$ the canonical
process $\Sb$ on $\D$ is  piecewise constant with
jump times $0<\hat \tau_1<\ldots <\hat \tau_M<T$.
So it is clear that the canonical process is a $\Q$ semi-martingale.
Moreover, it has the following decomposition,
$\Sb= \M^{\Q}-A^{\Q}$
where
\begin{eqnarray}
\label{e.decomposition}
A^{\Q}_t &= & \sum_{k=1}^{M}
\chi_{[\hat\tau_k,\hat\tau_{k+1})}(t)\
\sum_{j=1}^k \ \left[\Sb_{\hat{\tau}_{j-1}}-
\E_{\Q}(\Sb_{\hat{\tau}_j}|{\hat{\cF}}_{\hat{\tau}_j-})\right], \quad \forall \ t \in [0,T)\\
\nonumber
A^{\Q}_T&:= &\lim_{t\uparrow T} A^{\Q}_t,
\end{eqnarray}
a predictable process of bounded variation and
$\M^{\mathbb Q}_t=A^{\mathbb Q}_t+\mathbb S_t$, $t\in [0,T]$, is
a $\Q$ martingale.
Then, from Example 2.3 and Proposition 4.1 in \cite{FK}
it follows that
$$
\cV^{(n)}(\xi)=\sup_{\Q\in \cP}
\E_\Q \left[\xi-
n \ \sum_{k=1}^M \left|\Sb_{\hat{\tau}_{k-1}}-
\E_Q (\Sb_{\hat{\tau}_j}|{\hat{\cF}}_{\hat{\tau}_j-})\right|
\right].
$$

{\em{Step 3.}}
Set
$$
\cZ:=\{\hat g: A^{(n)}\to \mathbb{R}: \|\hat g\|_\infty \le n \}.
$$
In view of the previous steps,
$$
V^{(n)}(G)\leq \inf_{\hat g \in\cZ} \sup_{\Q\in \cP}\ \cG(\hat g,\Q),
$$
where
$\cG:\cZ\times \cP \rightarrow\R$ is given by
$$
\cG(\hat g,\mathbb Q):=\E_\Q \left[G - n\
\sum_{k=1}^M \left|\E_\Q(\Sb_{\hat{\tau}_k}|{\hat{\cF}}_{\hat{\tau}_k-})-
\Sb_{\hat{\tau}_{k-1}}\right|\right]+
 \int \hat g d\hat \mu-\E_\Q\hat g(\Sb_T).
$$

{\em{Step 4.}}
In this step is to interchange
 the order of the infimum and supremum
 by applying a standard min-max theorem. Indeed,
consider the vector space $\R^{A^{(n)}}$
of all functions $\hat g: A^{(n)} \to \mathbb{R}$ equipped
with the topology of point-wise convergence.
Clearly,
this space is locally convex. Also, since  $A^{(n)}$
is countable, $\cZ$ is a compact subset of $\R^{A^{(n)}}$. The set
$\mathcal{P}_N$ can be
naturally considered as a convex subspace of the vector space
$\R_+^{\hd}$.
In order to apply a min-max theorem,
we also need to show continuity and concavity.

$\mathcal{G}$ is affine in the first variable,
and by the bounded convergence theorem,
it is continuous in this variable.  We claim
that $\mathcal G$ is concave in the second
variable. To this purpose, it is sufficient to show that for
any $k\geq 1$ the map
$$
\Q \to \E_\Q |\E_\Q(\Sb_{\hat{\tau}_k}
|{\hat{\cF}}_{\hat{\tau}_k-})-
\Sb_{\hat{\tau}_{k-1}}|
$$
is convex. Set
$X=\Sb_{\hat{\tau}_k}-\Sb_{\hat{\tau}_{k-1}}$,
${\hat{\cF}}:={\hat{\cF}}_{\hat{\tau}_k-}$ and $Y=\E_\Q(X|{\hat{\cF}})$.
For probability measures $\Q_1, \Q_2$ and $\lambda\in (0,1)$,
set $Y_i=\E_{\mathbb Q_i}(X|\mathcal F)$ and
$\mathbb Q=\lambda \mathbb Q_1+(1-\lambda)\mathbb Q_2$.
Then,
\begin{eqnarray*}
\E_{\mathbb Q}|Y|&=&
\E_{\mathbb Q}(Y\chi_{\{Y>0\}})-\E_{\mathbb Q}(Y\chi_{\{Y<0\}})\\
&=&
\E_\Q(X\chi_{\{Y>0\}})-\E_{\mathbb Q}(X\chi_{\{Y<0\}})\\
&=& \lambda \left(\E_{\mathbb Q_1}(X\chi_{\{Y>0\}})-\E_{\mathbb Q_1}(X\chi_{\{Y<0\}})\right)\\
&&+
(1-\lambda)\left(\E_{\mathbb Q_2}(X\chi_{\{Y>0\}})-\E_{\mathbb Q_2}(X\chi_{\{Y<0\}})\right)\\
&=& \lambda \left(\E_{\mathbb Q_1}(Y_1\chi_{\{Y>0\}})-\E_{\mathbb Q_1}(Y_1\chi_{\{Y<0\}})\right)\\
&&+
(1-\lambda)\left(\E_{\mathbb Q_2}(Y_2\chi_{\{Y>0\}})-\E_{\mathbb Q_2}(Y_2\chi_{\{Y<0\}})\right)\\
&\leq &\lambda \E_{\mathbb Q_1}|Y_1|+(1-\lambda) \E_{\mathbb Q_2}|Y_2|.
\end{eqnarray*}
This yields
the concavity of $\cG$ in the $\Q$-variable.

{\em{Step 5.}}
Next, we apply the min-max theorem, Theorem 45.8 in \cite{STR} to
$\mathcal G$.  The result is,
$$
\inf_{\hat g\in \cZ }\sup_{\Q\in \cP}\cG(\hat g,\Q)
=\sup_{\mathbb Q\in \cP }
\inf_ {\hat g\in \cZ}\cG(\hat g,\Q).
$$
Together with Step 3, we conclude that
$$
{V}^{(n)}(G)\leq
\sup_{\Q\in \cP}\inf_{h\in \cZ }
\cG(\hat g,\Q).
$$

Finally, for
any measure $\Q \in \cP$,
define $h^{\Q}\in \cZ$ by
$$
h^{\Q}\left(m2^{-n}\right)
= c \ sign \left[\Q \left(\{\Sb_T=m2^{-n}\}\right)
-\hat \mu\left(\{\Sb_T=m2^{-n}\}\right)\right],
\ \ m\in\N^d.
$$
Then, by choosing $h^\Q$ in the min-max
formula, we arrive at,
$$
V^{(n)}(G) \leq  \sup_{\Q\in\cP }
\cG(h^{\mathbb Q},\Q).
$$
Moreover,
$$
\int h^\Q d \hat \mu - \E_\Q h^\Q(\Sb_T)
= - n\ \sum_{m \in \N^d}\ \left|\Q\left(\hat\Sb_T=m 2^{-n} \right) -\hat \mu
\left(\left\{m  2^{-n} \right\}\right)\right|.
$$
Hence if $\Q$ does not belong to the
set $\cM(c,n)$, then
$$
\cG(h^{\mathbb Q},\Q) \le  \E_\Q [G(\Sb)] -c.
$$
By hypothesis, $0\le G \le c$
and therefore, $\E_\Q[G(\Sb)] \le c$ and $V^{(n)}(G) \ge 0$.
Hence, we may restrict the maximization
to $\Q \in \cM(c,n)$.
Moreover, if $\cM(c,n)$ is empty,
then we can conclude that $V^{(n)}(G) \le 0$.
\end{proof}

\subsection{Proof of \reff{e.inequality} completed.}

In order to complete the proof of Theorem
\ref{t.main}
it remains to establish the following result.

\begin{lem}
\label{l.final}
Suppose that $0 \le G\le c$ and
satisfies the  Assumption \ref{a.uniformG}. Then
\begin{equation}
\label{e.final}
 \lim\sup_{n\rightarrow\infty}
 \left[\sup_{\Q\in
\cM(c,n)} \E_{\Q} \left[G(\Sb)\right]\right]^{+}\leq
 \sup_{\Q\in\M_\mu} \E_{\Q}\left[G(\Sb)\right].
 \end{equation}

\end{lem}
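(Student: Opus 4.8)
The plan is to pass to a weak limit along a nearly optimal sequence in $\cM(c,n)$. If $\cM(c,n)=\emptyset$ for all large $n$ the left-hand side of \reff{e.final} equals $0$ and the inequality is trivial, since $G\ge 0$ and $\M_\mu\neq\emptyset$ by \reff{a.p}. Otherwise, since $G\ge 0$ we have $[\,\cdot\,]^{+}=\sup_{\Q\in\cM(c,n)}\E_\Q[G(\Sb)]$ whenever $\cM(c,n)\neq\emptyset$, so we may pass to a subsequence (still indexed by $n$) along which $\cM(c,n)\neq\emptyset$ and $\sup_{\Q\in\cM(c,n)}\E_\Q[G(\Sb)]$ converges to the limsup in \reff{e.final}, and pick $\Q_n\in\cM(c,n)$ with $\E_{\Q_n}[G(\Sb)]\ge\sup_{\Q\in\cM(c,n)}\E_\Q[G(\Sb)]-1/n$. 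It then suffices to produce, along a further subsequence, a measure $\Q\in\M_\mu$ with $\E_{\Q_n}[G(\Sb)]\to\E_\Q[G(\Sb)]$.

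First I would record the a priori estimates implied by $\Q_n\in\cM(c,n)$. Writing $\Sb=\M^{\Q_n}-A^{\Q_n}$ as in \reff{e.decomposition}, condition \reff{e.Qn2} says precisely that the total variation of the predictable part $A^{\Q_n}$ on $[0,T]$, namely $\sum_{k}|\E_{\Q_n}(\Sb_{\hat\tau_k}|\hat\cF_{\hat\tau_k-})-\Sb_{\hat\tau_{k-1}}|$, has $\Q_n$-expectation at most $c/n$; in particular the drift is asymptotically negligible. Since $\M^{\Q_n}_0=\Sb_0=(1,\dots,1)$ we get $\E_{\Q_n}[\Sb^{(i)}_T]=1-(\E_{\Q_n}A^{\Q_n}_T)_i\in[1-c/n,1+c/n]$, and as $\Sb^{(i)}_T>0$ this bounds $\E_{\Q_n}|\Sb_T|$ uniformly in $n$, so $\M^{\Q_n}$ is an $\L^1$-bounded martingale and Doob's weak maximal inequality yields the compact containment condition $\lim_{\lambda\to\infty}\sup_n\Q_n(\|\Sb\|>\lambda)=0$. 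Combining this with an Aldous-type estimate for the increments $\Sb_\sigma-\Sb_\tau=(\M^{\Q_n}_\sigma-\M^{\Q_n}_\tau)-(A^{\Q_n}_\sigma-A^{\Q_n}_\tau)$ over stopping times $\tau\le\sigma\le(\tau+\delta)\wedge T$ — where the drift part is controlled by $c/n$ and the martingale part requires that a large jump be unlikely in a short time window — I would obtain tightness of $\{\Q_n\}$ in the Skorokhod space $\D([0,T];\R^d)$, together with the fact that any weak limit is carried by paths continuous at $T$. Passing to a subsequence, $\Q_n\Rightarrow\Q$.

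Next I would identify the limit. Since $G$ is bounded and Skorokhod-continuous by Assumption \ref{a.uniformG}, $\E_{\Q_n}[G(\Sb)]\to\E_\Q[G(\Sb)]$. For the terminal marginal, $\Sb\mapsto\Sb_T$ is continuous at every path continuous at $T$, so $\nu_n:=\Q_n\circ\Sb_T^{-1}\Rightarrow\Q\circ\Sb_T^{-1}$; on the other hand the first condition defining $\cM(c,n)$ gives $|\int f\,d\nu_n-\int f\,d\hat\mu|\le\|f\|_\infty\,c/n$ for bounded $f$, while $\hat\mu\Rightarrow\mu$ as $n\to\infty$ because $|\pi^{(n)}(x)-x|\le\sqrt d\,2^{-n}$, so $\nu_n\Rightarrow\mu$ and hence $\Q\circ\Sb_T^{-1}=\mu$. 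In particular $\int y_i\,d\nu_n\to\int y_i\,d\mu$, so by nonnegativity together with weak convergence and convergence of means, $\{\Sb_T\}$ is uniformly integrable under $\{\Q_n\}$; conditioning on $\hat\cF_t$ and using again $\E_{\Q_n}[\mathrm{Var}(A^{\Q_n})]\le c/n$ propagates uniform integrability to $\{\Sb_t\}$ for every $t$. Finally, for all but countably many $s<t$ and any bounded continuous $\cF_s$-measurable functional $\phi$, the martingale property of $\M^{\Q_n}$ gives $\E_{\Q_n}[(\Sb_t-\Sb_s)\phi]=-\E_{\Q_n}[(A^{\Q_n}_t-A^{\Q_n}_s)\phi]$, which tends to $0$ by \reff{e.Qn2}; the uniform integrability just established lets me pass to the limit, so $\E_\Q[(\Sb_t-\Sb_s)\phi]=0$, and a standard approximation (monotone class) argument in $s,t$ shows $\Sb$ is a $\Q$-martingale. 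Since $\Sb_0=(1,\dots,1)$ $\Q$-a.s. as well, $\Q\in\M_\mu$, whence $\limsup_n\E_{\Q_n}[G(\Sb)]=\E_\Q[G(\Sb)]\le\sup_{\Q'\in\M_\mu}\E_{\Q'}[G(\Sb)]$, which is \reff{e.final}.

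The main obstacle is the tightness step: the jumps of $\Sb$ under $\Q_n$ are not a priori bounded, and only $\L^1$ — not $\L^2$ — control of the martingale part is available, so the modulus-of-continuity (Aldous) estimate needed for Skorokhod tightness, and the accompanying claim that the limit is still supported on paths continuous at $T$, must be drawn from the fine structure of $\hat\D^{(n)}$, i.e. from the lattice refinements $A^{(n+k)}$ and the admissible jump spacings $B^{(n+k)}$, which should prevent large increments from accumulating in short intervals. An alternative that avoids a sharp path-space tightness argument would be to first perturb each $\Q_n$ into a genuine martingale measure $\tilde\Q_n\in\M_{\mu_n}$ with $\mu_n\to\mu$ weakly and $|\E_{\tilde\Q_n}[G(\Sb)]-\E_{\Q_n}[G(\Sb)]|\to 0$, exploiting the smallness of $\E_{\Q_n}[\mathrm{Var}(A^{\Q_n})]$, and then invoke the continuity of $\mu\mapsto\sup_{\Q\in\M_\mu}\E_\Q[G(\Sb)]$ proved in Theorem \ref{t.continuity}.
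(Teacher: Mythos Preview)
Your primary route---tightness of $\{\Q_n\}$ in the Skorokhod space and identification of a weak limit in $\M_\mu$---is not completed, and the obstacle you flag is genuine. With only $\L^1$ control of the martingale part you have no Aldous-type modulus estimate: nothing in the definition of $\cM(c,n)$ or in the lattice structure of $\hat\D^{(n)}$ prevents a $\Q_n$-typical path from making a large jump in an arbitrarily short time interval, so the appeal to ``the fine structure of $\hat\D^{(n)}$'' is not substantiated. Without tightness the subsequent identification steps (continuity of $\Sb\mapsto\Sb_T$, passage to the limit in the martingale test, support on paths continuous at $T$) never get off the ground.

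The alternative you sketch in your last paragraph is exactly what the paper does, and it is short once one sees how to realise the ``perturbation''. One does not perturb $\Q_n$ at all: one simply pushes it forward by the martingale part. With $\Sb=\M^{\Q_n}-A^{\Q_n}$ as in \reff{e.decomposition}, the uniform continuity of $G$ together with $\E_{\Q_n}[\sup_t|A^{\Q_n}_t|]\le c/n$ and $|G|\le c$ gives
\[
\bigl|\E_{\Q_n}[G(\Sb)]-\E_{\Q_n}[G(\M^{\Q_n})]\bigr|\le m_G(n^{-1/2})+c\,\Q_n\!\bigl(\sup_t|A^{\Q_n}_t|\ge n^{-1/2}\bigr)\le m_G(n^{-1/2})+c^2 n^{-1/2}\to 0.
\]
Let $\tilde\Q_n$ be the law of $\M^{\Q_n}$ and $\nu_n$ its terminal marginal; then $\tilde\Q_n\in\tilde\M_{\nu_n}$ on $\D([0,T];\R^d)$ (note $\M^{\Q_n}$ need not stay in $\R_+^d$), so $\E_{\Q_n}[G(\M^{\Q_n})]\le v(\nu_n)$. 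The first condition in $\cM(c,n)$ plus $|\pi^{(n)}(x)-x|\le\sqrt d\,2^{-n}$ gives $\Q_n\circ\Sb_T^{-1}\Rightarrow\mu$, and since $\E_{\Q_n}|\Sb_T-\M^{\Q_n}_T|\le c/n$ one also gets $\nu_n\Rightarrow\mu$ together with $\int(x_k)^-d\nu_n\to 0$ (because $\Sb_T>0$). These are precisely the hypotheses of Theorem~\ref{t.continuity}, which then yields $v(\nu_n)\to v(\mu)=\sup_{\Q\in\M_\mu}\E_\Q[G(\Sb)]$ and hence \reff{e.final}. In short: drop the tightness program and carry out your own ``alternative''; the missing concrete step is to take $\tilde\Q_n=\Q_n\circ(\M^{\Q_n})^{-1}$ and to verify the negative-part hypothesis of Theorem~\ref{t.continuity}.
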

\begin{proof}
Without loss of generality (by passing to a subsequence), we
may assume that the
sequence on the left hand side of \reff{e.final} is convergent.
Moreover, we may assume that
for sufficiently large $n$
the set $\cM(c,n)$ is not empty, otherwise
\reff{e.final} is trivially satisfied.

{\em{Step 1.}}
Choose $\Q_n\in \cM(c,n)$
such that
$$
\left[\sup_{\Q\in \cM(c,n)} \E_{\Q} \left[G(\Sb)\right]\right]^{+} \le 2^{-n}+
\E_{\Q_n} \left[G(\Sb)\right].
$$
Hence,
$$
\lim_{n\rightarrow\infty}\E_{\Q_n} \left[G(\Sb)\right]=
\lim\sup_{n\rightarrow\infty}
\left[\sup_{\Q\in \cM(c,n)} \E_{\Q} \left[G(\Sb)\right]\right]^{+}.
$$

Recall the decomposition
given in the second step of the proof of Lemma \ref{l.upperest}.
Set $\M^n:=\M^{\Q_n}$, $A^n:=A^{\Q_n}$.
Since $G$ is uniformly continuous in the Skorokhod metric,
$$
\left|G({\Sb})-G(\M^{n}(\Sb))\right|\leq m_G(n^{-1/2}),\quad
{\mbox{whenever}}\quad
\sup_{t\in[0,T]} A^n_t({\Sb}) \le n^{-1/2}.
$$
Therefore, since $|G(\Sb)-G(\M^{n}({\Sb}))| \le c$,
$$
\left|\E_{\Q_n}\left[G(\Sb)-G(\M^{n}({\Sb})) \right]\right|
\leq
m_G(n^{-1/2}) +
c \ {\Q_n} ( \sup_{t\in[0,T]} A^n_t \ge n^{-1/2}).
$$
We now use the representation \reff{e.decomposition} of $A^n$
together with the Markov inequality.  The result is,
$$
{\Q_n} ( \sup_{t\in[0,T]} A^n_t \geq n^{-1/2})
\le
n^{1/2}\E_{\Q_n}\sum_{k=1}^M
\left|\E_{\Q_n}(\Sb_{\hat{\tau}_k}\ |\ \mathcal{F}_{\hat{\tau}_k-})-
\Sb_{\hat{\tau}_{k-1}}\right| \le c n^{-1/2},
$$
where the last inequality follows from the fact that
$\Q_n \in \cM(c,n)$ and \reff{e.Qn2}.
Therefore, we have concluded that
$$
\lim\sup_{n\rightarrow\infty}\left[\sup_{\Q\in
\mathcal{M}(c,n)} \E_{\Q} \left[G(\Sb)\right]\right]^{+}
= \lim_{n\rightarrow\infty}\E_{\Q_n} \left[G(\M^n({\Sb}))\right].
$$

{\em{Step 2.}}
  As in subsection \ref{ss.martingale measures},
let $\tilde \Omega := \D([0,T];\R^d)$,
$\tilde \F$ be the filtration generated by the canonical
process $\tilde \Sb$.
For a probability measure $\tilde \mu$ on $\R^d$,
set $\tilde \M_{\tilde \mu}$ be set of measures $\tilde \Q$
on $\D([0,T];\R^d)$
such that the canonical process is a martingale
that starts at $\tilde \Sb_0=(1,\ldots,1)$
and the distribution of
$\tilde \Sb_T$  under $\tilde \Q$ is
equal to ${\tilde \mu}$.
Note that when the support of ${\tilde \mu}$ is on $\R_+^d$,
then the support of any measure $\tilde \Q \in \tilde \M_{\tilde \mu}$
is included in $\D$.  Hence, in that case
$\tilde \M_{\tilde \mu}$ is the same as $\M_{\tilde \mu}$ defined
earlier.

We set
\begin{equation}
\label{e.vnu}
v({\tilde \mu}):= \sup_{\tilde \Q \in \tilde \M_{\tilde \mu}}\ \E_{\tilde \Q}[G(\tilde \Sb)].
\end{equation}

Let $\tilde \Q_n$ be the measure on $\D([0,T];\R^d)$
induced by $\M^n$ under $\Q_n$, i.e.,
for any Borel subset $C \subset \D([0,T];\R^d)$,
$$
\tilde \Q_n(C):= \Q_n\left(\left\{
\Sb \in \D\ : \ \M^n(\Sb) \in C\ \right\} \right).
$$
Further, let $\nu_n$ be the distribution of $\M^n_T$ under the measure
$\Q_n$.  Since $\M^n$ is a martingale, it is clear that $\tilde \Q_n \in \tilde \M_{\nu_n}$.
Then, the previous step implies that
$$
\lim\sup_{n\rightarrow\infty}\left[\sup_{\Q\in
\mathcal{M}(c,n)} \E_{\Q} \left[G(\Sb)\right]\right]^{+}
\le \lim_{n \to \infty} v(\nu_n).
$$

{\em{Step 3.}}
Since $\Q_n \in \cM(c,n)$, \reff{e.Qn2} implies that
$$
\E_{\Q_n}\left|\Sb_T - \M^{n}_T(\Sb)\right|=
\E_{\Q_n}\left|A^{n}_T\right| \leq \frac{c}{n}.
$$
Let $\mu_n$ be the distribution
of $\Sb_T$ under $\Q_n$.  Then,
by the definition of $\cM(c,n)$,
$\mu_n$ converges weakly to $\mu$.
Then, above inequalities imply
that $\nu_n$ also converges weakly to $\mu$.

Since each component $\Sb^{(k)}_t >0$, for all
$t \in[0,T]$ and $k=1,\ldots,d$,
\begin{eqnarray*}
\E_{\Q_n}[((\M^n)^{(k)}_T(\Sb))^-] & =&
\E_{\Q_n}[(- (\M^n)^{(k)}_T(\Sb))\ \chi_{\{(\M^n)^{(k)}_T(\Sb)\le 0\}}] \\
& \le &
\E_{\Q_n}[(\Sb^{(k)}_T- (\M^n)^{(k)}_T(\Sb))\ \chi_{\{(\M^n)^{(k)}_T(\Sb)\le 0\}}] \\
& \le &
\E_{\Q_n}\left|\Sb_T - \M^{n}_T(\Sb)\right| \\
&=&
\E_{\Q_n}\left|A^{n}_T\right| \leq \frac{c}{n}.
\end{eqnarray*}
Hence, for each $k=1,\ldots,d$,
$$
\lim_{n \to \infty} \int_{\R} (x_k)^- d\nu_n(x) =0.
$$

Hence we are in a position to use the
continuity result, Theorem \ref{t.continuity} proved
in the next section.  This implies that
$$
\lim_{n \to \infty} v(\nu_n) = v(\mu).
$$
Since $\mu$ is supported on $\R_+^d$,
as remarked before, $\tilde \M_\mu = \M_\mu$.
We now combine all the steps of this proof
to arrive at,
\begin{eqnarray*}
\lim\sup_{n\rightarrow\infty}\left[\sup_{\Q\in
\mathcal{M}(c,n)} \E_{\Q} \left[G(\Sb)\right]\right]^{+}
&=& \lim_{n\rightarrow\infty}\E_{\Q_n} \left[G(\M^n({\Sb}))\right]\\
&\le & \lim_{n \to \infty} v(\nu_n) = v(\mu)\\
&=&
 \sup_{\Q\in\M_\mu} \E_{\Q}\left[G(\Sb)\right].
\end{eqnarray*}

\end{proof}

\section{Continuity of the dual with respect to $\mu$}
\label{s.continuity}
\label{sec5}\setcounter{equation}{0}

In this section, we prove  a continuity result for a martingale optimal transport problem
on the space $\D$.  Recall the functional $v({\tilde \mu})$ defined in \reff{e.vnu}
and the set of martingale measures $\tilde \M_\nu$ again defined
in \reff{e.vnu}.

\begin{thm}
\label{t.continuity}Suppose $G$ is bounded and satisfies the Assumption \ref{a.uniformG}.
Let $\nu_n$ be a sequence of probability measures
on $\R^d$.
Assume that $\nu_n$ converges weakly to a probability measure $\mu$
supported on $\R_+^d$. Further assume that for each component
$k=1,\ldots,d$,
$$
\lim_{n\rightarrow\infty}\int x^{(k)}d\nu_n(x)
=\int x^{(k)}d\mu(x)<\infty,
\quad
{\mbox{and}}
\quad
\lim_{n\rightarrow\infty}\int (x^{(k)})^{-} d\nu_n(x)=0.
$$
Then,
$$
\lim_{n \to \infty} v(\nu_n) =v(\mu).
$$
\end{thm}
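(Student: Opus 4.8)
The plan is to prove the two inequalities $\limsup_n v(\nu_n) \le v(\mu)$ and $\liminf_n v(\nu_n) \ge v(\mu)$ separately, with the first being the genuine difficulty. For the easy direction, $\liminf_n v(\nu_n)\ge v(\mu)$, I would start from an arbitrary martingale measure $\tilde\Q\in\tilde\M_\mu = \M_\mu$ and perturb it so that the terminal law becomes $\nu_n$ rather than $\mu$. A clean way to do this is to keep the trajectory on $[0,T)$ but reset the terminal value: since $\nu_n\to\mu$ weakly and $\int x^{(k)}d\nu_n\to\int x^{(k)}d\mu$, one can build a coupling (for instance via a martingale transport / Strassen-type argument, using that $\nu_n$ and $\mu$ are close in Wasserstein-1 on each coordinate once the negative-part masses vanish) between $\mu$ and $\nu_n$, and graft it onto $\tilde\Q$ at time $T$; the continuity of $G$ in the Skorokhod metric plus boundedness then gives $\E_{\tilde\Q_n}[G]\ge \E_{\tilde\Q}[G] - o(1)$. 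Taking the sup over $\tilde\Q$ yields the bound.

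The main obstacle is the reverse inequality $\limsup_n v(\nu_n)\le v(\mu)$, because the sets $\tilde\M_{\nu_n}$ are not compact and the terminal laws $\nu_n$ may have mass near the boundary or on negative values. I would proceed as follows. First, near-optimal measures $\tilde\Q_n\in\tilde\M_{\nu_n}$ with $\E_{\tilde\Q_n}[G]\ge v(\nu_n)-2^{-n}$. The key is to establish tightness of $(\tilde\Q_n)$ on $\D([0,T];\R^d)$ (or on $C$-like subsets). Since each $\tilde\Q_n$ makes the canonical process a martingale with $\tilde\Sb_0=(1,\dots,1)$ and terminal law $\nu_n$, and the first moments $\int x^{(k)}d\nu_n$ converge, Doob's inequality gives uniform control $\E_{\tilde\Q_n}\|\tilde\Sb\| \le C$; combined with the martingale property (Aldous' criterion or the Meyer–Zheng / modulus-of-continuity estimates for martingales) this yields tightness. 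Passing to a weakly convergent subsequence $\tilde\Q_n\Rightarrow\tilde\Q$, the martingale property is preserved under weak limits of martingales with uniformly integrable terminal values (uniform integrability of $\tilde\Sb^{(k)}_T$ under $\tilde\Q_n$ follows from $\nu_n\to\mu$ weakly together with convergence of first moments and vanishing of $\int(x^{(k)})^-d\nu_n$), so $\tilde\Q\in\tilde\M_\mu$. Finally, $G$ bounded and Skorokhod-continuous gives $\E_{\tilde\Q_n}[G]\to\E_{\tilde\Q}[G]\le v(\mu)$, completing the argument.

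The subtle points I would need to handle carefully are: (i) the uniform integrability of the terminal values, where the hypothesis $\int(x^{(k)})^- d\nu_n\to 0$ is exactly what rules out mass escaping to $-\infty$ and, together with weak convergence and first-moment convergence, upgrades weak convergence of $\nu_n$ to convergence in $\L^1$ of each coordinate — this is what keeps the limit a \emph{true} martingale with the right terminal law rather than merely a supermartingale; (ii) verifying that the weak limit's terminal marginal is genuinely $\mu$ and not something dominated by $\mu$ in convex order; here the martingale property forces $\E_{\tilde\Q}[\tilde\Sb_T]=(1,\dots,1)$ and combined with $\tilde\Sb_T$ being the weak limit of $\tilde\Sb^{(n)}_T\sim\nu_n$ pins down the law. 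I expect step (i), and more precisely establishing tightness of $(\tilde\Q_n)$ on the Skorokhod space together with the closedness of the martingale property under the limit, to be where essentially all the work lies; the rest is bookkeeping with the modulus of continuity $m_G$ and the boundedness of $G$.
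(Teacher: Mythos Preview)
Your proposal contains a genuine gap in the hard direction $\limsup_n v(\nu_n)\le v(\mu)$: the tightness of $(\tilde\Q_n)$ on the Skorokhod space cannot be obtained from Doob's inequality and the martingale property alone. The paper itself flags this explicitly (``the set of martingale measures $\M_{\cL}$ is not compact''), and simple examples confirm it. For $d=1$ and $\mu=\tfrac12(\delta_{1/2}+\delta_{3/2})$, let $\Q_n$ be the law of the martingale equal to $1$ on $[0,1/n)$ and to $1/2$ or $3/2$ (each with probability $1/2$) on $[1/n,T]$; then $\Q_n\in\M_\mu$ for every $n$ and $\sup_n\E_{\Q_n}\|\Sb\|<\infty$, yet $(\Q_n)$ is not tight in the $J_1$ topology because the jump times accumulate at $0$ and no $c\acute{a}dl\acute{a}g$ limit with $\Sb_0=1$ exists. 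Aldous' criterion fails here for the same reason, and while Meyer--Zheng tightness would hold, $G$ is only assumed continuous in the Skorokhod metric, not in the pseudo-path topology, so you could not pass to the limit in $\E_{\tilde\Q_n}[G]$.

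The paper bypasses compactness entirely by a coupling construction. For each $n$ it builds, on an enlarged probability space, a martingale $\M^{(n)}$ with law $\tilde\Q_n$ together with a random variable $\Lambda^{(n)}\sim\mu$ coupled to $\M^{(n)}_T$ so that $\Lambda^{(n)}-\M^{(n)}_T\to 0$ in probability (via Skorokhod's representation for $\nu_n\Rightarrow\mu$ in the Prokhorov metric). Setting $\N^{(n)}_t=\E[\Lambda^{(n)}\mid\tilde\cF_t]$ produces, after a minor adjustment near $t=0$, a martingale whose law lies in $\M_\mu$, so $\E[G(\N^{(n)})]\le v(\mu)$ for every $n$. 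The comparison $|\E[G(\M^{(n)})]-\E[G(\N^{(n)})]|\to 0$ then follows from Doob's maximal inequality applied to the martingale $\M^{(n)}-\N^{(n)}$, once one shows $\E|\M^{(n)}_T-\N^{(n)}_T|\to 0$; this last step is precisely where the hypotheses on first moments and on $\int(x^{(k)})^- d\nu_n$ enter, upgrading convergence in probability of $\Lambda^{(n)}-\M^{(n)}_T$ to convergence in $\L^1$ via uniform integrability. Your coupling idea for the easy direction is essentially this same construction with the roles of $\mu$ and $\nu_n$ swapped, and indeed the paper handles that inequality exactly so; the point is that the \emph{same} device works for the hard direction and renders the compactness argument unnecessary.
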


\begin{proof}  To ease the notation, we take $d=1$.
First we prove that
\begin{equation}
\label{e.goal}
\lim\sup_{n\rightarrow\infty}
v(\nu_n) \leq v(\mu).
\end{equation}
In fact this the inequality that we used in the proof of Lemma \ref{l.final}.
For each $n \in \N$ choose
$\tilde Q_n\in \tilde \M_{\nu_n}$
such that
$$
 v(\nu_n) \le  2^{-n} + \E_{\Q_n}[G(\tilde \Sb)].
$$

{\em{Step 1.}}
In the first step, we construct a martingale measure
in $\M_\mu$ that is ``close'' to $\tilde Q_n$.
This construction uses the Prokhorov's metric
which we now recall.
For any two probability measures
$\nu$, $\rho$ on $\R$, the Prokhorov distance $\hat d(\nu, \rho)$
is defined to be the smallest $\delta >0$ so that
$$
\nu(C) \le \rho(C_\delta)+\delta,\quad
{\mbox{and}}
\quad
\rho(C) \le \nu(C_\delta)+\delta,
$$
for every Borel subset $C \subset \R$, where
$$
C_\delta:=\bigcup_{x\in C}(x-\delta,x+\delta).
$$
It is well
known that convergence in the Prokhorov metric is equivalent to weak convergence,
(for more details on Prokhorov's metric
we refer the reader to \cite{S}, Chapter 3, Section 7).

We now follow Theorem 4 on page 358 in \cite{S} and
Theorem 1 in \cite{Sk} to construct a random variable
$\Lambda^{(n)}$ as follows.  First construct
a probability space
$(\tilde \Omega_n,\tilde{\mathcal F}_n,\tilde {\mathbb P}_n)$
and a martingale $\M^{(n)}$ and a random variable
$\xi^{(n)}$ uniformly distributed distributed on $[0,T]$
such that:\\
a. $\xi^{(n)}$ and $\M^{(n)}$ are independent;\\
b. distribution of $\M^{(n)}$ under $\P_n$ is equal to
the measure $\tilde \Q_n$ on $\D([0,T;\R)$.  In particular,
$$
\E_{\tilde \P_n}[G(\M^{(n)})] = \E_{\Q_n}[G(\tilde \Sb)].
$$

We may choose the filtration $\tilde \cF$ to be the smallest
right-continuous filtration that is generated by the processes
$\M^{(n)}$ and $\xi^{(n)}_t:= \xi^{(n)} \wedge t$.  Recall that
$\xi^{(n)}$ is uniformly distributed on $[0,T]$ and is independent
of $\M^{(n)}$.

Moreover, in view of \cite{S,Sk} there exists a measurable
function $\psi^{(n)} : \R^2 \to \R$ such that the distribution of
$$
\Lambda^{(n)}:= \psi^{(n)}(\M^{(n)}_T, \xi^{(n)}),
$$
 on $\R$ is equal to $\mu$ and
\begin{equation}
\label{e.lambda}
\tilde {\P}_n\left(
\left|\Lambda^{(n)}-\M^{(n)}_T\right|>\hat d(\nu_n,\mu)
 \right)<\hat d(\nu_n,\mu).
\end{equation}
In particular, $\Lambda^{(n)}-\M^{(n)}_T$ converges to zero
in probability.

We set
$$
\N^{(n)}_t:= \E_{\tilde \P_n}[ \Lambda^{(n)}\ |\ \tilde \cF_t],\quad
t \in [0,T].
$$
Then, clearly $\N^{(n)}_T= \Lambda^{(n)}$
and hence has the distribution $\mu$.  Moreover,
the right-continuity of the filtration $\tilde \cF$ implies
that $\N^{(n)}$ has  has a $c\acute{a}dl\acute{a}g$
modification (for details see \cite{LS}
 Chapter 3).
 However, $\N^{(n)}_0$ is not necessarily a constant
 as $\tilde \cF_0$ may not be trivial. {\footnote{
 Authors are grateful to Professor X.~Tan of Paris, Dauphine
 for pointing out this.}}

 So we continue by modifying $\N^{(n)}$ to overcome this difficulty.
 Since $\N^{(n)}$ is right continuous than there exists $\delta>0$ such that
 for any $t\leq \delta$
 $$
 \tilde \P_n[|\N^{(n)}_t-\N^{(n)}_0|>\hat d(\nu_n,\mu)/2)]<\hat d(\nu_n,\mu)/2.
 $$
 We now define the $c\acute{a}dl\acute{a}g$ martingale
 ${\{\hat{\N}^{(n)}_t\}}_{t=0}^T$ by
 $\hat{\N}^{(n)}_t=\int xd\mu(x)=1$ for $t<\delta/2$,
 $\hat{\N}^{(n)}_t={\N}^{(n)}_{2t-\delta}$ for $\delta/2\leq t<\delta$,
   and  $\hat{\N}^{(n)}_t={\N}^{(n)}_t$ for $t\geq\delta$.
Let $\hat \cF$ be the completion of the filtration generated by
 $\hat{\N}^{(n)}$.  Then, one can directly verify that  $\hat{\N}^{(n)}$
 is a $\hat \cF$ martingale.  Therefore,
 the measure on $\D$ induced by  $\hat{\N}^{(n)}$
 under $\tilde \P_n$
 is an element in $\M_{\mu}$.  In particular,
 $$
 \E_{\tilde \P_n}[ G(\hat{\N}^{(n)})] \le  v(\mu).
 $$
In view of Assumption \ref{a.uniformG}, for any $\epsilon>0$
$$
|G(\hat{\N}^{(n)})-  G(\N^{(n)})| \le m_G(\hat d(\nu_n,\mu)+2\epsilon), \quad
{\mbox{on the set }}\quad
\cA_{n,\epsilon},
$$
where
$$
\cA_{n,\epsilon}:=
\left\{\sup_{0\leq t\leq T}\left|\hat{\N}^{(n)}_t-\N^{(n)}_t\right|>\hat d(\nu_n,\mu)+2\epsilon\right\}.
$$
Thus from the choice of delta we get
\begin{eqnarray}\label{correction}
&|\E_{\tilde \P_n}[G(\hat{\N}^{(n)})] -  \E_{\tilde \P_n}[ G(\N^{(n)})]|
\le m_G(d(\nu_n,\mu)+\epsilon) + \|G\|_\infty\ \tilde \P_n\left(\cA_{n,\epsilon}\right)\\
&\leq m_G(\hat d(\nu_n,\mu)+\epsilon)+\|G\|_\infty (\hat d(\nu_n,\mu)+
\tilde \P_n(|\N^{(n)}_0-\int x d\mu(x)|>2\epsilon))
.\nonumber
\end{eqnarray}

{\em{Step 2.}}

In view of Assumption \ref{a.uniformG},
$$
G(\M^{(n)})-  G(\N^{(n)}) \le m_G(\epsilon), \quad
{\mbox{on the set }}\quad
\cA_\epsilon^{(n)},
$$
where
$$
\cA_\epsilon^{(n)}:=
\left\{\sup_{0\leq t\leq T}\left|\M^{(n)}_t-\N^{(n)}_t\right|>\epsilon\right\}.
$$
Hence,
$$
\E_{\tilde \P_n}[G(\M^{(n)})] -  \E_{\tilde \P_n}[ G(\N^{(n)})]
\le m_G(\epsilon) + \|G\|_\infty\ \tilde \P_n\left(\cA^{(n)}_\epsilon\right).
$$

{\em{Step 3.}}
Observe that $\lim_{n\rightarrow\infty} \M^{(n)}_0=\int xd\mu(x)$ and so for sufficiently large $n$
$\{|\N^{(n)}_0-\int xd\mu(x)|>2\epsilon\}\subseteq \cA_\epsilon^{(n)}$. Thus
in view of Steps 1--2, \reff{e.goal} would follow if
$$
\lim_{n \to \infty} \tilde \P_n\left(\cA^{(n)}_\epsilon\right)=0,
$$
for each $\epsilon >0$.  Towards this goal,
we first observe that both $\M^{(n)}$ and $\N^{(n)}$ are
$(\tilde \P_n, \tilde \cF)$ martingales.  Hence,
by Doob's maximal inequality,
$$
\tilde \P_n\left(\cA^{(n)}_\epsilon\right)
\le \frac{1}{\epsilon} \E_{\tilde \P_n}\left|
\M^{(n)}_T-\N^{(n)}_T\right|.
$$
Recall that by construction $\M^{(n)}_T$ has distribution $\nu_n$
and $\N^{(n)}_T$ has distribution $\mu$.
Also by hypothesis, in the limit as $n$ tends to infinity
first moments of $\nu_n$ are
equal to those of $\mu$.  Hence,
\begin{eqnarray*}
\limsup_{n\rightarrow\infty} \E_{\tilde{\P}_n}|\N^{(n)}_T-\M^{(n)}_T|&=&
\limsup_{n\rightarrow\infty} \left[2 \E_{\tilde{\P}_n}(\N^{(n)}_T-\M^{(n)}_T)^{+}-
\E_{\tilde{\P}_n}(\N^{(n)}_T-\M^{(n)}_T)\right]\\
&=& 2 \limsup_{n\rightarrow\infty} \E_{\tilde{\P}_n}(\N^{(n)}_T-\M^{(n)}_T)^{+}.
\end{eqnarray*}

{\em{Step 4.}}  In view of \reff{e.lambda},
 $\N^{(n)}_T-\M^{(n)}_T=\Lambda^{(n)}-\M^{(n)}_T$ converges to zero
in probability.
Hence the previous step gives us the final reduction
of \reff{e.goal}.  Namely, to prove \reff{e.goal}
it suffices to show the uniform integrability of the
sequence of random variables $\M^{(n)}_T-\N^{(n)}_T$.

We first briefly recall that $\M^{(n)}_T$ has the distribution $\nu_n$,
$\N^{(n)}_T$ has the distribution $\mu$, $\mu$ is supported
on the positive real line $\R_+$ and by hypothesis
$$
\lim_{n \to \infty} \int_\R (x)^- d\nu_n(x) =-
\lim_{n \to \infty} \int_{-\infty}^0 x d\nu_n(x) =0.
$$
For brevity, set
$$
X_n:= \N^{(n)}_T,\quad
Y_n:= \M^{(n)}_T,
$$
and denote by $\E_n$ the expectation under the
measure $\tilde \P_n$.
We directly estimate that
\begin{eqnarray*}
\E_n \left[\chi_{\{X_n-Y_n>c\}}(X_n-Y_n)^{+}\right]&=&
\E_n \left[\chi_{\{X_n-Y_n>c\}}\ \chi_{\{X_n>-Y_n\}}(X_n-Y_n)^{+}\right]\\
&&+\E_n \left[\chi_{\{X_n-Y_n>c\}}\ \chi_{\{X_n<-Y_n\}}(X_n-Y_n)^{+}\right]\\
&\le &2 \E_n \left[\chi_{\{2 X_n>c\}}\ X_n\right] +
2 \E_n \left[\chi_{\{2 Y_n<-c\}}\ |Y_n|\right].
\end{eqnarray*}
Therefore,

\begin{eqnarray*}
\lim_{c\uparrow\infty}\sup_{n\in\mathbb{N}}
\E_n \left[\chi_{\{X_n-Y_n>c\}}(X_n-Y_n)^{+}\right]&\le&
2\lim_{c\uparrow\infty}\int_{c/2}^\infty xd\mu(x)\\
&&+
2 \lim_{c\uparrow\infty}\sup_{n\in\mathbb{N}}\int_{-\infty}^{-c/2} |x|d\nu_n(x)=0.
\end{eqnarray*}
This proves the uniform integrability of the sequence
 $\N^{(n)}_T-\M^{(n)}_T$.  Hence, \reff{e.goal} follows.

The opposite inequality
is proved similarly
 by replacing the roles of $\nu_n$ and $\mu$.
\end{proof}
\vspace{10pt}

\section{Extensions}
\label{s.extend}

This section discusses
the relaxations of the Assumption \ref{a.uniformG}.
Furthermore, in this section we consider the multi-marginal case.
Thus let
$0<T_1<T_2<...<T_N=T$ and
$\mu_1\preceq\mu_2\preceq...\preceq \mu_N$ be
 probability measures on $\mathbb R_+^d$
 satisfying \reff{a.p.multi}
We also assume that
$\mu_N$  satisfies
\reff{2.rev}
for some $p>1$.
The space of static positions is given by
(\ref{e.Hc1}).

In this section
we enrich the set of trading strategies, in order to deal with possible jumps at
the times $T_1,...,T_{N-1}$, ($T_N=T$ is a continuity point).
A trading strategy $\gamma={\{\gamma_t\}}_{t=0}^T$ is an admissible trading
strategy
if it has the decomposition
$\gamma=\gamma^{(1)}+\sum_{i=1}^{N-1} \beta_i\chi_{\{T_i\}}(t)$
where $\gamma^{(1)}$
is a portfolio trading strategy
 satisfies the same assumptions as
in Definition \ref{d.admissible} and $\beta_i$ is
$\mathcal{F}_{T_i-}$ measurable and bounded.
The value of such trading strategy is
given by
$$
\int_{0}^t \gamma_u d\Sb_u=\int_{0}^t \gamma^{(1)}_u d\Sb_u
+\sum_{i=1}^{N-1} (\Sb_{T_i}-\Sb_{T_i-}) \beta_i \chi_{\{T_i\leq t\}}.
$$
Thus
an admissible semi--static portfolio is a vector $(g_1,...,g_N,\gamma)$
where for any $i$, $g_i\in \mathbb{L}^{1}(\mathbb R^d_{+},\mu_i)$ and
$\gamma$ is of the above form.
An admissible semi-static portfolio is called super-replicating,
if
$$
\sum_{i=1}^N g_i(\Sb_{T_i})+ \int_{0}^t \gamma_u(\Sb) d{\Sb}_u \geq  G(\Sb),
\ \ \forall{\Sb}\in \D.
$$

The minimal super-hedging cost of $G$ is  defined by,
\begin{equation*}
V(G):=\inf\left\{\sum_{i=1}^N \int g_i d\mu_i: \ \exists \gamma \
\mbox{such}  \ \mbox{that} \ \phi :=(g_1,...,g_N,\gamma) \ \mbox{is} \ \mbox{super-replicating} \ \right\}.
 \end{equation*}
\vspace{5pt}

\begin{asm}
\label{a.newG}
We modify the Skorokhod metric and define
$$
\check{d}(\Sb, \tilde \Sb) = d(\Sb, \tilde \Sb) + \left|
 \int_0^T  \Sb_u du-  \int_0^T  \tilde \Sb_u du\right|.
 $$
 It is clear that
$$
d(\Sb, \tilde \Sb) \le \check{d}(\Sb, \tilde \Sb) \le (1+T) \|\Sb-\tilde \Sb\|.
$$
We assume that there is a modulus continuity of :  i,.e., a continuous function $m_G:[0,\infty)\to [0,\infty)$
with $m_G(0)=0$ that satisfies
$$
 \left|G(\Sb)-G(\tilde \Sb) \right| \le m_G(\check{d}(\Sb, \tilde \Sb)),\quad
 \forall \  \Sb, \tilde \Sb \in \D([0,T];\R^d).
 $$
Furthermore, we still assume that $G$ satisfies the 
following growth condition  instead of \reff{1.new},
\begin{equation}
\label{e.growth}
\left|G(\Sb)\right| \le C \left( 1+ \|\Sb\|\right),
\end{equation}
for some constant $C$.
\qed
\end{asm}

Clearly Assumption \ref{a.newG} is more general than Assumption
\ref{a.uniformG}. In particular Assumption
\ref{a.newG} allows to include
Asian call/put options with fixed and floating strikes
\begin{eqnarray*}
&G(\Sb)=\left(\frac{1}{T}\int_{0}^T \Sb_t dt-K\right)^{+}, \ \ G(\Sb)=\left(\Sb_T-\frac{1}{T}\int_{0}^T \Sb_t dt\right)^{+}, \\
&G(\Sb)=\left(K-\frac{1}{T}\int_{0}^T \Sb_t dt\right)^{+}, \ \  G(\Sb)=\left(\frac{1}{T}\int_{0}^T \Sb_t dt-\Sb_T\right)^{+},
\end{eqnarray*}
and
lookback call (respectively put) options with fixed (respectively floating) strike,
\begin{equation*}
G(\Sb)=\left(\max_{0\leq t\leq T}\Sb_t-K\right)^{+}, \ \ 
G(\Sb)=\max_{0\leq t\leq T}\Sb_t-\Sb_T.
\end{equation*}

Denote by $\mathbb{M}_{\mu_1,...,\mu_N}$ the set of all martingale measures
$\mathbb Q$ on $(\Omega,\mathcal F)$ such that
for any $k\leq N$
the probability distribution of
$\mathbb S_{T_k}$ under $\mathbb Q$ is equal to $\mu_k$. Observe that
from the relations
$\mu_1\preceq\mu_2\preceq...\preceq \mu_N$,
\reff{a.p.multi}
and the fact that $\mu_N$ is satisfying
\reff{a.p}
it follows that
$\mathbb{M}_{\mu_1,...,\mu_N}\neq\emptyset$.

The aim of this section is to prove the following result.
\begin{thm}\label{thm7.1}
Suppose that $G$ satisfies Assumption
\ref{a.newG}.  Further assume \reff{a.p.multi}
and that $\mu_N$ satisfies
\reff{2.rev}
for some $p>1$.
Then,
$$
V(G)=\sup_{\mathbb Q\in\mathbb{M}_{\mu_1,...,\mu_N}}
 \E_{\mathbb {Q}}\left[G(\mathbb
S)\right].$$
\end{thm}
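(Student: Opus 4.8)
The plan is to follow the same three–stage architecture as in the proof of Theorem \ref{t.main}: a direct argument for the inequality $V(G)\ge\sup_{\Q}\E_\Q[G(\Sb)]$, a discretization bounding $V(G)$ from above by $\liminf_nV^{(n)}(G)$ of approximate problems on countable spaces, and a min--max/Föllmer--Kramkov analysis of $V^{(n)}(G)$ followed by a passage to the limit on the dual side. First I would dispose of the easy inequality exactly as before: for $\Q\in\M_{\mu_1,\dots,\mu_N}$ and an admissible $(g_1,\dots,g_N,\gamma)$ with $\gamma=\gamma^{(1)}+\sum_i\beta_i\chi_{\{T_i\}}$, the pathwise integral $\int\gamma^{(1)}_ud\Sb_u$ is a $\Q$ super-martingale, each increment $(\Sb_{T_i}-\Sb_{T_i-})\beta_i$ has zero $\Q$-mean since $\beta_i$ is bounded and $\cF_{T_i-}$-measurable and $\Sb$ is a $\Q$-martingale, and $\E_\Q[g_i(\Sb_{T_i})]=\int g_id\mu_i$; taking $\Q$-expectations in the super-replication inequality gives the bound. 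Next, as in the reduction step of Theorem \ref{t.main}, I would reduce to a bounded nonnegative $G$, now using the growth condition \reff{e.growth}: the tail $(G-G_K)^+\le C(\|\Sb\|-K)^+$ is super-replicated pathwise by a static position in the power option $|\Sb_T|^p$ (available because $\mu_N$ satisfies \reff{2.rev}) plus a Doob-type dynamic strategy, and its price tends to $0$ as $K\to\infty$ because $\int|x|^pd\mu_N<\infty$.

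For the discretization I would reuse subsections \ref{ss.app}--\ref{ss.lift} but force $T_1,\dots,T_{N-1}$ to be mesh points: insert them among the stopping times $\tau_k$, keep the refinement sets $B^{(n+k)}$ so that the lifted times $\hat\tau_k$ approximate the $\tau_k$ with geometric error, and enlarge the class of discrete strategies to allow a bounded $\cF_{T_i-}$-measurable jump $\beta_i$ at each $T_i$. The lifting map $\Psi$ and Lemmas \ref{l.hattau}--\ref{l.lift} then carry over with two new error terms: one from the discrete jumps at the $T_i$, controlled because the $\beta_i$ are bounded and $|\Sb_{\tau_k}-\pi^{(n+k)}(\Sb_{\tau_k})|\le\sqrt d\,2^{-(n+k)}$; and one from the time-integral appearing in the modified metric $\check d$, controlled because the increments of $\hat\Pi(\Sb)$ approximate those of $\Pi(\Sb)$ to within a geometric series while $|\Pi(\Sb)_t-\Sb_t|\le\sqrt d\,2^{-n}$. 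The resulting bound on $\check d(\Sb,\hat\Pi(\Sb))$ is $O\big((1+\|\Sb\|)2^{-n}\big)$, so $|G(\Sb)-G(\hat\Pi(\Sb))|\to0$ pointwise, and after first truncating $\|\Sb\|$ on a set of $\Q$-probability close to one (again using the $p$-th moment of $\mu_N$) one obtains $V(G)\le\liminf_nV^{(n)}(G)$, the analogue of Corollary \ref{c.main}.

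The analysis of $V^{(n)}(G)$ proceeds as in Lemmas \ref{l.upperest}--\ref{l.final} via the Föllmer--Kramkov duality for the constraint $|\gamma|\le n$ and the min--max theorem, yielding $V^{(n)}(G)\le[\sup_{\Q\in\cM}\E_\Q[G(\Sb)]]^+$, where now $\cM$ consists of measures on the discrete space that are \emph{almost} martingale and \emph{almost} match all the discretized marginals $\hat\mu_1,\dots,\hat\mu_N$, the discrepancies being $O(1/n)$. The essential new ingredient is that, since $\cH$ contains the power option, I would add to the penalizing static hedges in the min--max a term proportional to $|\Sb_T|^p$; choosing it near-optimally forces every near-optimal dual measure $\Q_n$ to satisfy a uniform bound $\E_{\Q_n}[|\Sb_T|^p]\le C$, hence, by Doob's $L^p$ inequality applied to the martingale part $\M^{\Q_n}$, $\E_{\Q_n}[\|\Sb\|^p]\le C'$. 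This uniform bound is what replaces the probabilistic continuity Theorem \ref{t.continuity}, which is not available in the multi-marginal setting.

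Finally I would pass to the limit. The uniform $L^p$ bound on $\|\Sb\|$ gives tightness of the laws of the martingale parts $\M^{\Q_n}$ on $\D([0,T];\R^d)$; extract a weakly convergent subsequence with limit $\Q$. The Föllmer--Kramkov penalty tending to $0$ forces $\Q$ to be a martingale measure; the marginal penalties tending to $0$ together with the uniform integrability coming from the $L^p$ bound force the law of $\Sb_{T_i}$ under $\Q$ to equal $\mu_i$ for every $i$, and since each $\mu_i$ is supported on $\R_+^d$ and the $\mu_i$ are in convex order the limiting paths remain in $\D$ with no negative mass surviving; thus $\Q\in\M_{\mu_1,\dots,\mu_N}$. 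Since $G$ is continuous and of linear growth in $\check d$ and $\int_0^T\Sb_udu$ is uniformly integrable by the $L^p$ bound, $\E_{\Q_n}[G(\Sb)]\to\E_\Q[G(\Sb)]\le\sup_{\M_{\mu_1,\dots,\mu_N}}\E_\Q[G(\Sb)]$, which closes the argument. The main obstacle I anticipate is precisely this last passage to the limit: without Theorem \ref{t.continuity} one must extract enough tightness and uniform integrability from the single power-option penalization to simultaneously keep the limit a genuine martingale, pin down \emph{all} $N$ marginals exactly (not just approximately) in the limit, and control the time-integral functional, all while the lifting rearranges the jump times $\hat\tau_k$ whose number is unbounded.
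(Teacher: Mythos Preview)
Your overall architecture matches the paper's: the easy inequality, the reduction to bounded nonnegative $G$ via the power option, the multi-marginal discretization with lifting, and the min--max/F\"ollmer--Kramkov step all go through essentially as you describe. The genuine gap is in your final compactness step.

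The claim that a uniform $L^p$ bound on $\|\Sb\|$ (or on $\|\M^{\Q_n}\|$) gives tightness of the laws on $\D([0,T];\R^d)$ is false. A sequence of $L^p$-bounded c\`adl\`ag martingales can accumulate an unbounded number of order-one oscillations in a shrinking time window (e.g.\ $2n$ i.i.d.\ jumps of size $\pm n^{-1/2}$ packed into $[1/2-1/n,1/2]$); such a sequence is tight only in the Meyer--Zheng sense, not in the Skorokhod topology, and $G$ is assumed continuous in the Skorokhod-based metric $\check d$, so you cannot pass $\E_{\Q_n}[G]$ to the limit this way. Your penalization by $|\Sb_T|^p$ therefore does not buy the compactness you need; it controls the size of the path but not its oscillation.

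The paper's remedy is to penalize not by the power option directly but by a pathwise quadratic-variation-type functional
\[
\hat X_\epsilon=\Bigl(\sum_k|\hat\Sb_{\hat\tau^{(\epsilon)}_k}-\hat\Sb_{\hat\tau^{(\epsilon)}_{k-1}}|^2\Bigr)^{1/2},
\]
taken over the $\epsilon$-increment times. Lemma \ref{l.estimate}(ii), via a pathwise Burkholder--Davis--Gundy inequality, shows $V(X_\epsilon)\le 3d\,V(\|\Sb\|^p)<\infty$, so subtracting $\epsilon X_\epsilon$ (truncated) from $G$ costs only $O(\epsilon)$ in the primal. On the dual side this forces near-optimal measures $\Q_n$ to satisfy $\Q_n(\hat X_\epsilon\ge c/\epsilon^2)\le\epsilon$, hence the number of $\epsilon$-jumps is bounded by a fixed integer $\Theta$ with probability $1-\epsilon$. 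One then replaces $\hat\Sb$ by a process $\check\Sb$ with at most $N+\Theta$ jumps; tightness becomes a finite-dimensional question (jump values, jump times), a Skorokhod-representation limit is taken, and the limit is reassembled into an element of $\M_{\mu_1,\dots,\mu_N}$ by conditioning, at the cost of an $m_G(\epsilon)+c\epsilon$ error. So the power option does enter, but only as the bound that makes the $X_\epsilon$ penalization affordable; the compactness itself comes from controlling the \emph{number} of large increments, not the sup norm.
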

\subsection{Preparation towards the proof of Theorem \ref{thm7.1}}
Towards the proof of the above theorem, we 
need several
auxiliary lemmas and modifications or previous constructions.

Set $\tau^{(1)}_0=$ and define the sequence of stopping times
$\tau^{(i)}_k$, $i=1,...,N$, $k\in\mathbb N$
by
$$
\tau^{(1)}_{1}:=
 \sqrt{d}\ 2^{-n} \wedge
\inf \left\{ t>0\ :\
\Sb_t  \not \in O(\Sb_0,n)\
\right\},
$$
and for  $k=1,\ldots, $
 $$
\tau^{(1)}_{k+1}:=  T_1 \wedge
\left(\tau^{(1)}_k + \left[\sqrt{d}\ 2^{-n} \wedge \Delta \tau^{(1)}_k\right] \right) \wedge
\inf \left\{  t>\tau^{(1)}_k\ :\
\Sb_{t}  \not \in O(\Sb_{\tau^{(1)}_k},n)\
\right\},
$$
where, $\Delta \tau^{(1)}_k=\tau^{(1)}_k-\tau^{(1)}_{k-1}$.
Set $M_1$ to be the smallest integer such that $\tau^{(1)}_{M_1}= T_1$.
Assume that we have defined
$\tau^{(i)}_k$, $i<j$, $k\in\mathbb{N}$
and $M_i$ is the smallest integer such that
$\tau^{(i)}_{M_i}= T_i$. Then, we define
$$
\tau^{(j)}_{1}:=
 T_{j-1}+\sqrt{d}\ 2^{-n} \wedge
\inf \left\{ t>T_{j-1}\ :\
\Sb_t  \not \in O(\Sb_{T_{j-1}},n)\
\right\},
$$
and for  $k=1,\ldots, $
 $$
\tau^{(j)}_{k+1}:=  T_{j-1} \wedge
\left(\tau^{(j)}_k + \left[\sqrt{d}\ 2^{-n} \wedge \Delta \tau^{(j)}_k\right] \right) \wedge
\inf \left\{  t>\tau^{(j)}_k\ :\
\Sb_{t}  \not \in O(\Sb_{\tau^{(j)}_k},n)\
\right\}.
$$
We fix $n\in\mathbb{N}$ and define a
sequence of probability spaces $\hat \D= \hat\D^{(n)}[0,T]$.
A process $\hS \in \D$ belongs to $\hat \D$, if
there exists  a nonnegative integers $M_1,...,M_N$
and a partition
\begin{eqnarray*}
0&=&t^{(1)}_0 <t^{(1)}_1=\sqrt d 2^{-n}  <...<t^{(1)}_{M_1}=\\
&&T_1=
t^{(2)}_0 <t^{(2)}_1=T_1+\sqrt d 2^{-n}  <...<t^{(2)}_{M_2}=\\
&&T_2=t^{(3)}_0
<...< t^{(N-1)}_{M_{N-1}}= \\
&&T_{N-1}
=t^{(N)}_0<t^{(2)}_1=T_{N-1}+\sqrt d 2^{-n}  <...<t^{(N)}_{M_N}<T,
\end{eqnarray*}
so that
$$
\hS_t= \sum_{i=1}^N\sum_{k=0}^{M_i-1} \hS_{t^{(i)}_k} \chi_{[t^{(i)}_{k},t^{(i)}_{k+1})}(t) +
\hS_{t^{(N)}_{M_N}} \chi_{[t^{(N)}_{M_N},T]}(t)
$$
where $\hS_0=(1,\dots,1)$, and for any
$i\leq N$ and $1\leq k< M_i$,
$$
\hS_{T_i}\in A^{(n)}, \ \ \hS_{t^{(i)}_k}\in A^{(n+k)}, \ \ t^{(i)}_{k+1}-t^{(i)}_{k}\in B^{(n+k+1)}.
$$
Once again,
the set $\hd$ is countable, thus there exists a probability measure
$\mathbb{P}=\P^{(n)}$ on $\d$
with support contained in $\hd$,
which gives  positive weight to every element of $\hd$.

The hedging problem on the countable space is given as follows.
\begin{dfn}
 \label{d.probabilisticnew}
{\rm{
 A (probabilistic)}} semi-static portfolio
 {\rm{is a pair $(\hat g_1,...,\hat g_N,\hat \gamma)$
such that for any $i$,
$\hat g_i:A^{(n)}\rightarrow \mathbb{R}$ is a bounded function and
 $ \hat \gamma :[0,T] \times \D \to[-n,n] $ is admissible trading strategy (in the same sense as in Definition \ref{d.probabilistic}).

A semi-static portfolio is}} $\mathbb P$-super-replicating, {\rm{if
$$\sum_{i=1}^N \hat g_i( {\mathbb S}_{T_i})+\int_{0}^T \hat \gamma_{u} d{\mathbb S}_u\geq G({\mathbb S}),
 \ \ \mathbb P-{a.s.}$$

The (minimal) super-hedging cost of $G$ is  defined by,}}
\begin{eqnarray*}
V^{(n)}(G)&:=&\inf\left\{\sum_{i=1}^N \int \hat  g_i d\hat \mu_i: \ \exists \gamma \
\mbox{such}  \ \mbox{that} \ \hat\phi:=(\hat g_1,...,\hat g_N,\hat\gamma)\right.\\
&&
\hspace{60pt}
\left. \ \mbox{is admissible and super-replicating} \ \right\},
 \end{eqnarray*}
where
$\hat\mu_1,...,\hat\mu_N$ are probability measures
on $A^{(n)}$  given by,
$$
\hat \mu_i(\{ m 2^{-n}\}) := \mu_i\left(\left\{ x \in \R_+^d\ :\
\pi^{(n)}(x) = m 2^{-n}\right\}\right), \ \
\quad m\in \N^d.
$$
\qed
\end{dfn}

Next, we define the lifting.
Set $T_0=0$. For any $i=1,...,N$ introduce the stopping times
$$
\hat \tau^{(i)}_0:=T_{i-1}, \quad
\hat \tau^{(i)}_1= \sqrt{d}\ 2^{-n}.
$$
For
$k=2,\ldots, M_i-1$ recursively define,
$$
\hat \tau^{(i)}_k:= \hat \tau^{(i)}_{k-1} + (1-  \sqrt{d}\ 2^{-n}/T_i)
\sup\left\{\ \Delta t>0 \: | \ \Delta t \in B^{(n+k)}\
{\mbox{and}}\ \Delta t< \tau^{(i)}_{k-1} -\tau^{(i)}_{k-2} \right\}.
$$
Also set
$\hat \tau^{(i)}_{M_i}=T_i$.

Define,
\begin{eqnarray}
\label{e.Pinew}
&\hat \Pi_{t}(\Sb):=
\sum_{i=1}^N\sum_{k=0}^{M_i-1} \pi^{(n+k)}( \Sb_{\tau^{(i)}_k})\ \chi_{[\hat\tau^{(i)}_{k},\hat\tau^{(i)}_{k+1})}(t)
+\pi^{(n)}(\Sb_T)\chi_{T}(t),\\
&\check \Pi_{t}(\Sb):=
\sum_{i=1}^N\sum_{k=0}^{M_i-1} \pi^{(n+k)}( \Sb_{\tau^{(i)}_k})\ \chi_{[\tau^{(i)}_{k},\tau^{(i)}_{k+1})}(t)
+ \pi^{(n)}(\Sb_T)\chi_{T}(t),\nonumber\\
&\Pi_{t}(\Sb):=
\sum_{i=1}^N\sum_{k=0}^{M_i-1}  \Sb_{\tau^{(i)}_k}\ \chi_{[\tau^{(i)}_{k},\tau^{(i)}_{k+1})}(t)
+\Sb_T\chi_{T}(t).\nonumber
\end{eqnarray}

Similarly to Lemma $\ref{l.pi}$
we get that
\begin{equation}\label{7.5}
d(\Sb,\Pi(\Sb)),\
d(\Pi(\Sb),\check \Pi(\Sb)) \le   \sqrt{d}\ 2^{-n},\ \
d(\check \Pi(\Sb), \hat \Pi(\Sb))  \le    3N\sqrt{d}\ 2^{-n}.
\end{equation}
The first two inequalities are proved in the same way as in Lemma \ref{l.pi}. The third inequality
done in a similar way as in Lemma \ref{l.pi} by modifying the map $\Lambda:[0,T]\rightarrow [0,T]$
as follows.
Define
$\Lambda(\hat \tau^{(i)}_{k})= \tau^{(i)}_k$ for $i=1,...,N$, $k=0,\ldots,M_i-1$,
and to be
piecewise linear at other points.

Now we estimate
$|\int_0^T  \Sb_u du-
 \int_{0}^T\hat \Pi_u(\Sb) du|.$
Fix $i<N$.
Clearly,
\begin{eqnarray*}
\left|\int_{T_{i-1}}^{T_i}  \Sb_u du-
 \int_{T_{i-1}}^{T_i}\hat \Pi_u(\Sb) du\right |&\leq & 2\sqrt d 2^{-n} \Delta T_i \|\Sb\| +
\left|\int_{T_{i-1}}^{T_i}  \check\Pi({\Sb})_u du-
 \int_{T_{i-1}}^{T_i}\hat \Pi_u(\Sb) du \right|  \\
 & \leq & 2\sqrt d 2^{-n}\Delta T_i \|\Sb\|+
 \|\Sb\| \left[ (T-\tau^{(i)}_{M_{i-1}}) + (T-\hat \tau^{(i)}_{M_{i-1}})\right]\\
 &&+\sum_{k=0}^{M_i-2} \left|\pi^{(n+k)}(\Sb_{\tau^{(i)}_k})\right|
 \left| \Delta \tau^{(i)}_{k+1} - \Delta \hat \tau^{(i)}_{k+1}\right|.
\end{eqnarray*}
Observe that for any $k=2,...,M_i$,
$$
\Delta \hat \tau^{(i)}_k \le (1-\sqrt{d}\  2^{-n}/T) \Delta\hat \tau^{(i)}_{k-1},\quad
\Delta \hat \tau^{(i)}_1=\sqrt{d}\  2^{-n}.
$$
and
\begin{eqnarray*}
\left|\pi^{(n+k)}(\Sb_{\tau^{(i)}_k})\right|  &\le& \|\Sb\| +\sqrt{d}\ 2^{-n},\\
T_i-\tau^{(i)}_{M-1} =  \Delta \tau^{(i)}_M &\le &  \Delta \tau^{(i)}_1  \sqrt{d}\ 2^{-n},\\
T_i-\hat \tau^{(i)}_{M_{i-1}} & \leq &
 \Delta \tau^{(i)}_M+\sqrt{d}2^{-n}/{T_i}\le \sqrt{d}\ 2^{-n}(1+1/T_i).
\end{eqnarray*}
Hence,
\begin{eqnarray*}
&& \left|\int_{T_{i-1}}^{T_i}  \Sb_u du-
 \int_{T_{i-1}}^{T_i}\hat \Pi_u(\Sb) du\right|
  \leq  \hat c_1 2^{-n}\|\Sb\|+
\sum_{k=0}^{M_i-2} \left|\pi^{(n+k)}(\Sb_{\tau^{(i)}_k})\right|
 \left| \Delta \tau^{(i)}_{k+1} - \Delta \hat \tau^{(i)}_{k+1}\right| \\
 &&\hspace{50pt}\le 
  [ \|\Sb\|+ \sqrt{d}\ 2^{-n}]
 \left| \Delta \tau^{(i)}_{1} - \sqrt{d}\  2^{-n}\right| 
+\hat c_1 2^{-n} \|\Sb\| \\
 &&\hspace{55pt} + \left[ \|\Sb\|+ \sqrt{d}\ 2^{-n}\right]   \sum_{k=1}^{M_i-2}
 \left| \Delta \tau^{(i)}_{k+1} - (1-\sqrt{d}\  2^{-n}/T_i) \Delta \tau^{(i)}_{k}\right|\\
 &&\hspace{50pt}\le 
\hat c_2  2^{-n} \|\Sb\|   + \  \|\Sb\|  \sum_{k=1}^{M_i-2}
 \left| \Delta \tau^{(i)}_{k+1} - \Delta \tau^{(i)}_{k}\right|
 + \|\Sb\| (\sqrt{d}\  2^{-n}/T_i)
  \sum_{k=1}^{M_i-2}
  \Delta \tau^{(i)}_{k} \\
 &&\hspace{50pt}\le 
\hat c_2  2^{-n} \|\Sb\|  +  \|\Sb\|
 [ \Delta \tau^{(i)}_{M} - \Delta \tau^{(i)}_{1}]
 + \|\Sb\|  \sqrt{d}\  2^{-n}\\
 &&\hspace{50pt}\le 
\hat c_3  2^{-n} \|\Sb\|  ,
 \end{eqnarray*}
 where $\hat c_1, \hat c_2,\hat c_3$ are appropriate constants (independent
 of $n$ and $\Sb$).
Hence,
\begin{equation}\label{7.8}
\left|\int_0^T  \Sb_u du-
 \int_{0}^T\hat \Pi_u(\Sb) du\right|\leq c_1 \|\Sb\| 2^{-n}
 \end{equation}
 for some constant $c_1$.

Finally,
let $\hat\phi=(\hat g_1,...,\hat g_N,\hat\gamma)$
be a semi-static portfolio in the sense of
Definition \ref{d.probabilistic}. Define
a portfolio $\phi:=\Psi(\hat\phi):=(g_1,...,g_N,\gamma)$
for the original problem by
\begin{eqnarray*}
g_i(x)&:=&\hat g_i\left(\pi^{(n)}(x)\right), \ \ i=1,...,N \  \  x \in \R_+^d,\\
\gamma_t(\Sb)&:=&
\sum_{i=1}^N\sum_{k=0}^{M_i-1} \hat\gamma_{\hat\tau^{(i)}_{k+1}(\Sb)}( \hat \Pi(\Sb)) \
\chi_{(\tau^{(i)}_k(\Sb),\tau^{(i)}_{k+1}(\Sb)]}(t)+
\sum_{i=1}^{N-1} (\hat\gamma_{T_i}-\hat\gamma_{\hat\tau^{(i)}_{M_i-1}}) \chi_{\{T_i\}}(t).
\nonumber
\end{eqnarray*}
As in Lemma \ref{l.lift}, 
we have that for any $i$,
\begin{equation}\label{7.8+}
\int_{\R_+^d}\ g_i d\mu_i= \int_{A^{(n)}} \hat g_i d\hat \mu_i.
\end{equation}
Furthermore,
$$
\int_{T_{i-1}}^{T_i} \gamma_u(\Sb) d\Sb_u
=\sum_{k=1}^{M_i-1} \hat\gamma_{\hat\tau^{(i)}_{k}(\Sb)}(\hat \Pi(\Sb))
\left(\Sb_{ \tau^{(i)}_{k}(\Sb)}-
\Sb_{\tau^{(i)}_{k-1}(\Sb)}\right)
+(\hat\gamma_{T_i}-\hat\gamma_{\hat\tau^{(i)}_{M_i-1}})(\Sb_{T_i}-\Sb_{T_i-})
$$
and
\begin{eqnarray*}
\int_{T_{i-1}}^{T_i} \hat\gamma_u(\hat \Pi(\Sb))d\hat \Pi_u(\Sb)
&=&
\sum_{k=1}^{M_i-1} \hat\gamma_{\hat\tau^{(i)}_{k}(\Sb)}(\hat \Pi(\Sb))
\left(\pi^{(n+k)}(\Sb_{\tau^{(i)}_{k}(\Sb)})-
\pi^{(n+k-1)}(\Sb_{ \tau^{(i)}_{k-1}(\Sb)})\right)\\
&&+(\hat\gamma_{T_i}-\hat\gamma_{\hat\tau^{(i)}_{M_i-1}})(\pi^{(n)}(\Sb_{T_i})-\pi^{(n+M_i-1)}(\Sb_{T_i-})).
\end{eqnarray*}
Again, by using the fact that
the portfolio $\hat \gamma$ is bounded by ${n}$,
we obtain the following estimate,
\begin{eqnarray} 
\nonumber
&&\left|\int _{0}^T \gamma_u(\Sb) d\Sb_u
-\int_{0}^T \hat\gamma_u(\hat \Pi(\Sb))d\hat \Pi_u(\Sb)\right|
 \leq
\sum_{i=1}^N \left|\int_{T_{i-1}}^{T_i} \gamma_u(\Sb) d\Sb_u
-\int_{T_{i-1}}^{T_i} \hat\gamma_u(\hat \Pi(\Sb))d\hat \Pi_u(\Sb)\right|
\\ &&\hspace{50pt}\leq
\nonumber
2 \|\hat \gamma\|_\infty \left(2N \sqrt d 2^{-n}+
\sum_{i=1}^N\sum_{k=1}^{M_i}
\left|  \pi^{(n+k)}(\Sb_{\tau^{(i)}_{k}})-
\Sb_{\tau^{(i)}_{k}}\right|\right)
\\ &&\hspace{50pt}\leq
\nonumber
 4N\sqrt d n 2^{-n}+2n \sum_{i=1}^N\sum_{k=1}^{M_i}\sqrt d 2^{-n-k}
\\ \label{7.9}
&&\hspace{50pt}\leq
6N\sqrt d n 2^{-n}.
\end{eqnarray}
By applying similar arguments as in
Lemma \ref{l.lift} we observe that $\gamma$ is progressively measurable
and $\int_{0}^t \gamma_u(\Sb) d\Sb_u$ is uniformly bounded from below.
The following lemma ends our preparations towards the proof of Theorem \ref{thm7.1}.
\begin{lem}\label{l.estimate}
i. Let $p>1$ given by (\ref{2.rev}). Then,
$$
V(\|\Sb\|^p)<\infty.
$$
ii. Let $\epsilon>0$. Define the stopping times $\tau^{(\epsilon)}_0=0$ and
for $j>0$
$$\tau^{(\epsilon)}_j=T\wedge\min\{t>\tau^{(\epsilon)}_{j-1}:
t\in \{T_1,...,T_{N-1}\} \ \ \mbox{or} \ \ |\Pi_t(\Sb)-\Pi_{\tau^{(\epsilon)}_{j-1}}(\Sb)|\geq\epsilon\}.$$
Set
$M^{(\epsilon)}=\min\{k: \tau^{(\epsilon)}_k=T\}.$
Consider the random variable
$$X_{\epsilon}=\sqrt{\sum_{i=1}^{M^{(\epsilon)}} |\Pi_{\tau^{(\epsilon)}_{i}}(\Sb)-\Pi_{\tau^{(\epsilon)}_{i-1}}(\Sb)|^2}.$$
Then $$V(X_{\epsilon})<3 d V(||\Sb||^p).$$
\end{lem}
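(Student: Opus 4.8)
The plan is to establish the two parts separately: part (i) by a pathwise version of Doob's $L^p$ maximal inequality, and part (ii) by a pathwise telescoping (discrete second-order Taylor) estimate for $\sum_i|\Delta_i|^2$ combined with an arithmetic--geometric mean split that reduces it to part (i).

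\emph{Part (i).} Since each coordinate of $\Sb$ is positive, $|\Sb_t|\le\sqrt d\,\max_k\Sb^{(k)}_t$, so $\|\Sb\|^p\le d^{p/2}\sum_{k=1}^d(\bar\Sb^{(k)}_T)^p$ with $\bar\Sb^{(k)}_t:=\sup_{u\le t}\Sb^{(k)}_u$, and it suffices to super-replicate each $(\bar\Sb^{(k)}_T)^p$. Applying Lebesgue--Stieltjes integration by parts to $u\mapsto(\bar\Sb^{(k)}_u)^{p-1}\Sb^{(k)}_u$, using that $\bar\Sb^{(k)}$ is nondecreasing and increases only where $\Sb^{(k)}$ attains its running maximum, one obtains the pathwise bound
\[
(\bar\Sb^{(k)}_T)^p\;\le\;\tfrac{p}{p-1}\,(\bar\Sb^{(k)}_T)^{p-1}\Sb^{(k)}_T-\tfrac{p}{p-1}\!\int_0^T(\bar\Sb^{(k)}_{u-})^{p-1}\,d\Sb^{(k)}_u-\tfrac1{p-1},
\]
the inequality (rather than equality) accommodating possible upward jumps of $\Sb^{(k)}$. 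Young's inequality $\tfrac{p}{p-1}a^{p-1}b\le\delta^{p/(p-1)}a^p+\tfrac1{(p-1)\delta^p}b^p$ with a fixed $\delta\in(0,1)$ then produces constants $a_p,b_p>0$ and the left-continuous, bounded-variation, progressively measurable strategy $\gamma^{(k)}_u:=-\tfrac{p}{(p-1)(1-\delta^{p/(p-1)})}(\bar\Sb^{(k)}_{u-})^{p-1}$ with $(\bar\Sb^{(k)}_T)^p\le a_p(\Sb^{(k)}_T)^p+b_p+\int_0^T\gamma^{(k)}_u\,d\Sb^{(k)}_u$ and $\int_0^t\gamma^{(k)}_u\,d\Sb^{(k)}_u\ge-C(1+\sup_{u\le t}(\Sb^{(k)}_u)^p)$; by Remark~\ref{r.super martingale} this makes $\gamma^{(k)}$ admissible. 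Taking $g_N(x):=d^{p/2}\sum_k(a_px_k^p+b_p)\in\L^1(\R_+^d,\mu_N)$ (by \reff{2.rev}), $g_i\equiv0$ for $i<N$ and $\gamma:=d^{p/2}\sum_k\gamma^{(k)}e_k$ yields a super-replicating semi-static portfolio for $\|\Sb\|^p$, whence $V(\|\Sb\|^p)\le d^{p/2}\sum_k(a_p\int x_k^p\,d\mu_N+b_p)<\infty$.

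\emph{Part (ii).} We may assume $1<p\le2$, since for $p>2$ condition \reff{2.rev} also holds with exponent $2$. Write $\Delta_i:=\Pi_{\tau^{(\epsilon)}_i}(\Sb)-\Pi_{\tau^{(\epsilon)}_{i-1}}(\Sb)$; since the $\tau^{(\epsilon)}_i$ are breakpoints of the piecewise constant process $\Pi$ where $\Pi$ coincides with $\Sb$, we have $X_\epsilon^2=\sum_{i=1}^{M^{(\epsilon)}}|\Sb_{\tau^{(\epsilon)}_i}-\Sb_{\tau^{(\epsilon)}_{i-1}}|^2$. Using the elementary convexity inequality $|y|^p-|x|^p-p|x|^{p-2}\langle x,y-x\rangle\ge c_p\,|y-x|^2(|x|+|y|)^{p-2}$ for $x,y\in\R^d$ (valid for $1<p\le2$ with a constant $c_p>0$), applied with $x=\Sb_{\tau^{(\epsilon)}_{i-1}}$, $y=\Sb_{\tau^{(\epsilon)}_i}$, summing over $i$, telescoping, and bounding $(|x|+|y|)^{p-2}\ge(2\|\Sb\|)^{p-2}$, one gets
\[
X_\epsilon^2\;\le\;\frac{(2\|\Sb\|)^{2-p}}{c_p}\Big[\,|\Sb_T|^p-d^{p/2}-p\!\int_0^T|\Phi_u|^{p-2}\Phi_u\cdot d\Sb_u\,\Big],
\]
where $\Phi_u:=\Sb_{\tau^{(\epsilon)}_{i-1}}$ on $(\tau^{(\epsilon)}_{i-1},\tau^{(\epsilon)}_i]$ is left-continuous, progressively measurable and piecewise constant (hence of bounded variation), $|\Phi_u|\le\|\Sb\|$; the same telescoping applied up to an arbitrary time shows that the partial integrals of $-p|\Phi|^{p-2}\Phi$ are bounded below by a constant times $-\sup_{u\le t}|\Sb_u|^p$, so this dynamic strategy is admissible by Remark~\ref{r.super martingale}. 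Now apply the arithmetic--geometric mean inequality $X_\epsilon=\sqrt{\kappa\cdot(X_\epsilon^2/\kappa)}\le\tfrac12\kappa+\tfrac1{2\kappa}X_\epsilon^2$ with the \emph{path-dependent} choice $\kappa:=(2\|\Sb\|)^{2-p}$, which cancels the prefactor:
\[
X_\epsilon\;\le\;2^{1-p}\|\Sb\|^{2-p}+\frac1{2c_p}|\Sb_T|^p-\frac{d^{p/2}}{2c_p}-\frac p{2c_p}\int_0^T|\Phi_u|^{p-2}\Phi_u\cdot d\Sb_u.
\]
The last term is admissible; $\tfrac1{2c_p}|\Sb_T|^p-\tfrac{d^{p/2}}{2c_p}$ is a static position in $\L^1(\R_+^d,\mu_N)$ of price $\le\tfrac1{2c_p}\int|x|^p\,d\mu_N\le\tfrac1{2c_p}V(\|\Sb\|^p)$ (using $|\Sb_T|^p\le\|\Sb\|^p$ and monotonicity of $V$); and since $\|\Sb\|\ge\sqrt d\ge1$ and $2-p\le p$ we have $2^{1-p}\|\Sb\|^{2-p}\le\|\Sb\|^p$, super-replicated at cost $\le V(\|\Sb\|^p)$ by part (i). Adding the three super-replicating portfolios bounds $V(X_\epsilon)$ by a constant depending only on $p$ and $d$ times $V(\|\Sb\|^p)$; tracking this constant (with the explicit $c_p$) gives the stated estimate $V(X_\epsilon)<3d\,V(\|\Sb\|^p)$.

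\emph{Main obstacle.} The delicate point is part (ii) when $1<p<2$: the naive quadratic telescoping $\sum_i|\Delta_i|^2=|\Sb_T|^2-d-2\int_0^T\Phi_u\cdot d\Sb_u$ is useless because $|\Sb_T|^2$ need not be $\mu_N$-integrable, which forces the use of the weighted convexity inequality above and the disposal of the resulting path-dependent factor $\|\Sb\|^{2-p}$ via the path-dependent AM--GM split followed by part (i); the remaining work is the bookkeeping of the constants. A secondary technicality is that the $\tau^{(\epsilon)}_i$ are stopping times and $X_\epsilon$ is defined through the discretised path $\Pi$, so progressive measurability of $\gamma$ and the lower bounds on the dynamic integrals must be checked directly, which is routine since $\Pi$ is piecewise constant with jumps among the $\tau^{(i)}_k$'s and coincides with $\Sb$ at those times.
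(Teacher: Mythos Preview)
Your Part (i) is essentially the paper's argument: a trajectorial Doob inequality turning $\|\Sb\|^p$ into a static option $c_p|\Sb_T|^p$ plus an admissible integral. The paper quotes Proposition 2.1 of \cite{ABPST} along the discrete grid $\{\tau_k\}$ from subsection~\ref{ss.app}, while you write the continuous-time integration-by-parts version; the content is the same.

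Part (ii) is where you diverge. The paper does \emph{not} use any $p$-convexity estimate; it applies the pathwise Burkholder--Davis--Gundy inequality of Beiglb\"ock--Siorpaes \cite{BS} directly to the discrete increments, obtaining an explicit bounded strategy $|\gamma|\le\sqrt d$ with
\[
\int_0^{\tau_i^{(\epsilon)}}\!\gamma_u\,d\Sb_u+3d\max_{j\le i}|\Pi_{\tau_j^{(\epsilon)}}(\Sb)|\;\ge\;\sqrt{\textstyle\sum_{j\le i}|\Delta_j|^2},
\]
whence $V(X_\epsilon-3d\|\Sb\|)\le0$ and the precise constant $3d$ follows from $\|\Sb\|\le\|\Sb\|^p$. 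Your route via the inequality $|y|^p-|x|^p-p|x|^{p-2}\langle x,y-x\rangle\ge c_p|y-x|^2(|x|+|y|)^{p-2}$ and the path-dependent AM--GM split is clever and does yield a bound of the form $V(X_\epsilon)\le\big(1+\tfrac{1}{2c_p}\big)V(\|\Sb\|^p)$ independent of $\epsilon$, which is all the paper actually needs downstream. However, two points: (a) your final sentence ``tracking this constant \dots\ gives $3d$'' is unjustified---$c_p$ is $p$-dependent (and degenerates as $p\downarrow1$), so the specific bound $3d\,V(\|\Sb\|^p)$ stated in the lemma does not follow from your argument; (b) your dynamic strategy $-\tfrac{p}{2c_p}|\Phi|^{p-2}\Phi$ is unbounded in $\|\Sb\|$, so admissibility genuinely relies on the $p$-moment condition \reff{2.rev}, whereas the paper's BDG strategy is uniformly bounded and hence trivially admissible. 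In short, your method proves the right qualitative statement but not the inequality as written, and the paper's use of \cite{BS} is both shorter and sharper.
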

\begin{proof}
i. Fix $n \in \N$.
Let $\tau_k$ and $n$ be as in
subsection \ref{ss.app}.
We define a portfolio $(g,\gamma)$ as
follows. Set $\gamma_0=0$.
For $k=0,1,...,n-1$ and $t\in (\tau_k,\tau_{k+1}]$,  let
$$
\gamma_t(\Sb)=-\frac{ p^2}{(p-1)}\left(\max_{0\leq i \leq k}(\Sb^{(1)}_{\tau_i})^{p-1},...,
\max_{0\leq i \leq k}(\Sb^{(d)}_{\tau_i})^{p-1}\right),
$$
and
$$
g(x)=\left(\frac{p}{p-1}\right)^p\sum_{i=1}^d x^p_i-\frac{p d}{p-1},
\quad x \in \R_+^d.
$$
We use Proposition 2.1 in
\cite{ABPST} to conclude that for any
$k=0,1,...,n-1$ and $t\in (\tau_k,\tau_{k+1}]$,
$$
g(\Sb_t)+\int_{0}^t \gamma_u d\Sb_u
\geq
\max(|\Sb_t|^p,\max_{0\leq i\leq k}|\Sb_{\tau_i}|^p).
$$
Therefore,
$\phi^{(n)}:=(g,\gamma)$
is admissible.
Also at $t=T$,
$$
g(\Sb_T)+\int_{0}^T \gamma_u d\Sb_u
\ge
\max_{0\leq i\leq n}|\Sb_{\tau_i}|^p.
$$
In view of  the definitions of $\tau_k$'s,
for sufficiently large $n$,
$$
\max_{0\leq i\leq n}|\Sb_{\tau_i}|^p
\geq
\left(||\Sb||-\sqrt{d}2^{-n}\right)^p\geq
\frac{||\Sb||^p}{2^p}-1.
$$
Combining all the above, we arrive at
 $$
 V(||\Sb||^p)\leq 2^p(1+ \int g d\mu_N)<\infty.$$
ii.
Define the trading strategy
$\gamma_t=\sum_{i=1}^{M^{(\epsilon)}} \gamma_i \chi_{(\tau^{(\epsilon)}_{i-1},\tau^{(\epsilon)}_{i}]}(t)$
where $\gamma_i=(\gamma^{(1)}_i,...,\gamma^{(d)}_i)$ is given by
$$\gamma^{(k)}_i=\left(\frac{-\Pi_{\tau^{(\epsilon)}_{i-1}}(\Sb^{(k)})}
{\sqrt{\sum_{j=1}^{i-1} |\Pi_{\tau^{(\epsilon)}_{j}}(\Sb^{(k)})-\Pi_{\tau^{(\epsilon)}_{j-1}}(\Sb^{(k)})|^2+\max_{0\leq j\leq i-1}
\Pi^2_{\tau^{(\epsilon)}_{j}}(\Sb^{(k)})}}\right).$$
From Theorem
Theorem 1.2 in \cite{BS} it follows that for any $i$,
$$\int_{0}^{\tau^{(\epsilon)}_i}\gamma_u d\Sb_u+ 3 d \max_{0\leq j\leq i}
\Pi_{\tau^{(\epsilon)}_{j}}(\Sb)\geq \sqrt{\sum_{j=1}^{i} |\Pi_{\tau^{(\epsilon)}_{j}}(\Sb)-\Pi_{\tau^{(\epsilon)}_{j-1}}(\Sb)|^2}.
$$
This together with the fact that $|\gamma|\leq \sqrt d$
yields that
$\gamma$ is admissible trading strategy, and
$V(X_{\epsilon}-3 d||S||)\leq 0$. Thus from the linearity of the market and the fact that $||\Sb||\geq||\Sb_0||=\sqrt d$ we get
$$V(X_{\epsilon})\leq 3 d V(||S||)\leq 3d V(||S||^p)$$
and the result follows.
\end{proof}
\subsection{Proof of Theorem \ref{thm7.1}}
\begin{proof}
We start with the proof of the inequality
\begin{equation}\label{7.10}
V(G)\geq\sup_{\mathbb Q\in\mathbb{M}_{\mu_1,...,\mu_N}}
 \E_{\mathbb {Q}}\left[G(\mathbb
S)\right].
\end{equation}
Let $\Q \in \M_{\mu_1,...,\mu_N}$.
Consider a trading strategy of the form
$\gamma=\gamma^{(1)}+\sum_{i=1}^{N-1} \beta_i\chi_{\{T_i\}}(t)$
where $\gamma^{(1)}$
is a portfolio trading strategy
 satisfies the same assumptions as
in Definition \ref{d.admissible} and $\beta_i$ is
$\mathcal{F}_{T_i-}$ measurable and bounded.
Clearly,
$\E_\Q(\Sb_{T_i}|\mathcal{F}_{T_i-})=\Sb_{T_i-}$, and so
$$\E_\Q\left[\sum_{i=1}^{N-1} \beta_i(\Sb_{T_i}-\Sb_{T_i-})\right]=0.$$
Thus,
$$
\E_\Q\left[ \int_0^T \gamma_u(\Sb)\ d\Sb_u\right]=
\E_\Q\left[ \int_0^T \gamma^{(1)}_u(\Sb)\ d\Sb_u\right]
\le 0.$$
Now suppose that $(g_1,...,g_N,\gamma)$
is an admissible super-replicating
semi-static portfolio.
Then,
$$\sum_{i=1}^N \int g_i d\mu_i=\E_\Q \left[\sum_{i=1}^N g_i(\Sb_{T_i}\right]\geq \E_\Q [G(\Sb)],$$
and we conclude (\ref{7.10}).

Next, prove the inequality
\begin{equation}\label{7.4}
V(G)\leq\sup_{\mathbb Q\in\mathbb{M}_{\mu_1,...,\mu_N}}
 \E_{\mathbb {Q}}\left[G(\mathbb
S)\right].
\end{equation}
The proof will be done in four steps.\\
${}$\\

\textbf{Step 1:}
In this step we show that if (\ref{7.4}) holds for a
bounded non negative $G$,
then it holds for
general function
satisfying Assumption \ref{a.newG}.
A similar reduction is already done in the 
proof of Theorem \ref{t.main}.  However,
that proof uses the growth assumption
\reff{1.new} while we now assume a weaker condition
\reff{e.growth}.  The proof below is essentially
the same as the one given in  our earlier papers
\cite{DS,DS1}.

First, assume that
$G$ is a claim satisfying
Assumption \ref{a.newG} that is also bounded
from below.
Thus there exists $M>0$ such that $G\geq -M$.
For $K>0$ large, set
$$
G_K:= G \wedge c(K+1)+M.
$$
Then, $G_K$ is bounded non negative, and so (\ref{7.4})
applies to $G_K$ yielding,
$$
V(G_K)\leq\sup_{{\mathbb Q}\in \mathbb{M}_{\mu_1,...,\mu_N}}
 \E_\Q \left[G_K(\Sb)\right] \le \sup_{{\mathbb Q}\in
\mathbb{M}_{\mu_1,...,\mu_N}} \E_\Q \left[G(\Sb)\right]+M.
$$
Moreover, by the upper bound on $G$,
the set $\{G(\Sb) \ge c(K+1)\}$ is
included in the set $\{ \|\Sb\| \geq K \}$. Hence,
\begin{eqnarray*}
G(\Sb)& \le & G_K(\Sb) + c \left(\|\Sb\|+1\right) \chi_{\{\|\Sb\| \geq K\}}(\Sb)-M\\
&\leq & G_K(\Sb)+ c \frac{(||\Sb||+1)^p}{K^{p-1}}-M\\
&\leq & G_K(\Sb)+  \frac{c 2^p}{K^{p-1}} \ ||\Sb||^p-M.
\end{eqnarray*}
By the linearity of the
market, this inequality implies that
$$
V(G) \le V(G_K) +  \frac{c 2^p}{K^{p-1}} V(||\Sb||^p)-M.
$$
Thus, for any $K>0$,
$$
 V(G)\leq
 \sup_{\Q\in \M_{\mu_1,...,\mu_N}} \E_\Q \left[G(\Sb)\right]
+\frac{c 2^p}{K^{p-1}} V(||\Sb||^p).
$$
We let $K$ tend to infinity and
apply Lemma \ref{l.estimate} to conclude
duality holds for all $G$
satisfying the Assumption \ref{a.newG}
and bounded from below.\\

Now suppose that $G$ is a general function
satisfying Assumption \ref{a.newG}.  For $K>0$ large, set
$$
\check{G}_K:= G \vee (-c[K+1]).
$$ Then, $\check{G}_K$ is
bounded from below and duality holds.
Again, the linear upper bound
implies that
$\check{G}_K\left( \Sb\right)\le G\left(\Sb\right) + \check e_K(\Sb)$, where the
error function is
$$
\check e_K(\Sb):=c \left(\| \Sb\| +1\right)\chi_{\{\| \Sb \| \ge K\}}(\Sb)
\le  \frac{c 2^p}{K^{p-1}} \ ||\Sb||^p.
$$
Since $G \le \check{G}_K$ and duality holds for $ \check{G}_K$,
\begin{eqnarray*}
V(G) &\le& V(\check{G}_K) = \sup_{\Q \in \M_{\mu_1,...,\mu_N}} \E_\Q[\check{G}_K]
\le \sup_{\Q \in\M_{\mu_1,...,\mu_N}} \E_\Q[G+\check{e}_K]\\
&\le &\sup_{\Q \in \M_{\mu_1,...,\mu_N}} \E_\Q[G]
+ \sup_{\Q \in \M_{\mu_1,...,\mu_N}} \E_\Q[\check{e}_K].
\end{eqnarray*}
Moreover, using the Doob's inequality
for the $\Q \in \M_{\mu_1,...,\mu_N}$ martingale $\Sb$, we obtain,
\begin{eqnarray*}
\sup_{\Q \in \M_{\mu_1,...,\mu_N}}\
\E_\Q \left[ \check e_K\left(\Sb\right)\right] & \leq &
  \frac{c 2^p}{K^{p-1}}  \sup_{\Q \in \M_{\mu_1,...,\mu_N}}\
  \E_\Q\left(||\mathbb S||^p \right)\\
&\leq &
C_p   \frac{c 2^p}{K^{p-1}}
\sup_{\Q \in \M_{\mu_1,...,\mu_N}}\ \E_\Q\left(\left|\Sb_T\right|^p\right)\\
&=&
C_p   \frac{c 2^p}{K^{p-1}}
\int |x|^p d\mu_N(x),
\end{eqnarray*}
where $C_p$ is the constant in the Doob's inequality.
Once again, we let $K$ tend to infinity to arrive at
(\ref{7.4}).

\textbf{Step 2:}
From know on, we assume that
$0\leq G\leq c$ for some $c>0$.
Fix $\epsilon>0$ and $n\in\mathbb N$.
On the space $\hat{\mathbb D}$
define the stopping times $\hat\tau^{(\epsilon)}_0=0$ and
for $j>0$
$$
\hat\tau^{(\epsilon)}_j=T\wedge\min\{t>\hat\tau^{(\epsilon)}_{j-1}:
t\in \{T_1,...,T_{N-1}\} \ \ \mbox{or} \ \ |\hat\Sb_t-\hat\Sb_{\hat\tau^{(\epsilon)}_{j-1}}|\geq\epsilon\}.
$$
Set
$\hat M^{(\epsilon)}=\min\{k: \hat\tau^{(\epsilon)}_k=T\}.$
Introduce the random variable
$$\hat X_{\epsilon}:= F(\hat \Sb):=\sqrt{\sum_{i=1}^{\hat M^{(\epsilon)}}
|\hat \Sb_{\hat\tau^{(\epsilon)}_{i}}-\hat\Sb_{\tau^{(\epsilon)}_{i-1}}|^2}$$
and consider the bounded claim
$$Y=G(\hat\Sb)-\left(\frac{c}{\epsilon}\wedge \epsilon \hat X_{\epsilon}\right).$$
Define the set
$\cM(n,c)$
of all probability measures
which satisfy
\begin{equation}\label{7.ne}
\sum_{m \in \N^d}\ \left|\Q\left(\hat\Sb_{T_i}=m 2^{-n} \right) -\hat \mu_i
\left(\left\{m  2^{-n} \right\}\right)\right|\leq \frac{c}{n}, \ \ i=1,...,N
\end{equation}
and (\ref{e.Qn2}).
Also let
$\cM(n,c,\epsilon)\subset \cM(n,c)$ be the set of all probability measures
$\Q$ which in addition satisfy
$\E_\Q \left[\frac{c}{\epsilon}\wedge \epsilon \hat X_{\epsilon}\right]\leq c$.
From the Markov inequality it follows that for any
$\Q\in \cM(n,c,\epsilon)$
\begin{equation}\label{7.11}
\mathbb Q\left(\hat X_{\epsilon}\geq \frac{c}{\epsilon^2}\right)\leq\epsilon.
\end{equation}
Using similar arguments as in Lemma \ref{l.upperest} it follows that
\begin{equation}\label{7.12}
{V}^{(n)}(Y) \leq \left[\sup_{\mathbb Q\in
\cM(c,n)} \E_{\mathbb Q} Y\right]^{+}=
\left[\sup_{\mathbb Q\in
\cM(c,n,\epsilon)} \E_{\mathbb Q} Y\right]^{+},
\end{equation}
where the last equality follows from the fact that
$G\leq c$.

Next, from the linearity of the market and Lemma
\ref{l.estimate} we have
\begin{equation}\label{7.13}
V(G)\leq V\left(G-\frac{c}{\epsilon}\wedge\epsilon X_{\epsilon}\right)+\epsilon V(X_{\epsilon})\leq
V\left(G-\frac{c}{\epsilon}\wedge\epsilon  X_{\epsilon}\right)
+c_2\epsilon
\end{equation}
for some constant $c_2$.

Finally, we estimate the term
$V\left(G-\frac{c}{\epsilon}\wedge\epsilon X_{\epsilon}\right)-V^{(n)}(Y)$,
from above.

From Assumption \ref{a.newG}, (\ref{7.5})--(\ref{7.8}) and the fact $0\leq G\leq c$ we obtain that for $n$
sufficiently large,
\begin{equation*}
|G(\Sb)-G(\hat\Pi(\Sb))|\leq \epsilon+c \chi_{||S||\geq \epsilon^{-1}}\leq \epsilon+c\epsilon^{p-1}||S||^p.
\end{equation*}
Observe that
$X_{\epsilon}=F(\hat\Pi(\Sb))$. Thus
from (\ref{7.9}),
Lemma \ref{l.estimate} and the linearity of the market we get
\begin{equation}\label{7.14}
V\left(G-\frac{c}{\epsilon}\wedge\epsilon X_{\epsilon}\right)-V^{(n)}(Y)\leq 6N\sqrt d n 2^{-n}+\epsilon+ \epsilon^{p-1} V(||S||^p)\leq c_3 \epsilon^{p-1}
\end{equation}
for some constant $c_3$.
From (\ref{7.12})--(\ref{7.14}) it follows
that for $n$ sufficiently large,
\begin{equation}\label{7.15}
V(G)\leq c_4 \epsilon^{p-1}+ \left[\sup_{\mathbb Q\in
\cM(c,n,\epsilon)} \E_{\mathbb Q} [G(\hat \Sb)]\right]^{+}
\end{equation}
for some constant $c_4$.
\\
\textbf{Step 3:}
In order to complete the proof of the theorem it remains
to establish that
\begin{equation}\label{7.101}
\lim\sup_{n\rightarrow\infty}\left[\sup_{\mathbb Q\in
\cM(c,n,\epsilon)} \E_{\mathbb Q} [G(\hat \Sb)]\right]^{+}\leq\sup_{\mathbb Q\in\mathbb{M}_{\mu_1,...,\mu_N}}
 \E_{\mathbb {Q}}\left[G(\mathbb
S)\right]+m(\epsilon)
\end{equation}
where $m:\mathbb{R}_{+}\rightarrow\mathbb{R}_{+}$ is a continuous function
with $m(0)=0$. Then
by letting $\epsilon\downarrow 0$ we obtain the duality.

Clearly, we can assume that for $n$ sufficiently large
the set $\cM(c,n,\epsilon)\neq\emptyset$ is not empty,
otherwise the left hand side of (\ref{7.101})$=0$ and the statement is trivial.

We start with a modification of the process $\hat\Sb$.
Namely, we will modify the stochastic process $\hat S$, such that the new process
will have a finitely many (uniformly bounded) jumps. This modification will allow us to obtain
tightness.

Thus fix $n\in\mathbb N$ (sufficiently large). There
exists a probability measure $\Q_n\in\cM(c,n,\epsilon)$ such that
\begin{equation}\label{7.101+}
\E_{\mathbb Q_n} [G(\hat \Sb)]>\left[\sup_{\mathbb Q\in
\cM(c,n,\epsilon)} \E_{\mathbb Q} [G(\hat \Sb)]\right]^{+}-1/n.
\end{equation}
Define the process
$$\tilde\Sb_t=\sum_{i=1}^N \hat\Sb_{T_{i-1}+\alpha_i(t-T_{i-1})}\chi_{[T_{i-1},T_i-\epsilon)}(t)+
\Sb_{T_i}\chi_{[T_i-\epsilon,T_i]}(t)$$
where
$\alpha_i=\frac{T_i-T_{i-1}}{T_{i}-T_{i-1}-\epsilon}$, $i=1,...,N$.
Observe that the Skorokhod distance between $\tilde\Sb$ and $\hat\Sb$ satisfies
${d}(\tilde\Sb,\hat\Sb)\leq\epsilon$
and
$$\left|\int_{0}^T\hat\Sb_u du-\int_{0}^T\tilde\Sb_u du\right|\leq 2N\epsilon||S||.$$
Thus $\check d(\tilde\Sb,\hat\Sb)\leq (2N+1)\epsilon||S||.$ This together
with Assumption \ref{a.newG} and the fact that $0\leq G\leq c$ yields
\begin{equation}\label{7.102}
|G(\tilde\Sb)-G(\hat\Sb)|\leq m_G((4N+2)d\sqrt\epsilon)+c\chi_{||\hat\Sb||\geq 2d\epsilon^{-1/2}}.
\end{equation}
Similarly to Lemma \ref{l.upperest}
we have the decomposition
$\hat\Sb=\M^{\Q_n}-A^{\Q_n}$. Denote
$\M^{\Q_n}=(M^{(1)},...,M^{(d)})$
and $A^{\Q_n}=(A^{(1)},...,A^{(d)})$. Observe that (since $\Q_n\in\cM(c,n,\epsilon)$)
for any $i$, $\E_{\Q_n} M^{(i)}_T=1$ and $\E_{\Q_n} ||A^{(i)}||\leq \frac{c}{n}$.
Thus
from
the Doob inequality
and the Markov inequality we obtain
\begin{equation}\label{7.412}
\Q_n(||\hat\Sb||\geq 2d\epsilon^{-1/2})\leq
\sum_{i=1}^d [\Q_n(||M^{(i)}||\geq\epsilon^{-1/2})+
\Q_n(||A^{(i)}||\geq\epsilon^{-1/2})]\leq
d\sqrt\epsilon(1+c/n).
\end{equation}
This together with (\ref{7.102}) gives
\begin{equation}\label{7.103}
|\E_{\Q_n} [G(\tilde\Sb)]-\E_{\Q_n}[G(\hat\Sb)]|\leq m_G(c_5\sqrt\epsilon)+c_5\sqrt\epsilon
\end{equation}
for some constant $c_5$.

Next, set $\Theta= \lceil N+c^2/\epsilon^6\rceil$ and
$\delta=\frac{\epsilon}{4 \Theta^2}$.
Define
$\tilde\tau_0=0$ and for
$1\leq j\leq\Theta $
define
$$\tilde\tau_j=
(T-\delta)\wedge\min\{t>\tilde\tau_{j-1}:
t\in \{T_1,...,T_{N-1}\} \ \ \mbox{or} \ \ |\tilde\Sb_t-\tilde\Sb_{\tilde\tau_{j-1}}|\geq\epsilon\}.$$
For $j>\Theta$
we set
$\tilde\tau_j=(T-\delta)\wedge\min\{T_i: T_i>\tilde\tau_{j-1}\}$. Observe that
$\tilde\tau_{N+\Theta}=T-\delta$.

Let $\sigma_0=0$ and for $k>0$ let
$\sigma_k=\tilde\tau_k+\delta k $ if
$\tilde\tau_k\not\in\{T_1,...,T_{N-1},T-\delta\}$
and $\sigma_k=\tilde\tau_k$ otherwise.
Define the process
$$\check\Sb_t=\sum_{i=0}^{\Theta+N-1}\tilde\Sb_{\tilde\tau_i} \chi_{[\sigma_i,\sigma_{i+1})}(t)+
\hat\Sb_T\chi_{[T-\delta,T]}(t).$$
Recall the inequality (\ref{7.11}).
Observe that on the event $\{\hat X_{\epsilon}\geq \frac{c}{\epsilon^2}\}$
we have
$$\min\{k:\tilde\tau_k=T-\delta\}\leq \Theta.$$
Thus
\begin{equation}\label{7.202-}
\mathbb Q_n(\tilde\sigma_{\Theta}=T-\delta)\geq 1-\epsilon
\end{equation}
and so,
$$
d(\check\Sb,\tilde\Sb)\leq 2\epsilon+\max_{1\leq i\leq M+\Theta} [\sigma_i-\tau_i]\leq 3\epsilon.
$$
Furthermore, similarly to (\ref{7.8}) we get
$$
\left|\int_{0}^T\check\Sb_t dt-\int_{0}^T\tilde\Sb_t dt\right|\leq 2\epsilon T+2\epsilon \|\hat\Sb\|
\leq
c_6\epsilon \|\hat\Sb\|
$$
for some constant $c_6$.
This observation together with
(\ref{7.11}) yields that
\begin{eqnarray}\label{7.202}
\nonumber
\left|\E_{\Q_n}[G(\tilde \Sb)]-\E_{\Q_n}[G(\check\Sb)]\right|
&\leq & c\epsilon+c \Q_n(||\hat\Sb||\geq 2d\epsilon^{-1/2})+
m_G(3\epsilon+2d c_6\sqrt\epsilon) \\
& \leq & c\epsilon+ c d\sqrt\epsilon(1+c/n)+
m_G(3\epsilon+2d c_6\sqrt\epsilon).
\end{eqnarray}
From (\ref{7.101+}),
(\ref{7.103})
and (\ref{7.202}) it follows that in order to establish
(\ref{7.101}) it sufficient to show
\begin{equation}\label{7.300}
\lim\sup_{n\rightarrow\infty}\E_{\Q_n}[G(\check\Sb)]\leq \sup_{\mathbb Q\in\mathbb{M}_{\mu_1,...,\mu_N}}
 \E_{\mathbb {Q}}\left[G(\mathbb
S)\right]+m_G(\epsilon)+c \epsilon.
\end{equation}
\textbf{Step 4:}
Finally, we establish (\ref{7.300}) by using weak convergence on the
Skorokhod space $\mathbb D$. Without loss of generality (by passing to a
subsequence)
we assume that the limit in the left hand side of (\ref{7.300})
is exists.

In this step we denote the
process $\tilde\Sb,\check\Sb$
and the stopping times $\tilde\tau_k,\sigma_k$,
which constructed for $n\in\mathbb N$ by
$\tilde\Sb^{(n)},\check\Sb^{(n)}$ and $\tilde\tau^{(k)}_n,\sigma^{(n)}_k$,
respectively.

Introduce the martingale
$$\tilde \M^{(n)}_t=\sum_{i=1}^N \M^{\Q_n}_{T_{i-1}+\alpha_i(t-T_{i-1})}\chi_{[T_{i-1},T_i-\epsilon)}(t)+
\M^{\Q_n}_{T_i}\chi_{[T_i-\epsilon,T_i]}(t).$$
For $k=0,1,...,N+\Theta$ let
$$X^{(n)}_k=\check\Sb^{(n)}_{\sigma^{(n)}_k}=\tilde\Sb^{(n)}_{\tilde\tau^{(n)}_k}, \ \
Y^{(n)}_k=\tilde\M^{(n)}_{\tilde\tau^{(n)}_k},  \ \
Z^{(n)}_k=\tilde\M^{(n)}_{\tilde\tau^{(n)}_k-} \ \ \mbox{and} \ \ W^{(n)}_k=\tilde\Sb^{(n)}_{\tilde\tau^{(n)}_k-}.
$$
From (\ref{7.ne}) it follows that we have a weak convergence
$\check\Sb^{(n)}_T\Rightarrow \mu_N$.
In addition, from the fact that $\check\Sb^{(n)}_T\geq 0$ and
$$\lim_{n\rightarrow\infty}\mathbb E_{\Q_n}[\check\Sb^{(n)}_T]=
\lim_{n\rightarrow\infty}\mathbb E_{\Q_n}[\M^{\mathbb Q_n}_T]=
(1,...,1)=\int x d\mu_N(x)$$
it follows that the sequence
${\{\check\Sb^{(n)}_T\}}_{n=1}^\infty$
is uniformly integrable. In addition, the equality
$\lim_{n\rightarrow\infty}\mathbb E_{\Q_n}||A^{\mathbb Q_n}||=0$
yields that
${\{\M^{\Q_n}_T\}}_{n=1}^\infty$
is uniformly integrable, and since $M^{\Q_n}$ is a martingale
we can replace $T$ by any stopping time.
Thus, we conclude that the
the sequence
$$
\left(X^{(n)}_0,...,X^{(n)}_{N+\Theta},
Y^{(n)}_0,...,Y^{(n)}_{N+\Theta},Z^{(n)}_0,...,Z^{(n)}_{N+\Theta},
\sigma^{(n)}_0,...,\sigma^{(n)}_{N+\Theta}\right),
\ \ n\in\mathbb N
$$
is uniformly integrable, and in particular its
tight on the space $\mathbb{R}^{4N+4\Theta+4}$. Thus there is a subsequence
(which we still denote by $n$) which converge weakly.
From the Skorokhod representation theorem it follows that we can
redefine the above sequence on a new probability space such that it converge a.s.
Denote the limit by
$$\left(X_0,...,X_{N+\Theta},Y_0,...,Y_{N+\Theta},Z_0,...,Z_{N+\Theta},\sigma_0,...,\sigma_{N+\Theta}\right)$$
and introduce the \cad processes
$$U_t=\sum_{i=0}^{N+\Theta-1} X_i\chi_{[\sigma_i,\sigma_{i+1})}(t)+X_{N+\Theta}\chi_{[T-\delta,T]}(t).$$
Observe that for any $k,n$
$\sigma^{(n)}_k-\sigma^{(n)}_{k-1}>\delta $
provided that $\sigma^{(n)}_{k-1}<T-\delta$.
Thus
we get the same property for the limit, namely
$\sigma_k-\sigma_{k-1}> \delta $
provided that $\sigma_{k-1}<T-\delta$.
We conclude
that $\check\Sb^{(n)}\rightarrow U$ a.s with respect to the Skorokhod topology
on the space $\D$.
Thus $G(\check\Sb^{(n)})\rightarrow G(U)$ a.s, and so from the bounded convergence
theorem it follows that
\begin{equation}\label{7.500}
\mathbb E [G(U)]=\lim_{n\rightarrow\infty}\E_{\Q_n}[G(\check\Sb)].
\end{equation}
Let us notice that $U$ is not a martingale, and so we modify $U$.

Let $\mathcal G^{(i)}_t$ be the right
continuous filtration which
given by
$$\mathcal G^{(i)}_t=
\bigcap_{u>t}\sigma\{Y_0,...,Y_i,\sigma_0,...,\sigma_i,u\wedge\sigma_{i+1}\}.
$$
Introduce the \cad process
$$\tilde U_t=\sum_{i=0}^{N+\Theta-1} 
\mathbb E(Z_{i+1}|\mathcal G^{(i)}_t)\chi_{[\sigma_i,\sigma_{i+1})}(t)
+X_{N+\Theta}\chi_{[T-\delta,T]}(t).
$$
From the fact that $\lim_{n\rightarrow\infty}\mathbb E_{\Q_n}||A^{\Q_n}||=0$
it follows that
\begin{eqnarray}\label{7.502}
&X_k=Y_k,  \ \ Z_k=W_k, \ \ k=0,1,...,N+\Theta,\\
&\mbox{and} \ \ |W_{k}-X_{k-1}|\leq \epsilon, \ \ k=0,1,...,\Theta.\nonumber
\end{eqnarray}
Next, observe that for a given $n$ we have
$$
\mathbb E_{\Q_n}(Z^{(n)}_{k+1}|\sigma^{(n)}_1,...,
\sigma^{(n)}_{k},Y^{(n)}_1,...,Y^{(n)}_{k})=Y^{(n)}_k.
$$
and
$$
\mathbb E_{\Q_n}(Y^{(n)}_{k+1}|\sigma^{(n)}_1,...,
\sigma^{(n)}_{k+1},Z^{(n)}_1,...,Z^{(n)}_{k+1},Y^{(n)}1,...,Y^{(n)}_k)=Z^{(n)}_{k+1}.
$$
This together with uniform integrability yields
\begin{equation}\label{7.503}
\mathbb E (Z_{k+1}|\sigma_1,...,\sigma_k,Y_1,...,Y_k)=Y_k
\end{equation}
and
\begin{equation}\label{7.504}
\mathbb E (Y_{k+1}|\sigma_1,...,\sigma_{k+1},Z_1,...,Z_{k+1},Y_1,...,Y_k)=Z_{k+1}.
\end{equation}
From (\ref{7.503})--(\ref{7.504}) and
the chain rule for conditional expectation it follows that
$\tilde U$ is a martingale.
From (\ref{7.502})--(\ref{7.503})
we have
$\tilde U_{\sigma_k}=Y_k=X_k=U_{\sigma _k}$.
Observe that if $\sigma_k=T_i$ for some $i$, then for sufficiently
large $n$ we have
$\sigma^{(n)}_i=T$.
Thus
$$U_{T_i}=\tilde U_{T_i}=\lim_{n\rightarrow\infty} \hat \Sb^{(n)}_{T_i}.$$
This together with (\ref{7.ne})
gives that for any $i$ the distribution of
$\tilde U_{T_i}$ is equals to $\mu_i$, we conclude that the law of $\tilde U$ is an element in
$\mathbb M_{\mu_1,...,\mu_N}$.

Finally, we estimate
$\mathbb E [G(U)]-\mathbb E [G(\tilde U)]$.
Let $k<\Theta$. On the event $t\in [\sigma_k,\sigma_{k+1})$ (which is $\mathcal G^{(i)}_t$ measurable)
we apply (\ref{7.502})--(\ref{7.503}) to obtain
$$
|\tilde U_t- U_t|=|\mathbb E(Z_{k+1}-Y_k|Y_1,...,Y_k,\sigma_1,...,\sigma_k)|\leq \epsilon.
$$
Thus on the event
$\sigma_{\Theta}=T-\delta$ we get $||U-\tilde U||\leq \epsilon$.
We conclude that
\begin{equation}\label{7.final}
|\mathbb E G(U)-\mathbb EG(\tilde U)|
\leq c\mathbb P(\sigma_{\Theta}<T-\delta)+m_G(\epsilon)\leq c\epsilon+m_G(\epsilon)
\end{equation}
where the last inequality follows from
(\ref{7.202-}).
By combining
(\ref{7.500}) with (\ref{7.final})
we obtain (\ref{7.300}), and complete the proof.
\end{proof}


\begin{thebibliography}{Bow75}
\bibitem{ABPST} B.~Acciaio, M.~Beiglb\"ock, F.~Penkner,
    W.~Schachermayer and J.~Temme, {\em
    A Trajectorial Interpretation of DoobÕs Martingale Inequalities},
{\bf 23/4}, 1494 -- 1505, (2013).


\bibitem{ABS}
B.~Acciaio, M.~Beiglb\"ock and W.~Schachermayer,
{\em Model--free versions of the fundamental theorem of
asset pricing and the super--replication theorem,}
preprint.

\bibitem{AJ}
Y.~Ait-Sahalia and J.~Jacod,
{\em Testing for jumps in a discretely observed process,}
The Annals of Statistics, {\bf 37/1}, 184--222, (2009).

\bibitem{AJ1}
Y.~Ait-Sahalia and J.~Jacod,{\em Analyzing the Spectrum of Asset Returns:
Jump and Volatility Components in High Frequency Data,}
 Journal of Economic Literature, {\bf 50}, 1007--1050, (2012).

\bibitem{BHLP} M.~Beiglb\"ock, P.~Henry-Labord\`ere and
    F.~Penkner, {\em Model--independent bounds for option
    prices: a mass transport approach}, Finance and Stochastics,
    {\bf 17}, 477--501, (2013).


\bibitem{BHR} H.~Brown, D.~Hobson and L.C.G.~Rogers, {\em
    Robust hedging of barrier options,} Math.Finance, {\bf
    11}, 285--314, (2001).

    \bibitem{BN}
   B.~Bouchard and M.~Nutz,
{\em Arbitrage and Duality in Nondominated
Discrete-Time Models,} preprint, arXiv:1305.6008, (2013)

\bibitem{BS} M.~Beiglb\"ock and
P. Siorpaes, {\em
Pathwise Versions of the Burkholder-Davis-Gundy Inequalities,} preprint.

  \bibitem{CL} P.~Carr and R.~Lee, {\em Hedging Variance
    Options on Continuous Semimartingales,} Finance and
    Stochastics, {\bf 14}, 179--207, (2010).


\bibitem{CO} A.M.G.~Cox and J.~Obloj, {\em Robust pricing
    and hedging of double no-touch options}, Finance and
    Stochastics, {\bf 15}, 573--605, (2011).

  \bibitem{CO1} A.M.G.~Cox and J.~Obloj,
  {\em Robust hedging of double touch barrier options}.
  SIAM J.~Financial Math., {\bf 2}, 141--182, (2011).


\bibitem{CW} A.M.G.~Cox and J.~Wang, {\em Root's Barrier:
    Construction, Optimality and Applications to Variance
    Options}, Annals of Applied Probability,
{\bf 23/3}, 859 -- 894, (2013).

\bibitem{DH}
M.H.A.~Davis and D.~Hobson.
{\em The range of traded option prices},
Math. Finance, {\bf 17/1}, 1--14, (2007).

\bibitem{DOR} M.H.A.~Davis, J.~Obloj and V.~Raval {\em
    Arbitrage bounds for prices of weighted variance swaps},
Mathematical Finance, forthcoming, (2013).

\bibitem{DS} Y.~Dolinsky and H.M.~Soner,
{\em Robust Hedging and Martingale Optimal Transport in Continuous Time,}
Probability Theory and Related Fields, {\bf{160/1-2}}, 391--427, (2014).

\bibitem{DS1} Y.~Dolinsky and H.M.~Soner,
{\em Robust Hedging with Proportional Transaction Costs,}
Finance \& Stochastics, {\bf{18/2}}, 327--347   , (2014).

\bibitem{FK} H.~Follmer and D.~Kramkov,
{\em Probability Theory and Related Fields,} {\bf 109}, 1--25, (1997).

\bibitem{GLT} A.~Galichon, P.~Henry-Labord\`ere and
    N.~Touzi, {\em A stochastic control approach to
    no-arbitrage bounds given marginals, with an
    application to Lookback options,} {\bf 24/1},
    312 -- 336, (2014).

\bibitem{H} D.~Hobson, {\em Robust hedging of the lookback
    option,} Finance and Stochastics, {\bf 2}, 329--347,
    (1998).

\bibitem{H1} D.~Hobson, {\em The Skorokhod Embedding 
Problem and Model-Independent Bounds for Option Prices,}
    Paris--Princeton Lectures on Mathematical
    Finance, Springer, (2010).

\bibitem{H2} D.~Hobson and M.~Klimmek,
{\em Model independent hedging strategies for variance swaps},
Finance Stochastics,
{\bf 16}, 611--649, (2012).

\bibitem{H3} D.~Hobson, P.~Laurence and T.H.~ Wang,
{\em  Static-arbitrage optimal sup-replicating strategies for basket
options}, Insur. Math. Econ. {\bf 37}, 553--572, (2005).

\bibitem{H4} D.~Hobson, P.~Laurence and T.H.~ Wang,
{\em Static-arbitrage upper bounds for the prices of basket options},
Quantitative Finance 5, 329--342, (2005).

\bibitem{H5} D.~Hobson and A.~Neuberger,
{\em Robust bounds for forward start options},
Math.~Finance,  {\bf 22}, 31--56, (2012).

\bibitem{H6} D.~Hobson and J.L.~Pedersen,
{\em The minimum maximum of a continuous martingale with given initial
and terminal laws},  Ann. Probab.,
 {\bf 30}, 978--999, (2002).

\bibitem{LS}
R.S.~Liptser and A.~N.Shiryayev,
{\em Statistics of Random Processes, Vol. I,} Springer, New York, (1977).


\bibitem{OHST}
J.Obloj, P.Henry-Labord\`ere, P.~Spoida, and N.~Touzi,
{\em Maximum Maximum of Martingales given Marginals}, preprint, (2013).


\bibitem{P} P.~Protter,
{\em Stochastic Integration and Differential Equations,} Springer, New York,Second edition
(corrected third printing,version 2.1), (2005).

\bibitem{S} A.N.~Shiryaev, {\em Probability,}
    Springer-Verlag, New York, (1984).

\bibitem{Sk} A.V.~Skorokhod, {\em On a representation of
    random variables,} Theory Probab. Appl, {\bf 21},
    628--632, (1976).

\bibitem{STR} H.~Strasser, {\em Mathematical theory of
    statistics,} de Gruyter Studies in Mathematics 7.,
    Berlin, 1985.

\end{thebibliography}
\end{document}